\newtheorem{theorem}			     {Theorem} [section]
\newtheorem{corollary}	  [theorem]	 {Corollary}	
\theoremstyle{definition}
\newtheorem{remark} {Remark}
\newcommand{\C}{\mathbb{C}}
\newcommand{\R}{\mathbb{R}}
\newcommand{\N}{\mathbb{N}}
\numberwithin{equation}{section}
\def\ds{\displaystyle}
\def\bigO{{\cal O}}
\begin{document}
\title{Exponential moments and piecewise thinning \\for the Bessel point process}
\author{Christophe Charlier\footnote{Department of Mathematics, KTH Royal Institute of Technology, Lindstedtsv\"{a}gen 25, SE-114 28 Stockholm, Sweden. e-mail: cchar@kth.se}}

\maketitle

\begin{abstract}
We obtain exponential moment asymptotics for the Bessel point process. As a direct consequence, we improve on the asymptotics for the expectation and variance of the associated counting function, and establish several central limit theorems. We show that exponential moment asymptotics can also be interpreted as large gap asymptotics, in the case where we apply the operation of a piecewise constant thinning on several consecutive intervals. We believe our results also provide important estimates for later studies of the global rigidity of the Bessel point process.



\end{abstract}
\noindent
{\small{\sc AMS Subject Classification (2010)}: 41A60, 60B20,  35Q15.}

\noindent
{\small{\sc Keywords}: Exponential moments, Piecewise thinning, Bessel point process, Random matrix theory, Asymptotic analysis, Large gap probability, Rigidity, Riemann--Hilbert problems.}
\section{Introduction and statement of results}
A point process is a model for a collection of points randomly located on some underlying space $\Lambda$. Determinantal point processes \cite{Soshnikov,Borodin,Johansson} are characterized by the fact that their correlation functions have a determinantal structure. More precisely, if $\rho_{k}$ denotes the $k$-th correlation function of a given point process $X$, then $X$ is \textit{determinantal} if there exists a function $K:\Lambda^{2} \to \mathbb{R}$ such that for any $k > 0$, we have
\begin{align*}
\rho_{k}(x_{1},\ldots,x_{k}) = \det \big( K(x_{i},x_{j}) \big)_{i,j=1}^{k}, \qquad \mbox{for all } x_{1},\ldots,x_{k} \in \Lambda.
\end{align*}
$K$ encodes all the information about the process and is called the \textit{correlation kernel} (or simply the \textit{kernel}). Determinantal point processes appear in various domains of mathematics, such as random matrix theory, orthogonal polynomials and tiling models \cite{Deift,BG}. These processes exhibit a rich analytic and algebraic structure, and several tools have been developed to study them.   

\paragraph{Bessel point process.} In this paper, we focus on the Bessel point process. This is a determinantal point process on $\R^+:=[0,+\infty)$ whose kernel is given by
\begin{equation}\label{Bessel kernel}
K_{\alpha}^{\mathrm{Be}}(x,y) = \frac{J_{\alpha}(\sqrt{x})\sqrt{y}J_{\alpha}^{\prime}(\sqrt{y})-\sqrt{x}J_{\alpha}^{\prime}(\sqrt{x})J_{\alpha}(\sqrt{y})}{2(x-y)}, \qquad \alpha > -1,
\end{equation}
where $\alpha$ is a parameter of the process which quantifies the attraction (if $\alpha < 0$) or repulsion (if $\alpha > 0$) between the particles and the origin, and $J_\alpha$ is the Bessel function of the first kind of order $\alpha$. 

\vspace{0.2cm}The Bessel point process appears typically in repulsive particle systems, when the particles are accumulating along a natural boundary (called ``hard edge"). This is one of the canonical point processes from the theory of random matrices. It encodes the behavior of the eigenvalues of certain positive definite matrices near the origin \cite{For1,ForNag}. This point process appears also in, among other applications, non-intersecting squared Bessel paths \cite{KuijMartFinWie}.

\vspace{0.2cm}Given a Borel set $B \subseteq \R^+$, the \textit{occupancy number} $N_B$ is the random variable defined as the number of points that fall into $B$. Determinantal point processes are always locally finite, i.e. $N_B$ is finite with probability $1$ for $B$ bounded. Moreover, all particles are distinct with probability $1$. 

\paragraph{Exponential moments and piecewise constant thinning.}  Let us introduce the parameters
\begin{equation}
m \in \N_{>0}, \quad \vec{s}=(s_1,...,s_m) \in \mathbb{C}^{m} \quad \mbox{ and } \quad \vec{x} = (x_{1},...,x_{m}) \in (\mathbb{R}^{+})^{m},
\end{equation}
where $\vec{x}$ is such that $0 =: x_{0} < x_1 < x_2 < ... < x_m < +\infty$. We are interested in exponential moments of the form
\begin{equation}\label{def of F}
F_{\alpha}(\vec{x},\vec{s}) := \mathbb{E}_{\alpha}\Bigg[ \prod_{j=1}^{m} s_{j}^{N_{(x_{j-1},x_{j})}} \Bigg] = \sum_{k_{1},...,k_{m} \geq 0} \mathbb{P}_{\alpha}\Bigg(\bigcap_{j=1}^{m}N_{(x_{j-1},x_{j})}=k_{j}\Bigg)\prod_{j=1}^{m} s_{j}^{k_{j}}.
\end{equation}
If $s_{j}=0$ and $k_{j} = 0$ in \eqref{def of F} for a certain $j \in \{1,\ldots,m\}$, then $s_{j}^{k_{j}}$ should be interpreted as being equal to $1$. The function $F_{\alpha}(\vec{x},\vec{s})$ is an entire function in $s_{1},\ldots,s_{m}$, and can be written as a Fredholm determinant with $m$ discontinuities (by \cite[Theorem 2]{Soshnikov}). This function is also known as the joint probability generating function of occupancy numbers on consecutive intervals, and contains a lot of information about the Bessel process: several probabilistic quantities can be deduced from it, such as the expectation and variance of $N_{(x_{j-1},x_{j})}$ (see also \cite{ChDoe} for further applications). Here, we show that if $s_{j}\in [0,1]$ for all $j=1,\ldots,m$, $F_{\alpha}(\vec{x},\vec{s})$ admits another application related to thinning.

\medskip

 The operation of thinning is well-known in the theory of point processes, see e.g. \cite{IPSS}. It has been first studied in the context of random matrices by Bohigas and Pato \cite{BohigasPato1,BohigasPato2}, and first set out rigorously in \cite{BotDeiItsKra1} (see also \cite{BDIK2017,ChCl2,BothnerBuckingham,Charlier, BDIK2019,CharlierGharakhloo} for further results). A constant thinning consists of removing each particle independently with the same probability $s \in  [0,1]$. As $s$ increases, the level of correlation decreases, so the thinned point process interpolates between the original point process (when $s=0$) and an uncorrelated process (when $s \to 1$ at a certain speed) \cite{Kallenberg}. Here, we consider a more general situation and apply a piecewise constant thinning as follows: for each $j \in \{1,2,...,m\}$, each particle on the interval $[x_{j-1},x_{j})$ is removed with probability $s_{j}$. The probability to observe a gap on $(0,x_{m})$ in this thinned point process is given by
\begin{multline}\label{gap probability}
\sum_{k_{1},...,k_{m} \geq 0} \mathbb{P}_{\alpha}\Bigg(\bigcap_{j=1}^{m}N_{[x_{j-1},x_{j})}=k_{j}\Bigg)\prod_{j=1}^{m} \mathbb{P}\left( \begin{array}{c}
\mbox{the } k_{j} \mbox{ points on } [x_{j-1},x_{j}) \mbox{ have} \\
\mbox{been removed by the thinning}
\end{array} \right) \\
=\sum_{k_{1},...,k_{m} \geq 0} \mathbb{P}_{\alpha}\Bigg(\bigcap_{j=1}^{m}N_{(x_{j-1},x_{j})}=k_{j}\Bigg)\prod_{j=1}^{m} s_{j}^{k_{j}},
\end{multline}
which is precisely $F_{\alpha}(\vec{x},\vec{s})$.\footnote{To be precise, we have used the fact that the boundary of the successive intervals $(x_{j-1},x_{j})$ carry particles with probability $0$.}

\vspace{0.2cm} In \cite{TraWidLUE}, Tracy and Widom have studied $F_{\alpha}(x_{1},s_{1})$, i.e. the case $m = 1$. If $s_{1} \in [0,1]$, $F_{\alpha}(x_{1},s_{1})$ is the gap probability on $(0,x_{1})$ in the constant thinned Bessel point process, with the thinning parameter given by $s_{1}$. They expressed $F_{\alpha}(x_{1},s_{1})$ in terms of the solution of a Painlev\'{e} V equation. The generalization of this result for an arbitrary $m\geq 1$ was recently obtained in \cite{ChDoe}, where it is shown that $F_{\alpha}(\vec{x},\vec{s})$ can be expressed identically in terms of a solution to a system of $m$ coupled Painlev\'{e} V equations.

\paragraph{Exponential moment asymptotics.} Let us now scale the size of the intervals with a new parameter $r>0$, that is, we consider $F_{\alpha}(r\vec{x},\vec{s})$. As $r$ decreases and tends to $0$, the intervals $(rx_{j-1},rx_{j})$ shrink and asymptotics for $F_{\alpha}(r\vec{x},\vec{s})$ can be obtained rather straightforwardly from its representation as a Fredholm series. In this paper, we address the harder problem of finding the so-called \textit{exponential moment asymptotics}, which are asymptotics for $F_{\alpha}(r\vec{x},\vec{s})$ as $r \to + \infty$. If $s_{j} \in [0,1]$ for all $j=1,\ldots,m$, it follows from \eqref{gap probability} that exponential moment asymptotics can be interpreted as \textit{large gap asymptotics} in the piecewise thinned point process. 

\medskip


We also elaborate here on another main motivation to the study of exponential moment asymptotics.  For a large class of point processes taken from the theory of random matrices, the fluctuations of an \textit{individual} point  around its typical position are well-understood \cite{Johansson98, Gustavsson}. An important problem in recent years has been to estimate the \textit{maximal} fluctuation of the points. A bound for the maximal fluctuation that holds true with high probability is called a \textit{global rigidity bound}. The first results of this nature can be found in \cite{ErdosSchleinYau} (for an upper bound) and in \cite{Gustavsson} (for a lower bound). Sharp bounds have now been obtained for several point processes \cite{ArguinBeliusBourgade, ChhaibiMadauleNajnudel, HolcombPaquette, PaquetteZeitouni, LambertCircular, LambertGinibre, CFLW}. As it turns out, some of the techniques that have been used successfully to derive \textit{optimal} rigidity bounds rely heavily on exponential moment asymptotics.

\medskip

By studying asymptotics for $F_{\alpha}(r\vec{x},\vec{s})$ as $r \to + \infty$, another main objective of this paper is precisely to provide important estimates for later studies of the global rigidity of the Bessel point process. We comment more on that in Remark \ref{remark: rigidity} below. We should mention that, while the interpretation of $F_{\alpha}(r\vec{x},\vec{s})$ as a gap probability only makes sense for $s_{1},...,s_{m} \in [0,1]$, the estimates needed to study the global rigidity require to consider the more general situation $s_{1},...,s_{m} \in [0,+\infty)$.

\paragraph{Known results.} For $m=1$, the large $r$ asymptotics of $F_{\alpha}(r\vec{x},\vec{s})=F_{\alpha}(rx_{1},s_{1})$ are already known. There are two distinct regimes: 1) $s_{1}=0$ and 2) $s_{1}>0$. 

\medskip \textit{1) The case $s_{1}=0$.} Using a connection between $F_{\alpha}(rx_{1},0)$ and a solution to the Painlev\'{e} V equation, Tracy and Widom \cite{TraWidLUE} gave an heuristic derivation of the following
\begin{equation}\label{asymp m=1 with s1 = 0}
F_{\alpha}(rx_{1},0) = \tau_{\alpha} \, (rx_{1})^{-\frac{\alpha^{2}}{4}}e^{-\frac{rx_{1}}{4}+\alpha \sqrt{rx_{1}}}\Big( 1+\bigO(r^{-1/2}) \Big),  \quad r \to +\infty,
\end{equation}
for some constant $\tau_{\alpha}$. They also noted that for $\alpha = \mp \frac{1}{2}$, $K_{\alpha}^{\mathrm{Be}}$ reduces to sine-kernels appearing in orthogonal and symplectic ensembles for which large gap asymptotics are known from the work of Dyson \cite{Dyson}. Using this observation and supported with numerical calculations, they conjectured that $\tau_{\alpha} = G(1+\alpha)/(2\pi)^{\frac{\alpha}{2}}$, where $G$ is Barnes' $G$-function. A proof of the asymptotics \eqref{asymp m=1 with s1 = 0} (including the constant) was first given by Ehrhardt in \cite{Ehr} for $\alpha \in (-1,1)$ using operator theory methods and finally for all values of $\alpha>-1$ by Deift, Krasovsky and Vasilevska in \cite{DeiftKrasVasi} by performing a Deift/Zhou \cite{DeiftZhou1992} steepest descent on a Riemann-Hilbert (RH) problem.

\medskip \textit{2) The case $s_{1}=e^{u_{1}}>0$.} Bothner, Its and Prokhorov in \cite[Eq (1.35)]{BIP2019} recently proved that
\begin{equation}\label{asymp m=1 with s1 > 0}
F_{\alpha}(rx_{1},e^{u_{1}}) = G( 1+\tfrac{u_{1}}{2\pi i} )G( 1-\tfrac{u_{1}}{2\pi i} )e^{- \frac{\alpha}{2}u_{1}} (4\sqrt{rx_{1}})^{\frac{u_{1}^{2}}{4\pi^{2}}}e^{\frac{u_{1}}{\pi}\sqrt{rx_{1}}}\Big( 1+\bigO(r^{-1/2}) \Big),  \quad r \to + \infty,
\end{equation}
uniformly for $s_{1}=e^{u_{1}}$ in compact subsets of $(0,+\infty)$.



\newpage 

The main contribution of this paper is to obtain large $r$ asymptotics for $F_{\alpha}(r \vec{x},\vec{s})$ up to and including the constant term in two different situations: in Theorem \ref{thm:s1 neq 0} below, we assume $s_{1},...,s_{m} \in (0,+\infty)$ and in Theorem \ref{thm:s1=0}, we assume $s_{1}=0$ and $s_{2},...,s_{m} \in (0,+\infty)$. The cases $m=2$ of Theorem \ref{thm:s1 neq 0} and $m=2$ of Theorem \ref{thm:s1=0} are already new results. These theorems contain a lot of information about the Bessel process, and we illustrate this by providing some direct consequences: in Corollaries \ref{coro: s1>0 consequence of thm} and \ref{coro: exp and var s1 = 0}, we obtain new asymptotics for the expectation, variance and covariance of the counting function, and in Corollaries \ref{coro: CLTs} and \ref{coro: CLTs 3}, we establish several central limit theorems.



\vspace{0.2cm}We mention that large gap asymptotics in the two cases of Theorems \ref{thm:s1 neq 0} and \ref{thm:s1=0} cannot be treated both at once. In fact, a critical transition occurs as $s_{1} \to 0$ and simultaneously $r \to + \infty$. This transition is expected to be described in terms of elliptic $\theta$-functions and is not addressed in this paper. For a heuristic discussion, see Section \ref{subsection: heuristic discussion}. 

\paragraph{Main results for $s_{1} > 0$.} Asymptotics for $F_{\alpha}(r\vec{x},\vec{s})$ as $r \to + \infty$ with $\vec{s}=(s_1,...,s_m) \in (0,+\infty)^{m}$ are more elegantly described after making the following change of variables
\begin{equation}\label{def tj thm s1 neq 0}
u_{j} = \left\{ \begin{array}{l l}
\ds \log \frac{s_{j}}{s_{j+1}} & \mbox{for } j = 1,...,m-1, \\[0.3cm]
\ds \log s_{m} & \mbox{for } j=m.
\end{array} \right.
\end{equation}
Note that $u_{j} \in (-\infty,+\infty)$. Then, with the notation $\vec{u}=(u_{1},\ldots,u_{m})$, we define
\begin{equation}\label{def of E alpha}
E_{\alpha}(\vec{x},\vec{u}) := \mathbb{E}_{\alpha}\Bigg[ \prod_{j=1}^{m} e^{u_{j}N_{(0,x_{j})}} \Bigg] =
F_{\alpha}(\vec{x},\vec{s}).
\end{equation}
\begin{theorem}\label{thm:s1 neq 0}
Let $\alpha > -1$, $m \in \mathbb{N}_{>0}$,
\begin{align*}
\vec{u}=(u_1,...,u_m) \in \mathbb{R}^{m} \qquad \mbox{ and } \qquad \vec{x} = (x_{1},...,x_{m}) \in (\mathbb{R}^{+})^{m}
\end{align*}
be such that $0 < x_1 < x_2 < ... < x_m < +\infty$.  As $r \to + \infty$, we have
\begin{multline}\label{F asymptotics thm s1 neq 0}
E_{\alpha}(r\vec{x},\vec{u}) = \exp \Bigg( \sum_{j=1}^{m}  u_{j} \mu_{\alpha}(rx_{j}) + \sum_{j=1}^{m} \frac{u_{j}^{2}}{2} \sigma^{2}(rx_{j}) + \sum_{1 \leq j < k \leq m} u_{j} u_{k} \Sigma(x_{k},x_{j})  \\ + \sum_{j=1}^{m} \log G(1+\tfrac{u_{j}}{2\pi i})G(1-\tfrac{u_{j}}{2\pi i}) + \bigO \bigg( \frac{\log r}{\sqrt{r}} \bigg) \Bigg),
\end{multline}
where $G$ is Barnes' $G$-function, and $\mu_{\alpha}$, $\sigma^{2}$ and $\Sigma$ are given by
\begin{equation}\label{mu sigma and cov}
\mu_{\alpha}(x) = \frac{\sqrt{x}}{\pi} - \frac{\alpha}{2}, \qquad \sigma^{2}(x) = \frac{\log(4 \sqrt{x})}{2 \pi^{2}}, \qquad \Sigma(x_{k},x_{j}) = \frac{1}{2\pi^{2}} \log \frac{\sqrt{x_{k}}+\sqrt{x_{j}}}{\sqrt{x_{k}}-\sqrt{x_{j}}}.
\end{equation}
Alternatively, the asymptotics \eqref{F asymptotics thm s1 neq 0} can be rewritten as
\begin{equation}\label{thm product s1 neq 0}
E_{\alpha}(r\vec{x},\vec{u}) = \exp\Bigg(\sum_{1 \leq j <k \leq m}u_{j}u_{k}\Sigma(x_{k},x_{j})\Bigg)\prod_{j=1}^{m}E_{\alpha}(rx_{j},u_{j}) \bigg( 1+\bigO \bigg( \frac{\log r}{\sqrt{r}} \bigg) \bigg).
\end{equation}
Furthermore, the error terms in \eqref{F asymptotics thm s1 neq 0} and \eqref{thm product s1 neq 0} are uniform in $u_{1},...,u_{m}$ in compact subsets of $(-\infty,+\infty)$ and uniform in $x_{1},...,x_{m}$ in compact subsets of $(0,+\infty)$, as long as there exists $\delta > 0$ such that
\begin{equation}\label{condition on xj in terms of delta}
\min_{1 \leq j < k \leq m} x_{k}-x_{j} \geq \delta.
\end{equation}
\end{theorem}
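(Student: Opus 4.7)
My plan is to encode $E_{\alpha}(r\vec{x},\vec{u})=F_{\alpha}(r\vec{x},\vec{s})$ as the Fredholm determinant of an integrable operator and study it via a Riemann--Hilbert (RH) problem, which I would then analyse by the Deift/Zhou steepest descent method as $r\to+\infty$. Since the Bessel kernel is integrable, $F_{\alpha}(r\vec{x},\vec{s})$ is the determinant of $I-(1-\rho)K_{\alpha}^{\mathrm{Be}}$ on $(0,rx_m)$, where $\rho$ is the piecewise constant symbol taking the value $s_{j}$ on each $(rx_{j-1},rx_{j})$. By the Its--Izergin--Korepin--Slavnov correspondence this is equivalent to a $2\times 2$ RH problem for a matrix $Y$ with a piecewise constant jump on $(0,rx_m)$, discontinuities at the points $rx_{1},\dots,rx_{m}$, and a Bessel-type singularity at the origin parameterised by $\alpha$.

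After rescaling $z\mapsto z/r$ and conjugating by a $g$-function built from the explicit equilibrium measure of the Bessel ensemble, I would open lenses around $(0,x_m)$ to reduce the jumps to ones exponentially close to the identity away from small disks around $0$ and each $x_{j}$. I would then construct a global parametrix $P^{(\infty)}$ carrying the piecewise constant jumps on $(0,x_m)$; a Bessel local parametrix at $z=0$ handling the hard edge and the parameter $\alpha$; and, at each $x_{j}$, a confluent hypergeometric (parabolic cylinder) local parametrix, which is the canonical model that can absorb a jump whose symbol changes discontinuously by the factor $e^{u_j}$. The matching $P^{(x_{j})}(z)P^{(\infty)}(z)^{-1}=I+\bigO(r^{-1/2})$ yields a small norm RH problem for the ratio $R=S(P^{(\infty)})^{-1}$ with error $\bigO((\log r)/\sqrt{r})$, uniformly in $(\vec{x},\vec{u})$ subject to \eqref{condition on xj in terms of delta}.

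To pin down the constant, I would use a differential identity in $\vec{u}$. Differentiating $\log E_{\alpha}(r\vec{x},\vec{u})$ with respect to $u_{j}$ produces a trace of the resolvent against an indicator on the intervals where $s_{k}$ depends on $u_{j}$; this trace can be expressed through entries of $Y$ evaluated near $rx_{j}$, and through the chain of transformations reduces asymptotically to explicit entries of $P^{(\infty)}$ plus the subleading correction from $R$. Integrating this identity along a path from $\vec{u}=\vec{0}$ (where $E_{\alpha}\equiv 1$) to the target $\vec{u}$ yields \eqref{F asymptotics thm s1 neq 0}, with the $m=1$ result \eqref{asymp m=1 with s1 > 0} of \cite{BIP2019} serving as a boundary check that fixes the Barnes-$G$ factors $G(1+\tfrac{u_j}{2\pi i})G(1-\tfrac{u_j}{2\pi i})$. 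The equivalent form \eqref{thm product s1 neq 0} then follows by factoring out $\prod_j E_{\alpha}(rx_{j},u_{j})$ and comparing exponents.

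The main technical obstacle is the bookkeeping that produces the cross terms $\Sigma(x_{k},x_{j})$. These arise from the evaluation near $x_{j}$ of the contributions to $P^{(\infty)}$ coming from the discontinuities at the other points $x_{k}$, together with the way these propagate through the matching with the parabolic cylinder parametrices. Writing $P^{(\infty)}$ explicitly, one finds that the logarithm of the relevant entries contains precisely terms of the form $\frac{1}{2\pi^{2}}u_{j}u_{k}\log\frac{\sqrt{x_k}+\sqrt{x_j}}{\sqrt{x_k}-\sqrt{x_j}}$, and these must be extracted cleanly from the combined connection matrices. The separation condition \eqref{condition on xj in terms of delta} is essential here to keep the local disks around the $x_{j}$ mutually disjoint, which guarantees uniformity of the small norm estimate for $R$ and of the identification of the cross terms.
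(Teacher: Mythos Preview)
Your proposal is essentially the paper's own approach: an IIKS-type RH problem, Deift--Zhou steepest descent with $g(z)=\sqrt{z}$, a global parametrix built from a Szeg\H{o} function, a Bessel local parametrix at the origin, confluent hypergeometric local parametrices at each discontinuity $-x_j$, a small-norm problem for $R$, and then integration of a differential identity in $s_k$ (equivalently $\beta_k=\tfrac{u_k}{-2\pi i}$) from the trivial point $\vec{u}=\vec{0}$ where $E_\alpha\equiv 1$. Two small corrections: the local model at the $x_j$ is the confluent hypergeometric (Whittaker) parametrix, not parabolic cylinder; and the Barnes-$G$ factors come out directly from the integration via the identity $\int_0^\beta x\,\partial_x\log\tfrac{\Gamma(1+x)}{\Gamma(1-x)}\,dx=\beta^2+\log G(1+\beta)G(1-\beta)$, so the $m=1$ result of \cite{BIP2019} is not needed as a boundary input for this theorem.
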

\begin{remark}
The Barnes $G$-function is a well-known special function defined by
\begin{equation*}
G(1+z) = (2\pi)^{z/2} \exp \left( -  \frac{z + z^{2}(1+\gamma_{\mathrm{E}})}{2} \right) \prod_{k=1}^{\infty} \left[ \left( 1 + \frac{z}{k} \right)^{k} \exp \left( \frac{z^{2}}{2k}-z \right) \right],
\end{equation*}
where $\gamma_{\mathrm{E}} \approx 0.5772$ is the Euler's gamma constant. It is related to the $\Gamma$ function by the functional relation $G(z+1) = \Gamma(z)G(z)$ (see \cite{NIST} for further properties).
\end{remark}
\begin{remark}\label{remark: heuristic}
As can be seen by comparing \eqref{F asymptotics thm s1 neq 0} with \eqref{asymp m=1 with s1 = 0}, there is a notable difference between the cases $s_{1}=0$ and $s_{1}>0$ in the large $r$ asymptotics of $\log F_{\alpha}(rx_{1},s_{1})$: the leading term is of order $\bigO(r)$ if $s_{1}=0$, while it is of order $\bigO(\sqrt{r})$ if $s_{1}$ is bounded away from $0$. We also recall that $F_{\alpha}(rx_{1},s_{1})$ can be interpreted as a gap probability for $s_{1} \in [0,1]$. In this case, the difference between the leading terms means that it is significantly more likely to observe a large gap in the thinned Bessel point process (roughly $\sim e^{-c \, \sqrt{r}}$, $c>0$) rather than in the unthinned Bessel point process ($\sim e^{-c \, r}$). From \eqref{def of F}, for general values of $s_{1} \in [0,+\infty)$ we have
\begin{equation}\label{sum gap thinned}
F_{\alpha}(rx_{1},s_{1}) = \sum_{k \geq 0} \mathbb{P}_{\alpha}\big(N_{(0,rx_{1})}=k\big) s_{1}^{k}.
\end{equation}
Using Theorem \ref{thm:s1 neq 0}, it is possible to understand which terms in the sum contribute the most when $s_{1}$ is bounded away from $0$. The first term is simply $\mathbb{P}_{\alpha}\big(N_{(0,rx_{1})}=0\big)$, which we know by \eqref{asymp m=1 with s1 = 0} to be roughly $\sim e^{-c \, r}$ for large $r$. In fact, for typical realizations of the Bessel point process, the number of points lying on $(0,rx_{1})$ is of order $\bigO(\sqrt{r})$ as $r \to + \infty$, see \eqref{Soshnikov expo and variance}. Thus, when $k$ is of order $\bigO(\sqrt{r})$, $\mathbb{P}_{\alpha}\big(N_{(0,rx_{1})}=k\big)$ is of order $\bigO(1)$ and $s_{1}^{k}$ is $\bigO(e^{-c \, \sqrt{r}})$. Hence by Theorem \ref{thm:s1 neq 0}, for large $r$ and for $s_{1}$ bounded away from $0$, the dominant terms in \eqref{sum gap thinned} are those for which $k$ is of order $\bigO(\sqrt{r})$.

\end{remark}
\begin{remark}\label{remark: rigidity}
By combining Theorem \ref{thm:s1 neq 0} with \cite[Theorem 1.2]{ChCl4}, the following global rigidity upper bound for the Bessel point process has been established in \cite{ChCl4}: for any $\epsilon>0$, we have
\begin{align}\label{upper bound rigidity}
\lim_{k_0\to\infty}\mathbb P\left( \sup_{k \geq k_{0}} \frac{|\frac{1}{\pi}b_k^{1/2}-k|}{\log k}\leq \frac{1}{\pi} + \epsilon\right)=1,
\end{align}
where $b_{k}$ is the $k$-th smallest point in the Bessel point process. In fact, the proof of \eqref{upper bound rigidity} uses only the first exponential moment asymptotics \eqref{asymp m=1 with s1 > 0}, or equivalently, \eqref{F asymptotics thm s1 neq 0} with $m=1$. This upper bound is expected to be sharp, although this has not been proved (see also \cite[Remark 1.3]{ChCl4} for a heuristic discussion, and \cite[Figure 2]{ChCl4} for numerical simulations). Sharp lower bounds are in general very difficult to obtain, and require, among other things, the asymptotics of the second exponential moment (see e.g. \cite{CFLW}). These asymptotics are also provided by Theorem \ref{thm:s1 neq 0} with $m=2$.
\end{remark}
\begin{remark}
From \eqref{thm product s1 neq 0} we see that, asymptotically, the $m$-th exponential moment can be written as the product of two terms: the first term is a constant pre-factor which depends only on the constants $\Sigma(x_{k},x_{j})$, and the second term is the product of $m$ first exponential moment. Note that this phenomenon holds also for the Airy point process, see \cite{ChCl3}.
\end{remark}

In \cite[Theorem 2]{SoshnikovSineAiryBessel}, Soshnikov obtained the following asymptotics for the expectation and variance of the counting function as $r \to + \infty$
\begin{align}\label{Soshnikov expo and variance}
\mathbb{E}_{\alpha}[N_{(0,rx)}] = \frac{\sqrt{rx}}{\pi} + \bigO (1), \quad \mbox{ and } \quad \mathrm{Var}_{\alpha}[N_{(0,rx)}] = \frac{\log rx}{4 \pi^{2}} + \bigO (1).
\end{align}
Theorem \ref{thm:s1 neq 0} allows to improve on these asymptotics.
\begin{corollary}\label{coro: s1>0 consequence of thm}
Let $x>0$ and $x_{2}>x_{1}>0$ be fixed. As $r \to + \infty$, we have
\begin{align}
& \mathbb{E}_{\alpha}[N_{(0,rx)}] = \mu_{\alpha}(rx) + \bigO \bigg( \frac{\log r}{\sqrt{r}} \bigg) = \frac{\sqrt{rx}}{\pi} - \frac{\alpha}{2} + \bigO \bigg( \frac{\log r}{\sqrt{r}} \bigg), \label{expected and variance asymp 1} \\
& \mathrm{Var}_{\alpha}[N_{(0,rx)}] = \sigma^{2}(rx) + \frac{1 + \gamma_{\mathrm{E}}}{2 \pi^{2}} + \bigO \bigg( \frac{\log r}{\sqrt{r}} \bigg) \nonumber \\
& \hspace{2cm} = \frac{\log rx}{4 \pi^{2}} + \frac{1 + \log 4 + \gamma_{\mathrm{E}}}{2 \pi^{2}} + \bigO \bigg( \frac{\log r}{\sqrt{r}} \bigg), \label{expected and variance asymp 2} \\
& \mathrm{Cov}_{\alpha}[ N_{(0,rx_{1})},N_{(0,rx_{2})} ] = \Sigma(x_{2},x_{1}) + \bigO \bigg( \frac{\log r}{\sqrt{r}} \bigg) = \frac{1}{2\pi^{2}} \log \frac{\sqrt{x_{2}}+\sqrt{x_{1}}}{\sqrt{x_{2}}-\sqrt{x_{1}}} + \bigO \bigg( \frac{\log r}{\sqrt{r}} \bigg), \label{asymp for the cov thm} 
\end{align}
where $\gamma_{\mathrm{E}} \approx 0.5772$ is Euler's gamma constant.
\end{corollary}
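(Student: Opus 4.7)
The plan is to recover the three quantities in the corollary as low-order derivatives, at $\vec{u}=0$, of the cumulant generating function provided by Theorem \ref{thm:s1 neq 0}. Directly from \eqref{def of E alpha} one has
\begin{align*}
\mathbb{E}_\alpha[N_{(0,rx)}] &= \partial_u \log E_\alpha(rx, u)\big|_{u=0}, \\
\mathrm{Var}_\alpha[N_{(0,rx)}] &= \partial_u^2 \log E_\alpha(rx, u)\big|_{u=0}, \\
\mathrm{Cov}_\alpha\bigl[N_{(0,rx_1)},N_{(0,rx_2)}\bigr] &= \partial_{u_1}\partial_{u_2} \log E_\alpha(r\vec{x}, \vec{u})\big|_{\vec{u}=0}.
\end{align*}
For the expectation and the variance I would apply Theorem \ref{thm:s1 neq 0} with $m=1$, where in fact the sharper $\bigO(1/\sqrt r)$ remainder from \eqref{asymp m=1 with s1 > 0} is already available; for the covariance I would take $m=2$, which forces the $\bigO(\log r/\sqrt r)$ remainder in \eqref{F asymptotics thm s1 neq 0}.

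Substituting the asymptotics into the derivative identities, the contributions of $\mu_\alpha$, $\sigma^2$ and $\Sigma$ immediately give the stated leading terms. The only subtle ingredient is the Taylor expansion of the Barnes-$G$ factor at the origin. From the product representation \eqref{def of Barnes G} one computes
\begin{equation*}
\log G(1+z) = \tfrac{\log(2\pi) - 1}{2}\, z - \tfrac{1+\gamma_{\mathrm{E}}}{2}\, z^2 + \bigO(z^3) \quad (z \to 0),
\end{equation*}
so that $\log [G(1+z)G(1-z)] = -(1+\gamma_{\mathrm{E}}) z^2 + \bigO(z^4)$, and hence
\begin{equation*}
\log G\!\left(1+\tfrac{u}{2\pi i}\right) G\!\left(1-\tfrac{u}{2\pi i}\right) = \tfrac{(1+\gamma_{\mathrm{E}})\, u^2}{4\pi^2} + \bigO(u^4).
\end{equation*}
The first $u$-derivative at $0$ vanishes, confirming that the $G$-factor produces no shift in the expectation, and the second $u$-derivative equals $(1+\gamma_{\mathrm{E}})/(2\pi^2)$, which is precisely the extra constant appearing in \eqref{expected and variance asymp 2}. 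For the covariance, the $G$-factors depend only on individual $u_j$, so the mixed derivative $\partial_{u_1}\partial_{u_2}$ annihilates them and leaves exactly the cross term $\Sigma(x_2, x_1)$.

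The main obstacle is legitimising the exchange of differentiation with the asymptotic expansion: Theorem \ref{thm:s1 neq 0} asserts the remainder $\bigO(\log r/\sqrt r)$ uniformly for \emph{real} $u_j$ in compacts, while reading off derivatives via Cauchy's formula requires complex uniformity on a neighbourhood of $\vec{u}=0$. Since $E_\alpha(r\vec x,\vec u)$ is entire in each $u_j$ (being entire in $\vec s$ with $s_j = e^{u_j}$) and, by the expansion itself, stays bounded away from $0$ near $\vec u = 0$, the function $\log E_\alpha$ is holomorphic on a complex polydisk about the origin for all large $r$. Provided the expansion of Theorem \ref{thm:s1 neq 0} continues to hold uniformly on such a polydisk of fixed radius --- which comes essentially for free from the analytic nature of the Deift--Zhou steepest-descent analysis underlying Theorem \ref{thm:s1 neq 0}, since complex $u_j$ only deforms the jump multipliers of the Riemann--Hilbert problem analytically --- Cauchy's integral formula on a disk of fixed positive radius around $0$ transfers the sup-norm remainder bound to all first and mixed second derivatives without loss of order. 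This yields the error $\bigO(\log r/\sqrt r)$ in \eqref{asymp for the cov thm} and, by invoking the sharper $\bigO(1/\sqrt r)$ estimate of \cite{BIP2019} instead of Theorem \ref{thm:s1 neq 0} in the $m=1$ cases, the errors $\bigO(1/\sqrt r)$ in \eqref{expected and variance asymp 1}--\eqref{expected and variance asymp 2}.
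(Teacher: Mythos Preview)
Your proof is correct and takes essentially the same approach as the paper: both read off the desired quantities by expanding the generating function around $\vec u=0$, using the $m=1$ result \eqref{asymp m=1 with s1 > 0} for the expectation and variance (hence the sharper $\bigO(r^{-1/2})$) and Theorem~\ref{thm:s1 neq 0} with $m=2$ for the covariance. The only cosmetic differences are that the paper expands $E_\alpha$ itself (and, for the covariance, the ratio \eqref{cov in the expansion} along the diagonal $u_1=u_2=u$) rather than differentiating $\log E_\alpha$, and that you are more explicit than the paper about the analyticity/Cauchy-estimate argument needed to push the error term through the $u$-differentiation.
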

\begin{remark}
Basic properties of the expectation and the variance imply that
\begin{align}
& \mathbb{E}_{\alpha}[N_{(rx_{1},rx_{2})}] = \mu_{\alpha}(rx_{2})-\mu_{\alpha}(rx_{1}) + \bigO \bigg( \frac{\log r}{\sqrt{r}} \bigg), \label{expected and variance asymp 3} \\
& \mathrm{Var}_{\alpha}[N_{(rx_{1},rx_{2})}] = \sigma^{2}(rx_{1}) + \sigma^{2}(rx_{2}) + \frac{1+\gamma_{\mathrm{E}}}{\pi^{2}} - 2 \, \Sigma(x_{2},x_{1}) + \bigO \bigg( \frac{\log r}{\sqrt{r}} \bigg) \nonumber \\
& \hspace{2.42cm} = \frac{\log r}{2\pi^{2}} + \frac{\log (16 \sqrt{x_{1}x_{2}})}{2\pi^{2}} + \frac{1+\gamma_{\mathrm{E}}}{\pi^{2}} - \frac{1}{\pi^{2}} \log \frac{\sqrt{x_{2}}+\sqrt{x_{1}}}{\sqrt{x_{2}}-\sqrt{x_{1}}} + \bigO \bigg( \frac{\log r}{\sqrt{r}} \bigg). \label{expected and variance asymp 4}
\end{align}
In \cite[Theorem 2]{SoshnikovSineAiryBessel}, Soshnikov obtained \footnote{The quantity $\nu_{k}(r)$ in \cite[Theorem 2]{SoshnikovSineAiryBessel} corresponds to $N_{(0,kr)}-N_{(0,(k-1)r)} = N_{((k-1)r,kr)}$ in our notation.}
\begin{align}\label{error Soshnikov}
\mathrm{Var}_{\alpha}[N_{(r,2r)}] = \frac{\log r}{4\pi^{2}}  + \bigO(1),
\end{align}
which does not agree with the leading term of \eqref{expected and variance asymp 4} (a factor $2$ is missing). There is a similar error in \cite[Theorem 1]{SoshnikovSineAiryBessel} for $k \geq 2$, see \cite[Remark 1]{ChCl3} (and furthermore the constant $\frac{11}{12\pi^{2}}$ in \cite[Theorem 1]{SoshnikovSineAiryBessel} should instead be $\frac{3}{4\pi^{2}}$).
\end{remark}
\begin{proof}[Proof of Corollary \ref{coro: s1>0 consequence of thm}]
Expanding \eqref{def of E alpha} for $m=1$ as $u \to 0$, we obtain
\begin{equation}\label{expansion F m=1}
E_{\alpha}(x,u) = 1 + u \, \mathbb{E}_{\alpha}[N_{(0,x)}] + \frac{u^{2}}{2} \mathbb{E}_{\alpha}[N_{(0,x)}^{2}] + \bigO(u^{3}).
\end{equation}
On the other hand, we can also expand the large $r$ asymptotics of $E_{\alpha}(rx,u)$ (given by the right-hand side of \eqref{F asymptotics thm s1 neq 0} with $m=1$) as $u \to 0$, since these asymptotics are valid uniformly for $u$ in compact subsets of $ \mathbb{R}$ (in particular in a neighborhood of $0$). Comparing this expansion with \eqref{expansion F m=1}, we obtain \eqref{expected and variance asymp 1} and \eqref{expected and variance asymp 2}, where $\gamma_{\mathrm{E}}$ comes from the expansion of the Barnes' $G$-functions (see \cite[formula 5.17.3]{NIST}). The covariance between the two occupancy numbers $N_{(0,x_{1})}$ and $N_{(0,x_{2})}$ can be obtained from \eqref{def of E alpha} with $m=2$ as follows:
\begin{equation}\label{cov in the expansion}
\begin{array}{r c l}
\ds \frac{E_{\alpha}\big((x_{1},x_{2}),(u,u)\big)}{E_{\alpha}(x_{1},u)E_{\alpha}(x_{2},u)} & = & \ds \frac{\mathbb{E}_{\alpha}\big[ e^{u N_{(0,x_{1})}}e^{u N_{(0,x_{2})}} \big]}{\mathbb{E}_{\alpha}\big[ e^{u N_{(0,x_{1})}} \big]\mathbb{E}_{\alpha}\big[ e^{u N_{(0,x_{2})}} \big]}  \\[0.4cm]
& = &  \ds 1 + \mbox{Cov}_{\alpha}(N_{(0,x_{1})},N_{(0,x_{2})})u^{2} + \bigO(u^{3}), \quad \mbox{as } u \to 0.
\end{array}
\end{equation}
After the rescaling $(x_{1},x_{2}) \mapsto r(x_{1},x_{2})$, we can obtain large $r$ asymptotics for the left-hand side of the above expression using Theorem \ref{thm:s1 neq 0}. 
By an expansion as $u \to 0$ of these asymptotics, and a comparison with \eqref{cov in the expansion}, we obtain \eqref{asymp for the cov thm}.
\end{proof}
We can also deduce from Theorem \ref{thm:s1 neq 0} the following central limit theorems. Corollary \ref{coro: CLTs} (a) is the most natural central limit theorem that follows from Theorem \ref{thm:s1 neq 0}, while Corollary \ref{coro: CLTs} (b) allows for a comparison with \cite[Theorem 2]{SoshnikovSineAiryBessel}.
\begin{corollary}\label{coro: CLTs}
(a) Consider the random variables $N_{j}^{(r)}$ defined by
\begin{align*}
N_{j}^{(r)} = \frac{N_{(0,rx_{j})}-\mu_{\alpha}(rx_{j})}{\sqrt{\sigma^{2}(rx_{j})}}, \qquad j=1,\ldots,m.
\end{align*}
As $r \to +\infty$, we have
\begin{align}\label{convergence in distribution 1}
\big( N_{1}^{(r)},N_{2}^{(r)},\ldots,N_{m}^{(r)}\big) \quad \overset{d}{\longrightarrow} \quad \mathcal{N}(\vec{0},I_{m}),
\end{align}
where ``$\overset{d}{\longrightarrow}$" means convergence in distribution, $I_{m}$ is the $m \times m$ identity matrix, and $\mathcal{N}(\vec{0},I_{m})$ is a multivariate normal random variable of mean $\vec{0}=(0,\ldots,0)$ and covariance matrix $I_{m}$.

\medskip

\noindent (b) Consider the random variables $\widehat{N}_{j}^{(r)}$ defined by
\begin{align*}
\widehat{N}_{1}^{(r)} = \frac{N_{(0,rx_{1})}-\mu_{\alpha}(rx_{1})}{\sqrt{\sigma^{2}(rx_{1})}}, \qquad 
\widehat{N}_{j}^{(r)} = \frac{N_{(rx_{j-1},rx_{j})}-\big(\mu_{\alpha}(rx_{j}) - \mu_{\alpha}(rx_{j-1})\big)}{\sqrt{\sigma^{2}(rx_{j})+\sigma^{2}(rx_{j-1})}}, \quad j=2,\ldots,m.
\end{align*}
As $r \to +\infty$, we have
\begin{align}
\big( \widehat{N}_{1}^{(r)},\widehat{N}_{2}^{(r)},\ldots,\widehat{N}_{m}^{(r)}\big) \quad \overset{d}{\longrightarrow} \quad \mathcal{N}(\vec{0},\Sigma_{m}),
\end{align}
where $\Sigma_{m}$ is given by
\begin{align*}
& (\Sigma_{m})_{1,j} = (\Sigma_{m})_{j,1} = \delta_{1,j}-\tfrac{1}{\sqrt{2}}\delta_{2,j}, & & 1 \leq j  \leq m. \\
& (\Sigma_{m})_{i,j} = \delta_{i,j}-\tfrac{1}{2}\delta_{i,j+1}- \tfrac{1}{2}\delta_{i,j-1}, & & 2 \leq i,j  \leq m.
\end{align*}
\end{corollary}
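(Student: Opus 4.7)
The plan is to derive both central limit theorems from convergence of moment generating functions, using the uniform asymptotics of Theorem \ref{thm:s1 neq 0}. Since \eqref{F asymptotics thm s1 neq 0} holds uniformly for $\vec{u}$ in compact subsets of $\mathbb{R}^m$, one may insert $r$-dependent parameters $\vec{u}=\vec{u}(r)\to\vec{0}$ and still conclude from it. Pointwise convergence of the joint MGF of the normalized vectors to a Gaussian MGF $\exp(\tfrac{1}{2}\vec{t}^{\,\top}C\vec{t})$ for every $\vec{t}\in\mathbb{R}^m$ then yields convergence in distribution by L\'evy's/Curtiss's theorem.

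For part (a), I would fix $\vec{t}\in\mathbb{R}^m$ and set $u_j:=t_j/\sqrt{\sigma^{2}(rx_j)}$, so that
\begin{align*}
\mathbb{E}_{\alpha}\Bigg[\exp\bigg(\sum_{j=1}^m t_j N_j^{(r)}\bigg)\Bigg] = \exp\Bigg(-\sum_{j=1}^m u_j \mu_{\alpha}(rx_j)\Bigg)\, E_{\alpha}(r\vec{x},\vec{u}).
\end{align*}
Because $u_j=O(1/\sqrt{\log r})\to 0$, Theorem \ref{thm:s1 neq 0} applies with uniformly negligible error. In the resulting expansion of $\log E_{\alpha}$, the $u_j\mu_\alpha(rx_j)$ terms cancel exactly with the prefactor, the cross terms $u_j u_k \Sigma(x_k,x_j)$ are $O(1/\log r)$ and vanish, and the Barnes $G$ contributions $\log G(1+\tfrac{u_j}{2\pi i})G(1-\tfrac{u_j}{2\pi i})=O(u_j^2)=O(1/\log r)$ also vanish. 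Only the diagonal quadratic terms $\sum_j \tfrac{u_j^2}{2}\sigma^{2}(rx_j)=\tfrac{1}{2}\sum_j t_j^2$ survive, which is precisely the log-MGF of $\mathcal{N}(\vec{0},I_m)$.

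For part (b), the strategy is the same once $\sum_j t_j \widehat{N}_j^{(r)}$ is rewritten as a linear combination of the cumulative counts $N_{(0,rx_j)}$. Setting $\sigma^{2}(rx_0):=0$, the coefficient of $N_{(0,rx_j)}$ is
\begin{align*}
u_j = \frac{t_j}{\sqrt{\sigma^{2}(rx_j)+\sigma^{2}(rx_{j-1})}} - \frac{t_{j+1}}{\sqrt{\sigma^{2}(rx_{j+1})+\sigma^{2}(rx_j)}} \qquad (1\le j\le m),
\end{align*}
where the last term is dropped when $j=m$, and where for $j=1$ the first denominator is $\sqrt{\sigma^{2}(rx_1)}$ due to the definition of $\widehat{N}_1^{(r)}$. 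Again each $u_j=O(1/\sqrt{\log r})$, so the $\Sigma$- and Barnes $G$-contributions disappear, and the $u_j\mu_\alpha(rx_j)$ terms cancel the centering. The remaining task is to identify the limit of $\sum_j \tfrac{u_j^2}{2}\sigma^{2}(rx_j)$. Using $\sigma^{2}(rx_j)=\tfrac{\log r}{4\pi^2}+O(1)$ uniformly in $j$, the ratio $\sigma^{2}(rx_j)/(\sigma^{2}(rx_j)+\sigma^{2}(rx_{j-1}))$ tends to $\tfrac{1}{2}$ for $j\ge 2$, while the analogous ratio for the $\widehat{N}_1^{(r)}$ term equals $1$. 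Expanding the squares and collecting coefficients of $t_i t_j$ then produces exactly the tridiagonal matrix $\Sigma_m$ in the statement.

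The main obstacle is this last bookkeeping step: the asymmetric normalization of $\widehat{N}_1^{(r)}$ (by $\sqrt{\sigma^{2}(rx_1)}$ alone, rather than by $\sqrt{\sigma^{2}(rx_1)+\sigma^{2}(rx_0)}$) is precisely what generates the exceptional off-diagonal entry $(\Sigma_m)_{1,2}=-1/\sqrt{2}$, in contrast to the generic tridiagonal entries $(\Sigma_m)_{j,j\pm 1}=-1/2$. Verifying the cases $m=2$ and $m=3$ exhibits the pattern transparently, after which the general case is a routine expansion.
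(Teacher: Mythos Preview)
Your proposal is correct and follows essentially the same approach as the paper: in both parts you substitute $r$-dependent parameters $u_j=O(1/\sqrt{\log r})$ into the uniform asymptotic \eqref{F asymptotics thm s1 neq 0}, cancel the $\mu_\alpha$-terms against the centering, discard the $\Sigma$- and Barnes $G$-contributions as $O(1/\log r)$, and read off the limiting quadratic form from the $\sigma^2$-terms, concluding via convergence of moment generating functions. The only cosmetic difference is that in part (b) the paper first performs the exact change of variables $u_j=v_j-v_{j+1}$ (with $v_{m+1}:=0$) before rescaling, whereas you collapse this into a single step; the algebra is identical, and your observation about the asymmetric normalization of $\widehat N_1^{(r)}$ producing the exceptional entry $-1/\sqrt{2}$ is exactly what drives the paper's computation as well.
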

\begin{remark}
In Corollary \ref{coro: CLTs} (b), we have $(\Sigma_{m})_{1,2} = (\Sigma_{m})_{2,1} = -1/\sqrt{2}$, and this does not agree with \cite[Theorem 2]{SoshnikovSineAiryBessel} (where it was obtained that $(\Sigma_{m})_{1,2} = (\Sigma_{m})_{2,1} = -1/2$). This is related to the factor $2$ missing in \eqref{error Soshnikov}.
\end{remark}
\begin{proof}[Proof of Corollary \ref{coro: CLTs}.]
(a) After changing variables $u_{j} = t_{j}/\sqrt{\sigma^{2}(rx_{j})}$ in \eqref{F asymptotics thm s1 neq 0} (recall that $E_{\alpha}$ is given by \eqref{def of E alpha}), we obtain the following asymptotics
\begin{align*}
\mathbb{E}_{\alpha} \Bigg[ \exp\bigg(\sum_{j=1}^{m} t_{j}N_{j}^{(r)} \bigg) \Bigg] = \exp \Bigg( \sum_{j=1}^{m} \frac{t_{j}^{2}}{2} + \bigO \bigg( \frac{1}{\log r} \bigg) \Bigg), \qquad \mbox{as } r \to + \infty.
\end{align*}
Thus, for each $t_{1},\ldots,t_{m} \in \mathbb{R}^{m}$, $(N_{1}^{(r)},\ldots,N_{m}^{(r)})_{r \geq 1}$ is a sequence of random variables whose moment generating functions converge to $\exp \big( \sum_{j=1}^{m} t_{j}^{2}/2 \big)$ as $r \to + \infty$. The convergence is pointwise in $t_{1},\ldots,t_{m} \in \mathbb{R}^{m}$, but this is sufficient to imply the convergence in distribution \eqref{convergence in distribution 1} (see e.g. \cite{Billingsley}).

\medskip

\noindent (b) First, we make the change of variables $u_{j}=v_{j}-v_{j+1}$, $j =1,\ldots,m$ with $v_{m+1}:=0$ in \eqref{F asymptotics thm s1 neq 0}. As $r \to + \infty$, we then have
\begin{align*}
& \mathbb{E}_{\alpha} \Bigg[ \exp\bigg(\sum_{j=1}^{m} v_{j}N_{(rx_{j-1},rx_{j})} \bigg) \Bigg] = \exp \Bigg( v_{1} \mu_{\alpha}(rx_{1})+\sum_{j=2}^{m}  v_{j} \big(\mu_{\alpha}(rx_{j})-\mu_{\alpha}(rx_{j-1})\big) \\
& +\frac{v_{1}^{2}}{2}\sigma^{2}(rx_{1}) + \sum_{j=2}^{m} \frac{v_{j}^{2}}{2}\big( \sigma^{2}(rx_{j-1})+\sigma^{2}(rx_{j})\big) -  \sum_{j=2}^{m} v_{j-1}v_{j}\sigma^{2}(rx_{j-1}) \\ 
&+ \sum_{1 \leq j < k \leq m} (v_{j}-v_{j+1}) (v_{k}-v_{k+1}) \Sigma(x_{k},x_{j})+ \sum_{j=1}^{m} \log G(1+\tfrac{v_{j}-v_{j+1}}{2\pi i})G(1-\tfrac{v_{j}-v_{j+1}}{2\pi i}) + \bigO \bigg( \frac{\log r}{\sqrt{r}} \bigg) \Bigg).
\end{align*}
Second, we apply the rescaling $v_{1} = t_{1}/\sqrt{\sigma^{2}(rx_{1})}$ and $v_{j} = t_{j}/\sqrt{\sigma^{2}(rx_{j})+\sigma^{2}(rx_{j-1})}$ for $j \geq 2$. This gives
\begin{align*}
\mathbb{E}_{\alpha} \Bigg[ \exp\bigg(\sum_{j=1}^{m} t_{j}\widehat{N}_{j}^{(r)} \bigg) \Bigg] = \exp \Bigg( \frac{1}{2}\sum_{j=1}^{m}t_{j}^{2}-\frac{1}{\sqrt{2}}t_{1}t_{2}-\frac{1}{2}\sum_{j=2}^{m-1}t_{j}t_{j+1} + \bigO \bigg( \frac{1}{\log r} \bigg) \Bigg), \qquad \mbox{as } r \to + \infty.
\end{align*}
As in part (a), the pointwise convergence in $t_{1},\ldots,t_{m} \in \mathbb{R}^{m}$ is sufficient to conclude.
\end{proof}

\paragraph{Main results for $s_{1} = 0$.} This situation gives information about the Bessel point process, when we condition on the event that no particle lies on the left-most interval, i.e. $N_{(0,x_{1})} = 0$. Conditional point processes have been studied in other contexts in e.g. \cite{ChCl2}. For $s_{1}=0$ and $s_{2},\ldots,s_{m}\in (0,+\infty)$, we define
\begin{equation}\label{def tj thm s1 = 0}
u_{j} = \left\{ \begin{array}{l l}
\ds \log \frac{s_{j}}{s_{j+1}} & \mbox{for } j = 2,...,m-1, \\[0.3cm]
\ds \log s_{m} & \mbox{for } j=m,
\end{array} \right.
\end{equation}
and consider the following conditional expectation
\begin{align}\label{def of Ec}
E_{\alpha}^{c}(\vec{x},\vec{u}) = \mathbb{E}_{\alpha}^{c}\Bigg[ \prod_{j=2}^{m} e^{u_{j}N_{(x_{1},x_{j})}} \Bigg] := \mathbb{E}_{\alpha}\Bigg[ \prod_{j=2}^{m} e^{u_{j}N_{(x_{1},x_{j})}} \Big| N_{(0,x_{1})} = 0 \Bigg] =
\frac{F_{\alpha}(\vec{x},\vec{s})}{F_{\alpha}(x_{1},0)},
\end{align}
where we recall (see \eqref{def of F}) that $F_{\alpha}(x_{1},0) = \mathbb{P}_{\alpha} ( N_{(0,x_{1})} = 0 )$.
\begin{theorem}\label{thm:s1=0}
Let $m \in \mathbb{N}_{>0}$, $\alpha > -1$,
\begin{align*}
\vec{u}=(u_2,...,u_m) \in \mathbb{R}^{m-1} \qquad \mbox{ and } \qquad \vec{x} = (x_{1},...,x_{m}) \in (\mathbb{R}^{+})^{m}
\end{align*}
be such that $0 < x_1 < x_2 < ... < x_m < +\infty$. As $r \to + \infty$, we have
\begin{multline}\label{F asymptotics thm s1 = 0}
E_{\alpha}^{c}(r\vec{x},\vec{u}) = \exp \Bigg( \sum_{j=2}^{m}  u_{j} \, \tilde{\mu}_{\alpha}(r,x_{j}) + \sum_{j=2}^{m} \frac{u_{j}^{2}}{2} \sigma^{2}(r(x_{j}-x_{1})) + \sum_{2 \leq j < k \leq m} u_{j} u_{k} \Sigma(x_{k}-x_{1},x_{j}-x_{1}) \\  + \sum_{j=2}^{m} \log G(1+\tfrac{u_{j}}{2\pi i})G(1-\tfrac{u_{j}}{2\pi i}) + \bigO \bigg( \frac{\log r}{\sqrt{r}} \bigg) \Bigg),
\end{multline}
where the functions $\sigma^{2}$ and $\Sigma$ are defined in \eqref{mu sigma and cov}, and $\tilde{\mu}_{\alpha}$ is given by
\begin{align*}
\tilde{\mu}_{\alpha}(r,x) = \frac{\sqrt{r(x-x_{1})}}{\pi} - \frac{\alpha}{\pi} \arccos\bigg(\frac{\sqrt{x_{1}}}{\sqrt{x}}\bigg).
\end{align*}
Furthermore, the error term is uniform in $u_{2},...,u_{m}$ in compact subsets of $(-\infty,+\infty)$ and uniform in $x_{1},...,x_{m}$ in compact subsets of $(0,+\infty)$, as long as there exists $\delta > 0$ such that
\begin{equation}\label{condition on xj in terms of delta s1 = 0}
\min_{1 \leq j < k \leq m} x_{k}-x_{j} \geq \delta.
\end{equation}
\end{theorem}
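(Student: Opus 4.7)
The proof plan mirrors the Deift--Zhou steepest descent analysis that underlies Theorem \ref{thm:s1 neq 0}, but must be adapted to the hard wall at $x_1$ created by the conditioning $N_{(0,x_1)}=0$. Note that the asymptotics of Theorem \ref{thm:s1 neq 0} cannot simply be specialized by sending $s_1\to 0$ (equivalently $u_1\to-\infty$), since they are not uniform in that limit. Instead, starting from the Its--Izergin--Korepin--Slavnov Riemann--Hilbert problem associated with the Fredholm determinant representation of $F_\alpha(r\vec{x},\vec{s})$ with $s_1=0$, I would perform a sequence of transformations $Y\to T\to S\to R$ analogous to the one used for Theorem \ref{thm:s1 neq 0}.

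The first key change is in the $g$-function: since $s_1=0$ forbids particles on $(0,rx_1)$, the underlying equilibrium measure is supported on $[x_1,+\infty)$ with density $\tfrac{1}{2\pi\sqrt{y-x_1}}$, rather than on $[0,+\infty)$ with density $\tfrac{1}{2\pi\sqrt{y}}$. Integrating this shifted density produces the leading term $\sqrt{r(x_j-x_1)}/\pi$ of $\tilde{\mu}_\alpha(r,x_j)$, and also explains why all distances appearing in the covariance entries $\Sigma(x_k-x_1,x_j-x_1)$ are shifted by $x_1$.

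The second ingredient is the construction of local parametrices. Near $y=0$ a Bessel parametrix with parameter $\alpha$ is still required, because the Bessel singularity of $K_\alpha^{\mathrm{Be}}$ is unaffected by the conditioning; near $y=x_1$ a Bessel parametrix with parameter $0$ models the new hard edge; and near each $y=x_j$ for $j\geq 2$ a confluent hypergeometric parametrix with parameter $u_j/(2\pi i)$ generates the Barnes $G$-function factors $G(1+\tfrac{u_j}{2\pi i})G(1-\tfrac{u_j}{2\pi i})$ together with the variance contributions $\sigma^2(r(x_j-x_1))$. After gluing to an outer parametrix built from the shifted density, the residual matrix $R$ satisfies a small-norm RH problem on a contour disjoint from the disks, yielding an error of size $\bigO(\log r/\sqrt{r})$.

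The constants are then extracted by integrating an appropriate differential identity for $\partial_{u_j}\log F_\alpha$ or $\partial_{x_1}\log F_\alpha$, as in the proof of Theorem \ref{thm:s1 neq 0}. The main technical obstacle is the subleading term $-(\alpha/\pi)\arccos(\sqrt{x_1/x_j})$ in $\tilde{\mu}_\alpha$: it is produced by no single local parametrix, but arises from the global coupling between the $\alpha$-Bessel parametrix at $0$, propagated across $(0,x_1)$ by the outer parametrix, and the hard-edge Bessel parametrix at $x_1$. A clean way to carry out the bookkeeping is to first derive the asymptotics of $F_\alpha(r\vec{x},\vec{s})$ directly, and then divide by the known asymptotics \eqref{asymp m=1 with s1 = 0} of $F_\alpha(rx_1,0)$; the exponential factors $(rx_1)^{-\alpha^2/4}e^{-rx_1/4+\alpha\sqrt{rx_1}}$ and the multiplicative constant $G(1+\alpha)/(2\pi)^{\alpha/2}$ should then cancel against corresponding contributions in the numerator, leaving exactly the stated conditional asymptotics \eqref{F asymptotics thm s1 = 0}.
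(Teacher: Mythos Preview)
Your overall strategy is right --- a shifted $g$-function $g(z)=\sqrt{z+x_1}$, confluent hypergeometric parametrices at $-x_j$ for $j\geq 2$, a Bessel parametrix with parameter $0$ at $-x_1$, and integration of the differential identity --- and this is exactly what the paper does. But there is a genuine misconception in your local analysis near the origin.

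You claim that ``near $y=0$ a Bessel parametrix with parameter $\alpha$ is still required.'' In fact, with $s_1=0$ no local parametrix is needed at $0$ at all. The reason is that when $s_1=0$ no lenses are opened around $(-x_1,0)$: the only jump of $S$ there is the diagonal matrix $e^{\pi i \alpha \sigma_3}$. The global parametrix $P^{(\infty)}$ carries exactly that jump via the Szeg\H{o} factor
\[
D_\alpha(z)=\Big(\tfrac{\sqrt{z+x_1}+\sqrt{x_1}}{\sqrt{z+x_1}-\sqrt{x_1}}\Big)^{\alpha/2},
\]
and its behavior $P^{(\infty)}(z)=\bigO(1)\,z^{\frac{\alpha}{2}\sigma_3}$ as $z\to 0$ matches that of $S$. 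Hence $S(P^{(\infty)})^{-1}$ is already analytic near $0$. Trying to insert the Bessel model $\Phi_{\mathrm{Be}}(\cdot;\alpha)$ there would be wrong: $\Phi_{\mathrm{Be}}$ has lens-type jumps on $e^{\pm 2\pi i/3}\mathbb R^+$ that $S$ simply does not have in this regime.

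Consequently, your explanation of the $\arccos$ term as a ``coupling between the $\alpha$-Bessel parametrix at $0$ and the hard-edge Bessel parametrix at $x_1$'' is off the mark. In the paper the $\arccos$ appears directly from the global parametrix: as $z\to 0$ one has $D(z)=D_0\,z^{-\alpha/2}(1+\bigO(z))$ with
\[
D_0=\exp\Big(\tfrac{\alpha}{2}\log(4x_1)-\sum_{j=2}^m 2i\beta_j\arccos\big(\tfrac{\sqrt{x_1}}{\sqrt{x_j}}\big)\Big),
\]
and the contribution $K_0$ to the differential identity is $-\alpha\,\partial_{s_k}\log D_0+\cdots$, which after integration produces exactly $-(\alpha/\pi)\arccos(\sqrt{x_1/x_j})$ in $\tilde\mu_\alpha$. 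So the $\alpha$-dependence is entirely encoded in $D_\alpha$, not in any local parametrix at the origin. Once you fix this, your plan coincides with the paper's proof; note also that the integration already yields $\log E_\alpha^c=\log\widetilde F_\alpha(r\vec x,\vec\beta)-\log\widetilde F_\alpha(r\vec x,\vec 0)$ directly, so no cancellation of the constants from \eqref{asymp m=1 with s1 = 0} is actually needed for the conditional statement.
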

We can deduce from Theorem \ref{thm:s1=0} the following new asymptotics for the expectation and variance of the counting function. The proof is similar to the one of Corollary \ref{coro: s1>0 consequence of thm}, and we omit it.
\begin{corollary}\label{coro: exp and var s1 = 0}
Let $x_{3}>x_{2}>x_{1}>0$ be fixed. As $r \to + \infty$, we have
\begin{align*}
& \mathbb{E}_{\alpha}\big[ N_{(rx_{1},rx_{2})} | N_{(0,rx_{1})} = 0 \big] = \tilde{\mu}(r,x_{2}) + \bigO \bigg( \frac{\log r}{\sqrt{r}} \bigg), \\
&  \mathrm{Var}_{\alpha}\big[ N_{(rx_{1},rx_{2})} | N_{(0,rx_{1})} = 0 \big] = \sigma^{2}(r(x_{2}-x_{1})) + \frac{1 + \gamma_{\mathrm{E}}}{2 \pi^{2}} + \bigO \bigg( \frac{\log r}{\sqrt{r}} \bigg), \\
& \mathrm{Cov}_{\alpha}\Big[ (N_{(rx_{1},rx_{2})}| N_{(0,rx_{1})}=0) \, , \, (N_{(rx_{1},rx_{3})}| N_{(0,rx_{1})}=0) ] = \Sigma(x_{3}-x_{1},x_{2}-x_{1}) + \bigO \bigg( \frac{\log r}{\sqrt{r}} \bigg).
\end{align*}
\end{corollary}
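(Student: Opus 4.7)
The proof follows the same template as that of Corollary \ref{coro: s1>0 consequence of thm}, with Theorem \ref{thm:s1=0} and the conditional generating function $E_\alpha^c$ replacing Theorem \ref{thm:s1 neq 0} and $E_\alpha$.

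For the expectation and variance, I would apply Theorem \ref{thm:s1=0} with $m=2$ and $\vec{x}=(x_1,x_2)$ and expand the right-hand side of \eqref{F asymptotics thm s1 = 0} in powers of $u_2$ around $u_2=0$. Using the Taylor expansion
\[
\log G\!\left(1+\tfrac{z}{2\pi i}\right)\!G\!\left(1-\tfrac{z}{2\pi i}\right) = \frac{1+\gamma_{\mathrm{E}}}{4\pi^{2}}\,z^{2} + \bigO(z^{4}),
\]
which follows from differentiating \eqref{def of Barnes G} (the odd powers cancel since the left-hand side is even in $z$), this gives
\[
\log E_\alpha^c(r\vec{x},u_2) = u_2\,\tilde{\mu}_\alpha(r,x_2) + \frac{u_2^{2}}{2}\!\left(\sigma^{2}(r(x_2-x_1)) + \frac{1+\gamma_{\mathrm{E}}}{2\pi^{2}}\right) + \bigO(u_2^{3}) + \bigO\!\left(\tfrac{\log r}{\sqrt{r}}\right).
\]
On the other hand, direct Taylor expansion of \eqref{def of Ec} yields
\[
\log E_\alpha^c(r\vec{x},u_2) = u_2\,\mathbb{E}_\alpha^c[N_{(rx_1,rx_2)}] + \frac{u_2^{2}}{2}\,\mathrm{Var}_\alpha^c[N_{(rx_1,rx_2)}] + \bigO(u_2^{3}).
\]
Matching the coefficients of $u_2$ and $u_2^{2}$ yields the first two claimed asymptotics.

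For the covariance, I would use Theorem \ref{thm:s1=0} with $m=3$ at the specialization $u_2=u_3=u$, and separately the $m=2$ cases associated to the pairs $(x_1,x_2)$ and $(x_1,x_3)$. In the combination
\[
\log E_\alpha^c\!\big(r\vec{x},(u,u)\big) - \log E_\alpha^c\!\big((rx_1,rx_2),u\big) - \log E_\alpha^c\!\big((rx_1,rx_3),u\big),
\]
the $\tilde{\mu}_\alpha$, $\sigma^{2}$, and Barnes $G$ contributions cancel, leaving only $u^{2}\,\Sigma(x_3-x_1,x_2-x_1) + \bigO(\tfrac{\log r}{\sqrt{r}})$. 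An expansion of \eqref{def of Ec} analogous to \eqref{cov in the expansion} shows that this same combination equals $\mathrm{Cov}_\alpha^c\!\big(N_{(rx_1,rx_2)},N_{(rx_1,rx_3)}\big)\,u^{2} + \bigO(u^{3})$; matching the $u^{2}$ coefficients then produces the third asymptotic.

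The only mild subtlety is justifying the coefficient-matching step: since the error in Theorem \ref{thm:s1=0} is uniform in $\vec{u}$ on compact subsets of $\mathbb{R}^{m-1}$, and both sides are analytic in $\vec{u}$, Cauchy's integral formula applied on a fixed small polydisc around the origin transfers the uniform bound $\bigO(\tfrac{\log r}{\sqrt{r}})$ to the same bound on each individual Taylor coefficient. No further analytic input is needed.
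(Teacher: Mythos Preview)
Your proposal is correct and follows essentially the same approach as the paper, which in fact omits the proof entirely and simply states that it is ``similar to the one of Corollary \ref{coro: s1>0 consequence of thm}.'' Your write-up supplies exactly the expected details (Taylor expansion of $E_\alpha^c$ in $\vec{u}$, comparison with the asymptotic formula \eqref{F asymptotics thm s1 = 0}, and the Cauchy-estimate justification for transferring the uniform $\bigO(\log r/\sqrt{r})$ bound to the individual coefficients), so there is nothing to add.
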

We also obtain the following central limit theorem. The proof is similar to the one done in Corollary \ref{coro: CLTs} and we omit it.
\begin{corollary}\label{coro: CLTs 3}
Consider the conditional random variables $\widetilde{N}_{j}^{(r)}$ defined by
\begin{align*}
\widetilde{N}_{j}^{(r)} = \frac{(N_{(rx_{1},rx_{j})}|N_{(0,rx_{1})} =0)-\tilde{\mu}(r,x_{j})}{\sqrt{\sigma^{2}(r(x_{j}-x_{1}))}}, \qquad j=2,\ldots,m.
\end{align*}
As $r \to +\infty$, we have
\begin{align}
\big(\widetilde{N}_{2}^{(r)},\ldots,\widetilde{N}_{m}^{(r)}\big) \quad \overset{d}{\longrightarrow} \quad \mathcal{N}(\vec{0},I_{m-1}).
\end{align}
\end{corollary}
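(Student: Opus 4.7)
The plan is to imitate the proof of Corollary \ref{coro: CLTs}(a), now applied to the conditional exponential moment $E_{\alpha}^{c}$ from \eqref{def of Ec}. By definition of $\widetilde{N}_{j}^{(r)}$, the centered joint moment generating function equals
\begin{align*}
\mathbb{E}_{\alpha}^{c}\Bigg[ \exp\bigg(\sum_{j=2}^{m} t_{j}\widetilde{N}_{j}^{(r)}\bigg) \Bigg] = \exp\Bigg( - \sum_{j=2}^{m} u_{j} \, \tilde{\mu}_{\alpha}(r,x_{j}) \Bigg) E_{\alpha}^{c}(r\vec{x},\vec{u}), \qquad u_{j} := \frac{t_{j}}{\sqrt{\sigma^{2}(r(x_{j}-x_{1}))}}.
\end{align*}
So the first step is to substitute this choice of $u_{j}$ in the asymptotics \eqref{F asymptotics thm s1 = 0} of Theorem \ref{thm:s1=0}, which is valid uniformly for $u_{j}$ in compact subsets of $\mathbb{R}$ and, in particular, in a small neighborhood of $0$.

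With this substitution, the $u_{j}\tilde{\mu}_{\alpha}(r,x_{j})$ terms cancel against the prefactor, and each diagonal term $\frac{u_{j}^{2}}{2}\sigma^{2}(r(x_{j}-x_{1}))$ becomes exactly $t_{j}^{2}/2$. For the cross terms, since $\sigma^{2}(y) = \log(4\sqrt{y})/(2\pi^{2}) \to +\infty$ as $y \to +\infty$, we have
\begin{align*}
u_{j} u_{k} \, \Sigma(x_{k}-x_{1},x_{j}-x_{1}) = \frac{t_{j} t_{k} \, \Sigma(x_{k}-x_{1},x_{j}-x_{1})}{\sqrt{\sigma^{2}(r(x_{j}-x_{1}))\sigma^{2}(r(x_{k}-x_{1}))}} = \bigO\bigg(\frac{1}{\log r}\bigg).
\end{align*}
Similarly, each $u_{j}$ tends to $0$ like $1/\sqrt{\log r}$, and a short expansion of \eqref{def of Barnes G} around $z=0$ gives $\log G(1+z)G(1-z) = \bigO(z^{2})$, so the Barnes $G$ contribution is also $\bigO(1/\log r)$. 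Combining these estimates with the error term $\bigO(\log r/\sqrt{r})$ from \eqref{F asymptotics thm s1 = 0}, we conclude
\begin{align*}
\mathbb{E}_{\alpha}^{c}\Bigg[ \exp\bigg(\sum_{j=2}^{m} t_{j}\widetilde{N}_{j}^{(r)}\bigg) \Bigg] = \exp\Bigg( \frac{1}{2} \sum_{j=2}^{m} t_{j}^{2} + \bigO\bigg(\frac{1}{\log r}\bigg) \Bigg), \qquad \mbox{as } r \to + \infty.
\end{align*}

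The right-hand side is the moment generating function of $\mathcal{N}(\vec{0},I_{m-1})$, so the pointwise convergence in $(t_{2},\ldots,t_{m}) \in \mathbb{R}^{m-1}$ implies the desired convergence in distribution by a standard result (see \cite{Billingsley}). There is no genuine obstacle here: all the analytic work is already contained in Theorem \ref{thm:s1=0}, and this corollary is a direct translation of the asymptotic formula into probabilistic language. The only minor point to keep track of is that the uniformity of the error term in $\vec{u}$ on a neighborhood of the origin is exactly what licenses us to substitute $u_{j} = t_{j}/\sqrt{\sigma^{2}(r(x_{j}-x_{1}))}$, which itself tends to $0$ as $r \to +\infty$.
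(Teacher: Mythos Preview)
Your proof is correct and follows exactly the approach the paper intends: it says the proof is similar to that of Corollary \ref{coro: CLTs} and omits it, and your argument is precisely the conditional analogue of part (a) there, substituting $u_{j}=t_{j}/\sqrt{\sigma^{2}(r(x_{j}-x_{1}))}$ into \eqref{F asymptotics thm s1 = 0} and invoking pointwise convergence of moment generating functions.
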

\paragraph{Outline.} Section \ref{section: model rh problem} is divided into two parts. In the first part, we recall a model RH problem (whose solution is denoted $\Phi$) introduced in \cite{ChDoe}, which is of central importance in the present paper. In the second part, we obtain a differential identity which expresses $\partial_{s_k} \log F_{\alpha}(r \vec{x},\vec{s})$ (for an arbitrary $k\in\{1,...,m\}$) in terms of $\Phi$. We obtain large $r$ asymptotics for $\Phi$ with $s_{1} \in (0,+\infty)$ in Section \ref{Section: Steepest descent with s1>0} via a Deift/Zhou steepest descent. In Section \ref{Section: integration s1 >0}, we use the analysis of Section \ref{Section: Steepest descent with s1>0} to obtain large $r$ asymptotics for $\partial_{s_k} \log F_{\alpha}(r \vec{x},\vec{s})$. We also proceed with successive integrations of these asymptotics in $s_{1},...,s_{m}$, which proves Theorem \ref{thm:s1 neq 0}. Section \ref{Section: Steepest descent with s1=0} and Section \ref{Section: integration s1 =0} are devoted to the proof of Theorem \ref{thm:s1=0} (with $s_{1} = 0$), and are organised similarly to Section \ref{Section: Steepest descent with s1>0} and Section \ref{Section: integration s1 >0}.

\paragraph{Approach.} In \cite{DeiftKrasVasi}, the authors obtained the asymptotics \eqref{asymp m=1 with s1 = 0} by expressing $F_{\alpha}(rx_{1},0)$ as a limit as $n \to + \infty$ of $n \times n$ Toeplitz determinants (whose symbol has an hard edge) and then performing a steepest descent on an RH problem for orthogonal polynomials on the unit circle. The parameter $n$ is thus an extra parameter which disappears in the limit. It is a priori possible for us to generalize the same strategy by relating $F_{\alpha}(r\vec{x},\vec{s})$ with Toeplitz determinants (with jump-type Fisher-Hartwig singularities accumulating near an hard-edge), but on a technical level this appears not obvious at all. Our approach takes advantage of the known result \eqref{asymp m=1 with s1 = 0} (only needed to prove Theorem \ref{thm:s1=0}, but not Theorem \ref{thm:s1 neq 0}), and is more direct in the sense that the parameter $n$ does not appear in the analysis.

\section{Model RH problem for $\Phi$ and a differential identity}\label{section: model rh problem}
As mentioned in the outline, the model RH problem introduced in \cite{ChDoe} is of central importance in the present paper, and we recall its properties here. In order to have compact and uniform notations, it is convenient for us to define $x_{0} = 0$ and $s_{m+1} = 1$, but they are not included in the notations for $\vec{x}$ and $\vec{s}$. To summarize, the parameters $x_{0}$, $s_{m+1}$, $\vec{x} = (x_{1},...,x_{m})$ and $\vec{s} = (s_{1},...,s_{m})$ are such that
\begin{equation}\label{param for phi}
0=x_{0}<x_{1}<...<x_{m}<+\infty, \qquad s_{1},...,s_{m} \in [0,+\infty) \quad \mbox{ and } s_{m+1} = 1.
\end{equation}
The model RH problem we consider depends on $\alpha$, $\vec{x}$ and $\vec{s}$, and its solution is denoted by $\Phi(z;\vec{x},\vec{s})$, where the dependence in $\alpha$ is omitted. When there is no confusion,  we will just denote it by $\Phi(z)$ where the dependence in $\vec{x}$ and $\vec{s}$ is also omitted. There is existence (if the parameters satisfy \eqref{param for phi}) and uniqueness for $\Phi$, and furthermore it satisfies $\det \Phi \equiv 1$. The RH problem for $\Phi$ is more easily stated in terms of the following matrices:
\begin{equation}
\sigma_{3} = \begin{pmatrix}
1 & 0 \\ 0 & -1
\end{pmatrix}, \qquad \sigma_{+} = \begin{pmatrix}
0 & 1 \\ 0 & 0
\end{pmatrix}, \qquad N = \frac{1}{\sqrt{2}}\begin{pmatrix}
1 & i \\ i & 1
\end{pmatrix}.
\end{equation}
We also define for $y \in \mathbb{R}$ the following piecewise constant matrix:
\begin{equation}\label{def of H}
H_{y}(z) = \left\{  \begin{array}{l l}

I, & \mbox{ for } -\frac{2\pi}{3}< \arg(z-y)< \frac{2\pi}{3},\\

\begin{pmatrix}
1 & 0 \\
-e^{\pi i \alpha} & 1 \\
\end{pmatrix}, & \mbox{ for } \frac{2\pi}{3}< \arg(z-y)< \pi, \\

\begin{pmatrix}
1 & 0 \\
e^{-\pi i \alpha} & 1 \\
\end{pmatrix}, & \mbox{ for } -\pi< \arg(z-y)< -\frac{2\pi}{3}, \\

\end{array} \right.
\end{equation}
where the principal branch is chosen for the argument, such that $\arg(z-y) = 0$ for $z>y$.

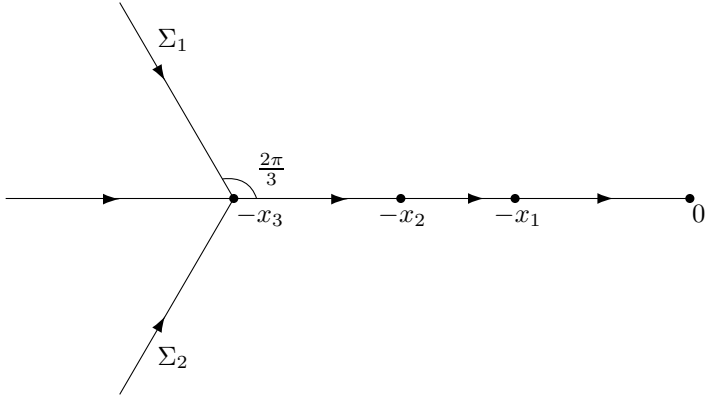
\begin{figure}[t]
    \begin{center}
    \setlength{\unitlength}{1truemm}
    \begin{picture}(100,55)(20,10)
        \put(50,40){\line(1,0){60}}
        \put(50,40){\line(-1,0){30}}
        \put(50,40){\thicklines\circle*{1.2}}
        \put(72,40){\thicklines\circle*{1.2}}
        \put(87,40){\thicklines\circle*{1.2}}
        \put(100,40){\thicklines\vector(1,0){.0001}}
        \put(83,40){\thicklines\vector(1,0){.0001}}
        \put(110,40){\thicklines\circle*{1.2}}
        \put(50,40){\line(-0.5,0.866){15}}
        \put(50,40){\line(-0.5,-0.866){15}}
        \qbezier(53,40)(52,43)(48.5,42.598)
        \put(53,43){$\frac{2\pi}{3}$}
        \put(50.3,36.8){$-x_{3}$}
        \put(69,36.8){$-x_{2}$}
        \put(84.2,36.8){$-x_{1}$}
        \put(110.3,36.8){$0$}
        \put(40,60){$\Sigma_{1}$}
        \put(40,18){$\Sigma_{2}$}
        \put(65,39.9){\thicklines\vector(1,0){.0001}}
        \put(35,39.9){\thicklines\vector(1,0){.0001}}
        \put(41,55.588){\thicklines\vector(0.5,-0.866){.0001}}
        \put(41,24.412){\thicklines\vector(0.5,0.866){.0001}}
    \end{picture}
    \caption{\label{figPhi}The jump contour for $\Phi$ with $m=3$.}
\end{center}
\end{figure}

\subsubsection*{RH problem for $\Phi$}
\begin{itemize}
\item[(a)] $\Phi : \mathbb{C}\setminus \Sigma_{\Phi} \to \mathbb{C}^{2\times 2}$ is analytic, where the contour $\Sigma_{\Phi} = ((-\infty,0]\cup \Sigma_{1} \cup \Sigma_{2})$ is oriented as shown in Figure \ref{figPhi} with
\begin{equation*}
\Sigma_{1} = -x_{m}+ e^{\frac{2\pi i}{3}}\mathbb{R}^{+}, \qquad \Sigma_{2} = -x_{m}+ e^{-\frac{2\pi i}{3}}\mathbb{R}^{+}.
\end{equation*}
\item[(b)] The limits of $\Phi(z)$ as $z$ approaches $\Sigma_{\Phi}\setminus \{0,-x_{1},...,-x_{m}\}$ from the left ($+$ side) and from the right ($-$ side) exist, are continuous on $\Sigma_{\Phi}\setminus \{0,-x_{1},...,-x_{m}\}$ and are denoted by $\Phi_+$ and $\Phi_-$ respectively. Furthermore they are related by:
\begin{align}
& \Phi_{+}(z) = \Phi_{-}(z) \begin{pmatrix}
1 & 0 \\ e^{\pi i \alpha} & 1
\end{pmatrix}, & & z \in \Sigma_{1}, \\
& \Phi_{+}(z) = \Phi_{-}(z) \begin{pmatrix}
0 & 1 \\ -1 & 0
\end{pmatrix}, & & z \in (-\infty,-x_{m}), \\
& \Phi_{+}(z) = \Phi_{-}(z) \begin{pmatrix}
1 & 0 \\ e^{-\pi i \alpha} & 1
\end{pmatrix}, & & z \in \Sigma_{2}, \\
& \Phi_{+}(z) = \Phi_{-}(z) \begin{pmatrix}
e^{\pi i \alpha} & s_{j} \\ 0 & e^{-\pi i \alpha}
\end{pmatrix}, & & z \in (-x_{j},-x_{j-1}),
\end{align}
where $j =1,...,m$.
\item[(c)] As $z \to \infty$, we have 
\begin{equation}\label{Phi inf}
\Phi(z) = \Big(I + \Phi_{1}z^{-1}+\bigO(z^{-2})\Big)z^{-\frac{\sigma_{3}}{4}}Ne^{\sqrt{z}\sigma_{3}},
\end{equation}
where the principal branch is chosen for each root, and the matrix $\Phi_{1} = \Phi_{1}(\vec{x},\vec{s})$ is independent of $z$ and traceless.

As $z$ tends to $-x_{j}$, $j \in \{1,...,m\}$, $\Phi$ takes the form
\begin{equation}\label{def of Gj}
\Phi(z) = G_{j}(z) \begin{pmatrix}
1 & \frac{s_{j+1}-s_{j}}{2\pi i} \log (z+x_{j}) \\ 0 & 1
\end{pmatrix} V_{j}(z) e^{\frac{\pi i \alpha}{2}\theta(z)\sigma_{3}}H_{-x_{m}}(z),
\end{equation}
where $G_{j}(z) = G_{j}(z;\vec{x},\vec{s})$ is analytic in a neighborhood of $(-x_{j+1},-x_{j-1})$, satisfies $\det G_{j} \equiv 1$, and $\theta(z)$, $V_{j}(z)$ are piecewise constant and defined by
\begin{align}
& \theta(z) = \left\{  \begin{array}{l l}
+1, & \mbox{Im } z > 0, \\
-1, & \mbox{Im } z < 0,
\end{array} \right. & & V_{j}(z) = \left\{  \begin{array}{l l}
I, & \mbox{Im } z > 0, \\
\begin{pmatrix}
1 & -s_{j} \\ 0 & 1
\end{pmatrix}, & \mbox{Im } z < 0.
\end{array} \right.
\end{align}

As $z$ tends to $0$, $\Phi$ takes the form
\begin{equation}\label{def of G_0}
\Phi(z) = G_{0}(z)z^{\frac{\alpha}{2}\sigma_{3}} \begin{pmatrix}
1 & s_{1}h(z) \\ 0 & 1
\end{pmatrix}, \qquad \alpha > -1,
\end{equation}
where $G_{0}(z)$ is analytic in a neighborhood of $(-x_{1},\infty)$, satisfies $\det G_{0} \equiv 1$ and 
\begin{equation}\label{def of h}
h(z) = \left\{ \begin{array}{l l}
\displaystyle \frac{1}{2 i \sin(\pi \alpha)}, & \alpha \notin \mathbb{N}_{\geq 0}, \\[0.35cm]
\displaystyle \frac{(-1)^{\alpha}}{2\pi i} \log z, & \alpha \in \mathbb{N}_{\geq 0}.
\end{array} \right.
\end{equation}
The quantities $\Phi_{1}$ and $G_{j}$, $j=0,\ldots,m$ also depend on $\alpha$, even though it is not indicated in the notation.
\end{itemize}

\subsection*{Differential identity}
It is known \cite[Theorem 2]{Soshnikov} that $F_{\alpha}(\vec{x},\vec{s})$ can also be expressed as a Fredholm determinant with $m$ discontinuities as follows
\begin{equation}\label{F Fredholm}
F_{\alpha}(\vec{x},\vec{s}) = \det (1-\mathcal{K}_{\vec{x},\vec{s}}), \qquad  \mathcal{K}_{\vec{x},\vec{s}}:=\hat\chi_{(0,x_{m})}\sum_{j=1}^{m}(1-s_{j})\mathcal{K}_{\alpha}^{\mathrm{Be}}\hat\chi_{(x_{j-1},x_{j})},
\end{equation}
where $\mathcal{K}_{\alpha}^{\mathrm{Be}}$ denotes the integral operator acting on $L^2(\R^+)$ whose kernel is $K_{\alpha}^{\mathrm{Be}}$ (given by \eqref{Bessel kernel}), and where $\hat\chi_A$ is the projection operator onto $L^2(A)$. 
We consider the function $K_{\vec{x},\vec{s}}: \mathbb{R}^{+} \times \mathbb{R}^{+} \to \mathbb{R}$ given by
\begin{equation}
K_{\vec{x},\vec{s}}(u,v) = \chi_{(0,x_{m})}(u) \sum_{j=1}^{m} (1-s_{j})K_{\alpha}^{\mathrm{Be}}(u,v) \chi_{(x_{j-1},x_{j})}(v), \qquad u,v>0,
\end{equation}
where for a given $A \subset \mathbb{R}$, $\chi_{A}$ denotes the characteristic function of $A$. For notational convenience, we omit the dependence of $K_{\vec{x},\vec{s}}$ and $\mathcal{K}_{\vec{x},\vec{s}}$ in $\alpha$. Since $s_{j}\geq 0$ for all $j=0,\ldots,m$, we deduce from \eqref{def of F} that $F_{\alpha}(\vec{x},\vec{s})=\det(1-\mathcal{K}_{\vec{x},\vec{s}})>0$. In particular, $1-\mathcal{K}_{\vec{x},\vec{s}}$ is invertible. Therefore, by standard properties of trace class operators, for any $k \in \{1,...,m\}$ we have
\begin{equation}\label{lol1}
\begin{array}{r c l}
\ds \partial_{s_{k}} \log \det (1-\mathcal{K}_{\vec{x},\vec{s}}) & = & \ds - \mbox{Tr}\Big( (1-\mathcal{K}_{\vec{x},\vec{s}})^{-1}\partial_{s_{k}} \mathcal{K}_{\vec{x},\vec{s}} \Big) \\
& = & \ds \frac{1}{1-s_{k}} \mbox{Tr}\Big( (1-\mathcal{K}_{\vec{x},\vec{s}})^{-1}\mathcal{K}_{\vec{x},\vec{s}}\chi_{(x_{k-1},x_{k})} \Big) \\
& = & \ds \frac{1}{1-s_{k}} \mbox{Tr} \Big( \mathcal{R}_{\vec{x},\vec{s}}\chi_{(x_{k-1},x_{k})} \Big) = \frac{1}{1-s_{k}}\int_{x_{k-1}}^{x_{k}}R_{\vec{x},\vec{s}}(u,u)du,
\end{array}
\end{equation}
where $\mathcal{R}_{\vec{x},\vec{s}}$ is the resolvent operator defined by
\begin{equation}
1+\mathcal{R}_{\vec{x},\vec{s}} = (1-\mathcal{K}_{\vec{x},\vec{s}})^{-1},
\end{equation}
and where $R_{\vec{x},\vec{s}}$ is the associated kernel. From \cite[equation (4.19)]{ChDoe}, for $u \in (x_{k-1},x_{k})$ we have
\begin{equation}
R_{\vec{x},\vec{s}}(u,u) = \frac{-e^{\pi i \alpha}}{2\pi i}(1-s_{k}) [\Phi_{-}^{-1}(-u;\vec{x},\vec{s})\partial_{u}(\Phi_{-}(-u;\vec{x},\vec{s}))]_{21}.
\end{equation}
Therefore, we obtain the following differential identity
\begin{equation}\label{diff identity integral form}
\partial_{s_{k}} \log \det (1-\mathcal{K}_{\vec{x},\vec{s}}) = \frac{e^{\pi i \alpha}}{2\pi i} \int_{-x_{k}}^{-x_{k-1}}[\Phi_{-}^{-1}(u;\vec{x},\vec{s})\partial_{u}\Phi_{-}(u;\vec{x},\vec{s})]_{21}du.
\end{equation}
Note that we implicitly assumed $s_{k} \neq 1$ in \eqref{lol1}, and thus \eqref{diff identity integral form} is a priori only true under this assumption. However, both sides of \eqref{diff identity integral form} are continuous as $s_{k} \to 1$, and therefore \eqref{diff identity integral form} also holds for $s_{k} = 1$ by continuity. (In fact both sides are analytic for $s_{k}$ in a small complex neighborhood of $1$. This follows from \cite[Theorem 2]{Soshnikov} and the fact that $\left. \det (1-\mathcal{K}_{\vec{x},\vec{s}}) \right|_{s_{k} = 1} > 0$ for the left-hand side, and from standard properties for RH problems for the right-hand side.)
\begin{remark}
It is quite remarkable that there are differential identities for $\log \det (1-\mathcal{K}_{\vec{x},\vec{s}})$ in terms of an RH problem. The reason behind this is that the kernel $\mathcal{K}_{\vec{x},\vec{s}}$ is so-called \textit{integrable} in the sense of Its, Izergin, Korepin and Slavnov \cite{IIKS}. This fact was also used extensively in $\cite{ChDoe}$ (even though the differential identities obtained in \cite{ChDoe} are different from \eqref{diff identity integral form}).
\end{remark}
In the rest of this section, we aim to simplify the integral on the right-hand side of \eqref{diff identity integral form}, following ideas presented in \cite[Section 3]{BothnerBuckingham} and using some results of \cite{ChDoe}. To prepare ourselves for that matter, we define for $r>0$ the following quantities
\begin{equation}\label{def of Phi tilde}
\widetilde \Phi(z;r) = \widetilde E(r) \Phi(rz;r\vec{x},\vec{s}), \qquad \widetilde E(r) = \begin{pmatrix}
1 & 0 \\ \frac{i}{\sqrt{r}}\Phi_{1,12}(r\vec{x},\vec{s}) & 1
\end{pmatrix}e^{\frac{\pi i}{4}\sigma_{3}}r^{\frac{\sigma_{3}}{4}},
\end{equation}
where we have omitted the dependence of $\widetilde \Phi$ and $\widetilde{E}$ in $\vec{x}$ and $\vec{s}$. It was shown in \cite[equation (3.15)]{ChDoe} that $\widetilde{\Phi}$ satisfies a Lax pair, and in particular
\begin{equation}
\partial_{z} \widetilde \Phi(z;r) = \widetilde{A}(z;r)\widetilde \Phi(z;r),
\end{equation}
where $\widetilde{A}$ is traceless, depends also on $\vec{x}$ and $\vec{s}$ and takes the form
\begin{equation}
\widetilde{A}(z;r) = \begin{pmatrix}
0 & 0 \\ \frac{\sqrt{r}}{2} & 0
\end{pmatrix} + \sum_{j=0}^{m} \frac{\widetilde{A}_{j}(r)}{z+x_{j}},
\end{equation}
for some traceless matrices $\widetilde{A}_{j}(r)=\widetilde{A}_{j}(r;\vec{x},\vec{s})$. Therefore, we have
\begin{multline}\label{def of A}
A(z;r) := \partial_{z} \Big( \Phi(rz;r\vec{x},\vec{s}) \Big) \Phi^{-1}(rz;r\vec{x},\vec{s}) = \widetilde{E}(r)^{-1} \widetilde{A}(z;r)\widetilde{E}(r) = \begin{pmatrix}
0 & 0 \\ \frac{ir}{2} & 0
\end{pmatrix} + \sum_{j=0}^{m} \frac{A_{j}(r)}{z+x_{j}},
\end{multline}
where $A_{j}(r) = \widetilde{E}(r)^{-1}\widetilde{A}_{j}(r)\widetilde{E}(r)$, $j = 0,1,...,m$. We will use later the following relations between the matrices $A_{j}$ and $G_{j}$. For $j = 1,...,m$, using \eqref{def of Gj} and $\det G_{j} \equiv 1$, we have
\begin{equation}\label{A_j}
\begin{array}{r c l}
\ds A_{j}(r) & = & \ds \frac{s_{j+1}-s_{j}}{2\pi i}(G_{j} \sigma_{+}G_{j}^{-1})(-rx_{j};r\vec{x},\vec{s}) \\[0.3cm]
& = & \ds \frac{s_{j+1}-s_{j}}{2\pi i} \begin{pmatrix}
-G_{j,11}G_{j,21} & G_{j,11}^{2} \\ -G_{j,21}^{2} & G_{j,11}G_{j,21}
\end{pmatrix},
\end{array}
\end{equation}
and, from \eqref{def of G_0} and $\det G_{0} \equiv 1$, we have
\begin{equation}\label{A_0}
A_{0}(r) = \left\{ \begin{array}{l l}
\ds \frac{s_{1}}{2\pi i}(G_{0}\sigma_{+}G_{0}^{-1})(0;r\vec{x},\vec{s}), & \mbox{if } \alpha = 0, \\[0.25cm]
\ds \frac{\alpha}{2}(G_{0}\sigma_{3}G_{0}^{-1})(0;r\vec{x},\vec{s}), & \mbox{if } \alpha \neq 0.
\end{array} \right.
\end{equation}
Now, we rewrite the integrand on the right-hand side of \eqref{diff identity integral form} (with $\vec{x} \mapsto r \vec{x}$) using \eqref{def of A}. Since $A$ is traceless and $\det \Phi \equiv 1$, we have
\begin{multline}
[\Phi^{-1}(rz;r\vec{x},\vec{s})\partial_{z} \big( \Phi(rz;r\vec{x},\vec{s}) \big)]_{21} = [\Phi^{-1}(rz;r\vec{x},\vec{s})A(z;r)\Phi(rz;r\vec{x},\vec{s})]_{21} \\ = \Phi_{11}^{2} A_{21} - \Phi_{21}^{2}A_{12}-2\Phi_{11}\Phi_{21}A_{11},
\end{multline}
which can be rewritten (again via \eqref{def of A}) as
\begin{multline}\label{explicit integrant}
[\Phi^{-1}(rz;r\vec{x},\vec{s})\partial_{z} \big( \Phi(rz;r\vec{x},\vec{s}) \big)]_{21} = (\Phi\sigma_{+}\Phi^{-1})_{12}(rz;r\vec{x},\vec{s}) \bigg[ \frac{ir}{2} + \sum_{j=0}^{m} \frac{A_{j,21}(r)}{z+x_{j}} \bigg] \\
+ (\Phi\sigma_{+}\Phi^{-1})_{21}(rz;r\vec{x},\vec{s})\sum_{j=0}^{m} \frac{A_{j,12}(r)}{z+x_{j}} + 2 (\Phi\sigma_{+}\Phi^{-1})_{11}(rz;r\vec{x},\vec{s})\sum_{j=0}^{m} \frac{A_{j,11}(r)}{z+x_{j}}.
\end{multline}
Let us define 
\begin{equation}
\widehat{F}(z) = \partial_{s_{k}}\Phi(rz;r\vec{x},\vec{s}) \Phi(rz;r\vec{x},\vec{s})^{-1},
\end{equation}
where $k \in \{1,\ldots,m\}$ is fixed and omitted in the notation. From the RH problem for $\Phi$, we deduce that $\widehat{F}$ satisfies the following RH problem.
\subsubsection*{RH problem for $\widehat{F}$} 
\begin{itemize}
\item[(a)] $\widehat{F}:\mathbb{C}\setminus [-x_{k},-x_{k-1}] \to \mathbb{C}^{2\times 2}$ is analytic.
\item[(b)] $\widehat{F}$ satisfies the jumps
\begin{equation}
\widehat{F}_{+}(z) = \widehat{F}_{-}(z) + e^{\pi i \alpha}(\Phi_{-}\sigma_{+}\Phi_{-}^{-1})(rz;r\vec{x},\vec{s}), \qquad z \in (-x_{k},-x_{k-1}).
\end{equation}
\item[(c)] $\widehat{F}$ satisfies the following asymptotic behaviors
\begin{align}
& \hspace{-1cm} \widehat{F}(z) = \frac{\partial_{s_{k}}\Phi_{1}(r\vec{x},\vec{s})}{rz} + \bigO(z^{-2}), & & \mbox{as } z \to \infty, \label{F at inf} \\
& \hspace{-1cm} \widehat{F}(z) = \frac{\partial_{s_{k}}(s_{j+1}-s_{j})}{s_{j+1}-s_{j}}A_{j}(r) \log(r(z+x_{j})) + \widehat{F}_{j}+o(1), & & \mbox{as } z \to -x_{j}, \, j = 1,...,m, \nonumber
\end{align}
where $\widehat{F}_{j} = (\partial_{s_{k}}G_{j}G_{j}^{-1})(-rx_{j};r\vec{x},\vec{s})$.
Furthermore, as $z \to 0$, we have
\begin{equation}\label{F at 0}
\widehat{F}(z) = \left\{ \begin{array}{l l}
\ds \widehat{F}_{0} + o(1), & \mbox{if } \alpha > 0, \\
\ds \frac{\partial_{s_{k}}s_{1}}{s_{1}}A_{0}(r) \log(rz)+\widehat{F}_{0}+o(1), & \mbox{if } \alpha = 0, \\
\ds \frac{\partial_{s_{k}}(s_{1})(rz)^{\alpha}}{2i\sin(\pi \alpha)}(G_{0}\sigma_{+}G_{0}^{-1})(0;r\vec{x},\vec{s})+\widehat{F}_{0}+o(1), & \mbox{if } \alpha < 0,
\end{array} \right.
\end{equation}
where $\widehat{F}_{0} = (\partial_{s_{k}}G_{0}G_{0}^{-1})(0;r\vec{x},\vec{s})$.
\end{itemize}
The RH problem for $\widehat{F}$ can be solved explicitly using Cauchy's formula, we have
\begin{equation}
\widehat{F}(z) = \frac{e^{\pi i \alpha}}{2\pi i} \int_{-x_{k}}^{-x_{k-1}} \frac{(\Phi_{-} \sigma_{+} \Phi_{-}^{-1})(ru;r\vec{x},\vec{s})}{u-z}du.
\end{equation}
Expanding the above expression as $z \to \infty$ and comparing with \eqref{F at inf}, we obtain
\begin{equation}
-\frac{e^{\pi i \alpha}}{2\pi i} \int_{-x_{k}}^{-x_{k-1}}(\Phi_{-}\sigma_{+}\Phi_{-}^{-1})(ru;r\vec{x},\vec{s})du = \frac{\partial_{s_{k}}\Phi_{1}(r\vec{x},\vec{s})}{r}.
\end{equation}
Substituting \eqref{explicit integrant} into \eqref{diff identity integral form} (with $\vec{x} \mapsto r\vec{x}$), we can simplify the integral using the expansions of $\widehat{F}$ at $\infty$ and at $-x_{j}$, $j=0,1,...,m$ (given by \eqref{F at inf}-\eqref{F at 0}). Note that $\det A_{j} \equiv 0$ for $j = 1,...,m$. Therefore, the logarithmic part in the expansions of $\widehat{F}(z)$ as $z \to -x_{j}$ for $j = 1,...,m$ does not contribute in \eqref{diff identity integral form}. One concludes the same for $j = 0$ if $\alpha = 0$. If $\alpha < 0$, the $\bigO(z^{\alpha})$ term in the $z \to 0$ expansion of $\widehat{F}$ also does not contribute in \eqref{diff identity integral form}, this follows from the relation
\begin{equation}
(G_{0}\sigma_{3}G_{0}^{-1})_{21}(G_{0}\sigma_{+}G_{0}^{-1})_{12}+(G_{0}\sigma_{3}G_{0}^{-1})_{12}(G_{0}\sigma_{+}G_{0}^{-1})_{21} + 2(G_{0}\sigma_{3}G_{0}^{-1})_{11}(G_{0}\sigma_{+}G_{0}^{-1})_{11} = 0,
\end{equation}
where we have used $\det G_{0} \equiv 1$. Therefore, for any $\alpha > -1$, we obtain
\begin{equation*}
\partial_{s_{k}} \log \det (1-\mathcal{K}_{r\vec{x},\vec{s}})= -\frac{i}{2}\partial_{s_{k}} \Phi_{1,12}(r\vec{x},\vec{s})+\sum_{j=0}^{m}[A_{j,21}(r)\widehat{F}_{j,12}+A_{j,12}(r)\widehat{F}_{j,21}+2A_{j,11}(r)\widehat{F}_{j,11}].
\end{equation*}
Finally, substituting in the above equality the explicit forms for the $A_{j}$'s and $\widehat{F}_{j}$'s given by \eqref{A_j}-\eqref{A_0} and below \eqref{F at inf}-\eqref{F at 0}, and simplifying the result with the identities $\det G_{j} \equiv 1$, we obtain
\begin{equation}\label{DIFF identity final form general case}
\partial_{s_{k}} \log F_{\alpha}(r\vec{x},\vec{s})= K_{\infty} + \sum_{j=1}^{m}K_{-x_{j}} + K_{0},
\end{equation}
where
\begin{align}
& K_{\infty} = -\frac{i}{2}\partial_{s_{k}} \Phi_{1,12}(r\vec{x},\vec{s}), \label{K inf} \\
& K_{-x_{j}} = \frac{s_{j+1}-s_{j}}{2\pi i} \Big( G_{j,11}\partial_{s_{k}} G_{j,21} - G_{j,21}\partial_{s_{k}} G_{j,11} \Big)(-rx_{j};r\vec{x},\vec{s}), \label{K -xj} \\
& K_{0} = \left\{ \begin{array}{l l}
\ds \frac{s_{1}}{2\pi i} \Big( G_{0,11}\partial_{s_{k}} G_{0,21} - G_{0,21}\partial_{s_{k}} G_{0,11} \Big)(0;r\vec{x},\vec{s}) & \mbox{if } \alpha = 0, \\[0.3cm]
\alpha \Big( G_{0,21} \partial_{s_{k}}G_{0,12} - G_{0,11} \partial_{s_{k}}G_{0,22} \Big)(0;r\vec{x},\vec{s}) & \mbox{if } \alpha \neq 0.
\end{array} \right. \label{K 0}
\end{align}

\section{Large $r$ asymptotics for $\Phi$ with $s_{1} \in (0,+\infty)$}\label{Section: Steepest descent with s1>0}
In this section, we perform a Deift/Zhou steepest descent analysis to obtain large $r$ asymptotics for $\Phi(rz;r\vec{x},\vec{s})$ in different regions of the complex $z$-plane. On the level of the parameters, we assume that $s_{1},...,s_{m}$ are in a compact subset of $(0,+\infty)$ and that $x_{1},...,x_{m}$ are in a compact subset of $(0,+\infty)$ in such a way that there exists $\delta > 0$ independent of $r$ such that
\begin{equation}
\min_{1 \leq j < k \leq m} x_{k}-x_{j} \geq \delta.
\end{equation}

\subsection{Normalization of the RH problem with a $g$-function}\label{subsection: g function s1 neq 0}
In the first transformation, we normalize the behavior at $\infty$ of the RH problem for $\Phi(rz;r\vec{x},\vec{s})$ by removing the term that grows exponentially with $z$. This transformation is standard in the literature (see e.g. \cite{Deift}) and uses a so-called $g$-function. In view of \eqref{Phi inf},  we define our $g$-function by
\begin{equation}\label{def of g s1 neq 0}
g(z)=\sqrt{z},
\end{equation}
where the principal branch is taken. Define 
\begin{equation}\label{def of T s1 neq 0}
T(z)= r^{\frac{\sigma_{3}}{4}}\Phi(rz;r\vec{x},\vec{s})e^{-\sqrt{r}g(z)\sigma_{3}}.
\end{equation}
The asymptotics \eqref{Phi inf} of $\Phi$ then lead after a straightforward calculation to 
\begin{equation}
\label{eq:Tasympinf s1 neq 0}
T(z) = \left( I + \frac{T_{1}}{z} + \bigO\left(z^{-2}\right) \right) z^{-\frac{\sigma_3}{4}} N, \qquad T_{1} = r^{\frac{\sigma_{3}}{4}}\frac{\Phi_{1}(r\vec{x},\vec{s})}{r}r^{-\frac{\sigma_{3}}{4}}
\end{equation}
as $z \to\infty$. In particular, $T_{1,12} = \frac{\Phi_{1,12}}{\sqrt{r}}$. The jumps for $T$ are obtained straightforwardly from those of $\Phi$ and the relation $g_{+}(z)+g_{-}(z) = 0$ for $z \in (-\infty,0)$. Since $s_{j} \neq 0$, the jump matrix for $T$ on $(-x_{j},-x_{j-1})$ can be factorized as follows
\begin{multline}\label{factorization of the jump}
\begin{pmatrix}
e^{\pi i \alpha}e^{-2\sqrt{r}g_{+}(z)} & s_{j} \\ 0 & e^{-\pi i \alpha}e^{-2\sqrt{r}g_{-}(z)}
\end{pmatrix} = \begin{pmatrix}
1 & 0 \\
s_{j}^{-1}e^{-\pi i \alpha}e^{-2\sqrt{r}g_{-}(z)} & 1
\end{pmatrix} \\ \times \begin{pmatrix}
0 & s_{j} \\ -s_{j}^{-1} & 0
\end{pmatrix} \begin{pmatrix}
1 & 0 \\ 
s_{j}^{-1}e^{\pi i \alpha}e^{-2\sqrt{r}g_{+}(z)} & 1
\end{pmatrix}.
\end{multline}
\subsection{Opening of the lenses}\label{subsection: S with s1 neq 0}
Around each interval $(-x_{j},-x_{j-1})$, $j = 1,...,m$, we open lenses $\gamma_{j,+}$ and $\gamma_{j,-}$, lying in the upper and lower half plane respectively, as shown in Figure \ref{fig:contour for S s1 neq 0}. Let us also denote $\Omega_{j,+}$ (resp. $\Omega_{j,-}$) for the region inside the lenses around $(-x_{j},-x_{j-1})$ in the upper half plane (resp. in the lower half plane). In view of \eqref{factorization of the jump}, we define the next transformation by
\begin{equation}\label{def of S s1 neq 0}
S(z) = T(z) \prod_{j=1}^{m} \left\{ \begin{array}{l l}
\begin{pmatrix}
1 & 0 \\
-s_{j}^{-1}e^{\pi i \alpha}e^{-2\sqrt{r}g(z)} & 1
\end{pmatrix}, & \mbox{if } z \in \Omega_{j,+}, \\
\begin{pmatrix}
1 & 0 \\
s_{j}^{-1}e^{-\pi i \alpha}e^{-2\sqrt{r}g(z)} & 1
\end{pmatrix}, & \mbox{if } z \in \Omega_{j,-}, \\
I, & \mbox{if } z \in \mathbb{C}\setminus(\Omega_{j,+}\cup \Omega_{j,-}).
\end{array} \right.
\end{equation}
It is straightforward to verify from the RH problem for $\Phi$ and from Section \ref{subsection: g function s1 neq 0} that $S$ satisfies the following RH problem.
\subsubsection*{RH problem for $S$}
\begin{enumerate}[label={(\alph*)}]
\item[(a)] $S : \C \backslash \Gamma_{S} \rightarrow \C^{2\times 2}$ is analytic, with
\begin{equation}\label{eq:defGamma s1 neq 0}
\Gamma_{S}=(-\infty,0)\cup \gamma_{+}\cup \gamma_{-}, \qquad \gamma_{\pm} = \bigcup_{j=1}^{m+1} \gamma_{j,\pm},
\end{equation}
where $\gamma_{m+1,\pm} := -x_{m} + e^{\pm \frac{2\pi i}{3}}(0,+\infty)$, and $\Gamma_{S}$ is oriented as shown in Figure \ref{fig:contour for S s1 neq 0}.
\item[(b)] The jumps for $S$ are given by
\begin{align*}
& S_{+}(z) = S_{-}(z)\begin{pmatrix}
0 & s_{j} \\ -s_{j}^{-1} & 0
\end{pmatrix}, & & z \in (-x_{j},-x_{j-1}), \, j = 1,...,m+1, \\
& S_{+}(z) = S_{-}(z)\begin{pmatrix}
1 & 0 \\
s_{j}^{-1}e^{\pm \pi i \alpha}e^{-2\sqrt{r}g(z)} & 1
\end{pmatrix}, & & z \in \gamma_{j,\pm}, \, j = 1,...,m+1,
\end{align*}
where $x_{m+1} := +\infty$ (we recall that $x_{0} = 0$ and $s_{m+1} = 1$).
\item[(c)] As $z \rightarrow \infty$, we have
\begin{equation}
\label{eq:Sasympinf s1 neq 0}
S(z) = \left( I + \frac{T_1}{z} + \bigO\left( z^{-2} \right) \right) z^{-\frac{\sigma_3}{4} } N.
\end{equation}
As $z \to -x_j$ from outside the lenses, $j = 1, ..., m$, we have
\begin{equation}\label{local behaviour near -xj of S s1 neq 0}
S(z) = \begin{pmatrix}
\bigO(1) & \bigO(\log(z+x_{j})) \\
\bigO(1) & \bigO(\log(z+x_{j}))
\end{pmatrix}.
\end{equation}
As $z \to 0$ from outside the lenses, we have
\begin{equation}\label{local behaviour near 0 of S s1 neq 0}
\displaystyle S(z) = \left\{ \begin{array}{l l}
\begin{pmatrix}
\bigO(1) & \bigO(\log z) \\
\bigO(1) & \bigO(\log z) 
\end{pmatrix}, & \mbox{ if } \alpha = 0, \\[0.3cm]
\begin{pmatrix}
\bigO(1) & \bigO(1) \\
\bigO(1) & \bigO(1) 
\end{pmatrix}z^{\frac{\alpha}{2}\sigma_{3}}, & \mbox{ if } \alpha > 0, \\[0.3cm]
\begin{pmatrix}
\bigO(z^{\frac{\alpha}{2}}) & \bigO(z^{\frac{\alpha}{2}}) \\
\bigO(z^{\frac{\alpha}{2}}) & \bigO(z^{\frac{\alpha}{2}}) 
\end{pmatrix}, & \mbox{ if } \alpha < 0.
\end{array}  \right.
\end{equation}
\end{enumerate}
Since $\Re g(z) > 0$ for all $z \in \mathbb{C}\setminus (-\infty,0]$ and $\Re g_{\pm}(z) = 0$ for $z \in (-\infty,0)$, the jump matrices for $S$ tend to the identity matrix exponentially fast as $r \to + \infty$ on the lenses. This convergence is uniform for $z$ outside of fixed neighborhoods of $-x_{j}$, $j \in \{0,1,...,m\}$, but is not uniform as $r \to + \infty$ and simultaneously $z \to -x_{j}$, $j \in \{0,1,...,m\}$.
\begin{figure}
\centering
\begin{tikzpicture}
\draw[fill] (0,0) circle (0.05);
\draw (0,0) -- (8,0);
\draw (0,0) -- (120:3);
\draw (0,0) -- (-120:3);
\draw (0,0) -- (-3,0);

\draw (0,0) .. controls (1,1.3) and (2,1.3) .. (3,0);
\draw (0,0) .. controls (1,-1.3) and (2,-1.3) .. (3,0);
\draw (3,0) .. controls (3.5,1) and (4.5,1) .. (5,0);
\draw (3,0) .. controls (3.5,-1) and (4.5,-1) .. (5,0);
\draw (5,0) .. controls (6,1.3) and (7,1.3) .. (8,0);
\draw (5,0) .. controls (6,-1.3) and (7,-1.3) .. (8,0);

\draw[fill] (3,0) circle (0.05);
\draw[fill] (5,0) circle (0.05);
\draw[fill] (8,0) circle (0.05);

\node at (0.15,-0.3) {$-x_{m}$};
\node at (3,-0.3) {$-x_{2}$};
\node at (5.1,-0.3) {$-x_{1}$};
\node at (8.4,-0.3) {$0=x_{0}$};
\node at (-3,-0.3) {$-\infty=-x_{m+1}$};

\draw[black,arrows={-Triangle[length=0.18cm,width=0.12cm]}]
(-120:1.5) --  ++(60:0.001);
\draw[black,arrows={-Triangle[length=0.18cm,width=0.12cm]}]
(120:1.3) --  ++(-60:0.001);
\draw[black,arrows={-Triangle[length=0.18cm,width=0.12cm]}]
(180:1.5) --  ++(0:0.001);

\draw[black,arrows={-Triangle[length=0.18cm,width=0.12cm]}]
(0:1.5) --  ++(0:0.001);
\draw[black,arrows={-Triangle[length=0.18cm,width=0.12cm]}]
(0:4) --  ++(0:0.001);
\draw[black,arrows={-Triangle[length=0.18cm,width=0.12cm]}]
(0:6.5) --  ++(0:0.001);

\draw[black,arrows={-Triangle[length=0.18cm,width=0.12cm]}]
(1.55,0.97) --  ++(0:0.001);
\draw[black,arrows={-Triangle[length=0.18cm,width=0.12cm]}]
(1.55,-0.97) --  ++(0:0.001);

\draw[black,arrows={-Triangle[length=0.18cm,width=0.12cm]}]
(4.05,0.76) --  ++(0:0.001);
\draw[black,arrows={-Triangle[length=0.18cm,width=0.12cm]}]
(4.05,-0.76) --  ++(0:0.001);

\draw[black,arrows={-Triangle[length=0.18cm,width=0.12cm]}]
(6.55,0.97) --  ++(0:0.001);
\draw[black,arrows={-Triangle[length=0.18cm,width=0.12cm]}]
(6.55,-0.97) --  ++(0:0.001);

\end{tikzpicture}
\caption{Jump contours $\Gamma_{S}$ for the RH problem for $S$ with $m=3$ and $s_{1} \neq 0$.}
\label{fig:contour for S s1 neq 0}
\end{figure}
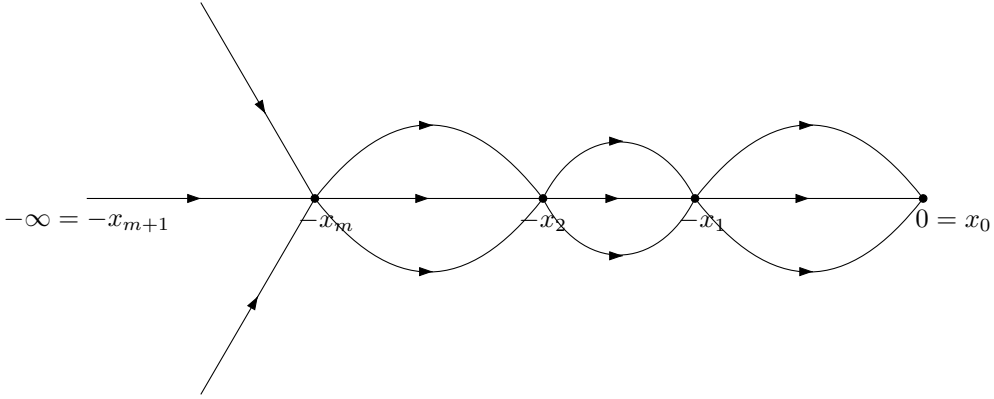
\subsection{Global parametrix}\label{subsection: Global param s1 neq 0}
By ignoring the jumps for $S$ that are pointwise exponentially close to the identity matrix as $r \to + \infty$, we are left with an RH problem which is independent of $r$, and whose solution is called the global parametrix and denoted $P^{(\infty)}$. It will appear later in Section \ref{subsection Small norm s1 neq 0} that $P^{(\infty)}$ is a good approximation for $S$ away from neighborhoods of $-x_{j}$, $j = 0,1,...,m$.
\subsubsection*{RH problem for $P^{(\infty)}$}
\begin{enumerate}[label={(\alph*)}]
\item[(a)] $P^{(\infty)} : \C \backslash (-\infty,0] \rightarrow \C^{2\times 2}$ is analytic.
\item[(b)] The jumps for $P^{(\infty)}$ are given by
\begin{align*}
& P^{(\infty)}_{+}(z) = P^{(\infty)}_{-}(z)\begin{pmatrix}
0 & s_{j} \\ -s_{j}^{-1} & 0
\end{pmatrix}, & & z \in (-x_{j},-x_{j-1}), \, j = 1,...,m+1.
\end{align*}
\item[(c)] As $z \rightarrow \infty$, we have
\begin{equation}
\label{eq:Pinf asympinf s1 neq 0}
P^{(\infty)}(z) = \left( I + \frac{P^{(\infty)}_{1}}{z} + \bigO\left( z^{-2} \right) \right) z^{-\frac{\sigma_3}{4}} N,
\end{equation}
for a certain matrix $P_{1}^{(\infty)}$ independent of $z$.
\item[(d)] As $z \to -x_{j}$, $j \in \{1,...,m\}$, we have $P^{(\infty)}(z) = \begin{pmatrix}
\bigO(1) & \bigO(1) \\
\bigO(1) & \bigO(1)
\end{pmatrix}$.

As $z \to 0$, we have $P^{(\infty)}(z) = \begin{pmatrix}
\bigO(z^{-1/4}) & \bigO(z^{-1/4}) \\
\bigO(z^{-1/4}) & \bigO(z^{-1/4})
\end{pmatrix}$.
\end{enumerate}
Note that condition (d) for the RH problem for $P^{(\infty)}$ does not come from the RH problem for $S$. It is added to ensure uniqueness of the solution. The construction of $P^{(\infty)}$ relies on a so-called Szeg\"{o} function $D$ (see \cite{Kuijlaars2}). In our case, we need to define $D$ as follows
\begin{align*}
& D(z) = \exp \left( \frac{\sqrt{z}}{2\pi} \sum_{j=1}^{m} \log s_{j} \int_{x_{j-1}}^{x_{j}}\frac{du}{\sqrt{u}(z+u)} \right).
\end{align*}
It satisfies the following jumps
\begin{align*}
& D_{+}(z)D_{-}(z) = s_{j}, & &  \mbox{for } z \in (-x_{j},-x_{j-1}), \, j = 1,...,m+1.
\end{align*}
Furthermore, as $z \to \infty$, we have
\begin{equation}
 D(z) =  \exp \left( \sum_{\ell = 1}^{k} \frac{d_{\ell}}{z^{\ell-\frac{1}{2}}} + \bigO(z^{-k-\frac{1}{2}}) \right),
\end{equation}
where $k \in \mathbb{N}_{>0}$ is arbitrary and
\begin{align}
& d_{\ell} = \frac{(-1)^{\ell - 1}}{2\pi} \sum_{j=1}^{m}  \log s_{j} \int_{x_{j-1}}^{x_{j}} u^{\ell - \frac{3}{2}}du = \frac{(-1)^{\ell -1}}{\pi (2\ell -1)}\sum_{j=1}^{m} \log s_{j} \Big( x_{j}^{\ell -\frac{1}{2}}-x_{j-1}^{\ell-\frac{1}{2}} \Big) \nonumber.
\end{align}
Let us finally define
\begin{equation}\label{def of Pinf s1 neq 0}
P^{(\infty)}(z) = \begin{pmatrix}
1 & 0 \\ id_{1} & 1
\end{pmatrix}z^{-\frac{\sigma_{3}}{4}}ND(z)^{-\sigma_{3}},
\end{equation}
where the principal branch is taken for the root. From the above properties of $D$, one can check that $P^{(\infty)}$ satisfies criteria (a), (b) and (c) of the RH problem for $P^{(\infty)}$, with
\begin{equation}\label{Pinf 1 12 s1 neq 0}
P_{1,12}^{(\infty)} = i d_{1}.
\end{equation}
The rest of the current section is devoted to the computations of the first terms in the asymptotics of $D(z)$ as $z \to -x_{j}$, $j = 0,1,...,m$. It will in particular prove that $P^{(\infty)}$ defined in \eqref{def of Pinf s1 neq 0} satisfies condition (d) of the RH problem for $P^{(\infty)}$. After integrations, we can rewrite $D$ as follows
\begin{equation}\label{lol9}
D(z) = \prod_{j=1}^{m} D_{s_{j}}(z),
\end{equation}
where
\begin{equation}\label{lol8}
D_{s_{j}}(z) = \left( \frac{(\sqrt{z}-i\sqrt{x_{j-1}})(\sqrt{z}+i\sqrt{x_{j}})}{(\sqrt{z}-i\sqrt{x_{j}})(\sqrt{z}+i\sqrt{x_{j-1}})} \right)^{\frac{\log s_{j}}{2\pi i}}.
\end{equation}
As $z \to -x_{j}$, $j \in \{1,...,m\}$, $\Im z > 0$, we have
\begin{equation}
D_{s_{j}}(z) = \sqrt{s_{j}}T_{j,j}^{\frac{\log s_{j}}{2\pi i}}(z+x_{j})^{-\frac{\log s_{j}}{2\pi i}}(1+\bigO(z+x_{j})), \quad T_{j,j} = 4x_{j}\frac{\sqrt{x_{j}}-\sqrt{x_{j-1}}}{\sqrt{x_{j}}+\sqrt{x_{j-1}}}.
\end{equation}
As $z \to -x_{j-1}$, $j \in \{2,...,m\}$, $\Im z > 0$, we have
\begin{equation}
D_{s_{j}}(z) = T_{j,j-1}^{\frac{\log s_{j}}{2\pi i}}(z+x_{j-1})^{\frac{\log s_{j}}{2\pi i}}(1+\bigO(z+x_{j-1})), \quad T_{j,j-1} = \frac{1}{4x_{j-1}}\frac{\sqrt{x_{j}}+\sqrt{x_{j-1}}}{\sqrt{x_{j}}-\sqrt{x_{j-1}}}.
\end{equation}
For $j \in \{1,...,m\}$, as $z \to -x_{k}$, $k \in \{1,...,m\}$, $k \neq j,j-1$, $\Im z > 0$, we have
\begin{equation}\label{lol7}
D_{s_{j}}(z) = T_{j,k}^{\frac{\log s_{j}}{2\pi i}}(1+\bigO(z+x_{k})), \quad T_{j,k} = \frac{(\sqrt{x_{k}}-\sqrt{x_{j-1}})(\sqrt{x_{k}}+\sqrt{x_{j}})}{(\sqrt{x_{k}}-\sqrt{x_{j}})(\sqrt{x_{k}}+\sqrt{x_{j-1}})}.
\end{equation}
From the above expansions, we obtain, as $z \to -x_{j}$, $j \in \{1,...,m\}$, $\Im z > 0$ that
\begin{equation}\label{lol6}
D(z) = \sqrt{s_{j}}\Big( \prod_{k=1}^{m} T_{k,j}^{\frac{\log s_{k}}{2\pi i}} \Big) (z+x_{j})^{\beta_{j}}(1+\bigO(z+x_{j})),
\end{equation}
where $\beta_{j}$ is given by that
\begin{equation}\label{def of beta_j s1 neq 0}
\beta_{j} = \frac{1}{2\pi i}\log \frac{s_{j+1}}{s_{j}}, \quad \mbox{ or equivalently } \quad e^{-2i\pi \beta_{j}} = \frac{s_{j}}{s_{j+1}}, \qquad j = 1,...,m.
\end{equation}
with $s_{m+1} := 1$.
It will be more convenient to rewrite the product in \eqref{lol6} in terms of the $\beta_{k}$'s as follows
\begin{equation}\label{T and T tilde s1 neq 0}
\prod_{k=1}^{m} T_{k,j}^{\frac{\log s_{k}}{2\pi i}} = (4x_{j})^{-\beta_{j}}\prod_{\substack{k=1 \\ k \neq j}}^{m} \widetilde{T}_{k,j}^{-\beta_{k}}, \quad \mbox{ where } \quad \widetilde{T}_{k,j} = \frac{\sqrt{x_{j}}+\sqrt{x_{k}}}{|\sqrt{x_{j}}-\sqrt{x_{k}}|}.
\end{equation}
We will also need the first two terms of the asymptotics of $D$ at the origin. From \eqref{lol9}-\eqref{lol8}, we obtain
\begin{equation}\label{asymp D at 0 s1 neq 0}
D(z) = \sqrt{s_{1}}\Big(1-d_{0}\sqrt{z}+\bigO(z)\Big), \qquad \mbox{ as } z \to 0,
\end{equation}
where
\begin{equation}
d_{0} = \frac{\log s_{1}}{\pi \sqrt{x_{1}}} - \sum_{j=2}^{m} \frac{\log s_{j}}{\pi}\Big( \frac{1}{\sqrt{x_{j-1}}}-\frac{1}{\sqrt{x_{j}}} \Big).
\end{equation}
Note that for all $\ell \in \{0,1,2,...\}$, we can rewrite $d_{\ell}$ in terms of the $\beta_{j}$'s as follows
\begin{equation}\label{d_ell in terms of beta_j s1 neq 0}
d_{\ell} = \frac{2i(-1)^{\ell}}{2\ell-1}\sum_{j=1}^{m} \beta_{j} x_{j}^{\ell-\frac{1}{2}}.
\end{equation}

\subsection{Local parametrices}
In this section, we aim to find approximations for $S$ in small neighborhoods of $0$,$-x_{1}$,...,$-x_{m}$. This is the part of the RH analysis where we use the assumption that there exists $\delta > 0$ such that \eqref{condition on xj in terms of delta} holds. By \eqref{condition on xj in terms of delta}, there exist small disks $\mathcal{D}_{-x_{j}}$ centred at  $-x_{j}$, $j=0,1,...,m$, whose radii are fixed (independent of $r$), but sufficiently small such that they do not intersect. The local parametrix around $-x_{j}$, $j \in \{0,1,...,m\}$, is defined in $\mathcal{D}_{-x_{j}}$ and is denoted by $P^{(-x_{j})}$. It satisfies an RH problem with the same jumps as $S$ (inside $\mathcal{D}_{-x_{j}}$) and a behavior near $-x_{j}$ ``close" to $S$. Furthermore, on the boundary of the disk, $P^{(-x_{j})}$ needs to ``match" with $P^{(\infty)}$ (called the matching condition). More precisely, we require
\begin{equation}\label{match at the center s1 neq 0}
S(z) P^{(-x_{j})}(z)^{-1} = \bigO(1), \qquad \mbox{ as } z \to -x_{j},
\end{equation}
and
\begin{equation}\label{matching weak s1 neq 0}
P^{(-x_{j})}(z) = (I+o(1))P^{(\infty)}(z), \qquad \mbox{ as } r \to +\infty,
\end{equation}
uniformly for $z \in \partial \mathcal{D}_{-x_{j}}$.
\subsubsection{Local parametrices around $-x_{j}$, $j = 1,...,m$}\label{subsection: local param HG s1 neq 0}
For $j \in \{1,...,m\}$, $P^{(-x_{j})}$ can be explicitly expressed in terms of confluent hypergeometric functions. This construction is standard (see e.g. \cite{ItsKrasovsky,FoulquieMartinezSousa}) and involves a model RH problem whose solution is denoted $\Phi_{\mathrm{HG}}$ (the details are presented in the appendix, Section \ref{subsection: model RHP with HG functions}). Let us first consider the function
\begin{equation}
f_{-x_{j}}(z) = -2 \left\{ \begin{array}{l l}
g(z)-g_{+}(-x_{j}), & \mbox{if } \Im z > 0 \\
-(g(z)-g_{-}(-x_{j})), & \mbox{if } \Im z < 0
\end{array} \right. = -2i\big(\sqrt{-z}-\sqrt{x_{j}}\big).
\end{equation}
This is a conformal map from $\mathcal{D}_{-x_{j}}$ to a neighborhood of $0$ and its expansion as $z \to -x_{j}$ is given by
\begin{equation}\label{expansion conformal map s1 neq 0}
f_{-x_{j}}(z) = i c_{-x_{j}} (z+x_{j})(1+\bigO(z+x_{j})), \quad \mbox{ with } \quad c_{-x_{j}} = \frac{1}{\sqrt{x_{j}}} > 0.
\end{equation}
Note also that $f_{-x_{j}}(\mathbb{R}\cap \mathcal{D}_{-x_{j}})\subset i \mathbb{R}$. Now, we use the freedom we had in the choice of the lenses by requiring that $f_{-x_{j}}$ maps the jump contour for $P^{(-x_{j})}$ onto a subset of $\Sigma_{\mathrm{HG}}$ (see Figure \ref{Fig:HG}):
\begin{equation}\label{deformation of the lenses local param xj s1 neq 0}
f_{-x_{j}}((\gamma_{j+1,+}\cup \gamma_{j,+})\cap \mathcal{D}_{-x_{j}}) \subset \Gamma_{3} \cup \Gamma_{2}, \qquad f_{-x_{j}}((\gamma_{j+1,-}\cup \gamma_{j,-})\cap \mathcal{D}_{-x_{j}}) \subset \Gamma_{5} \cup \Gamma_{6},
\end{equation}
where $\Gamma_{3}$, $\Gamma_{2}$, $\Gamma_{5}$ and $\Gamma_{6}$ are as shown in Figure \ref{Fig:HG}.
Let us define $P^{(-x_{j})}$ by
\begin{equation}\label{lol10}
P^{(-x_{j})}(z) = E_{-x_{j}}(z) \Phi_{\mathrm{HG}}(\sqrt{r}f_{-x_{j}}(z);\beta_{j})(s_{j}s_{j+1})^{-\frac{\sigma_{3}}{4}}e^{-\sqrt{r}g(z)\sigma_{3}}e^{\frac{\pi i \alpha}{2}\theta(z)\sigma_{3}},
\end{equation}
where $E_{-x_{j}}$ is analytic inside $\mathcal{D}_{-x_{j}}$ (and will be determined explicitly below) and where the parameter $\beta_{j}$ for $\Phi_{\mathrm{HG}}$ is given by \eqref{def of beta_j s1 neq 0}. Since $E_{-x_{j}}$ is analytic, it is straightforward from the jumps for $\Phi_{\mathrm{HG}}$ (given by \eqref{jumps PHG3}) to verify that $P^{(-x_{j})}$ given by \eqref{lol10} satisfies the same jumps as $S$ inside $\mathcal{D}_{-x_{j}}$. In order to fulfil the matching condition \eqref{matching weak s1 neq 0}, using \eqref{Asymptotics HG}, we need to choose
\begin{multline}\label{def of Ej s1 neq 0}
E_{-x_{j}}(z) = P^{(\infty)}(z) e^{-\frac{\pi i \alpha}{2}\theta(z)\sigma_{3}} (s_{j} s_{j+1})^{\frac{\sigma_{3}}{4}} \left\{ \begin{array}{l l}
\ds \sqrt{\frac{s_{j}}{s_{j+1}}}^{\sigma_{3}}, & \Im z > 0 \\
\begin{pmatrix}
0 & 1 \\ -1 & 0
\end{pmatrix}, & \Im z < 0
\end{array} \right\} \times \\ e^{\sqrt{r}g_{+}(-x_{j})\sigma_{3}}(\sqrt{r}f_{-x_{j}}(z))^{\beta_{j}\sigma_{3}}.
\end{multline}
It can be verified from the jumps for $P^{(\infty)}$ that $E_{-x_{j}}$ defined by \eqref{def of Ej s1 neq 0} has no jump at all inside $\mathcal{D}_{-x_{j}}$. Furthermore, $E_{-x_{j}}(z)$ is bounded as $z \to -x_{j}$ and $E_{-x_{j}}$ is then analytic in the full disk $\mathcal{D}_{-x_{j}}$, as desired. Since $P^{(-x_{j})}$ and $S$ have exactly the same jumps on $(\mathbb{R}\cup \gamma_{+}\cup \gamma_{-})\cap \mathcal{D}_{-x_{j}}$, $S(z)P^{(-x_{j})}(z)^{-1}$ is analytic in $\mathcal{D}_{-x_{j}} \setminus \{-x_{j}\}$. As $z \to -x_{j}$ from outside the lenses, by condition (d) in the RH problem for $S$ and by \eqref{lol 35}, $S(z)P^{(-x_{j})}(z)^{-1}$ behaves as $\bigO(\log(z+x_{j}))$. This means that the singularity is removable and \eqref{match at the center s1 neq 0} holds. We will need later a more detailed knowledge than \eqref{matching weak s1 neq 0}. Using \eqref{Asymptotics HG}, one shows that
\begin{equation}\label{matching strong -x_j s1 neq 0}
P^{(-x_{j})}(z)P^{(\infty)}(z)^{-1} = I + \frac{1}{\sqrt{r}f_{-x_{j}}(z)}E_{-x_{j}}(z) \Phi_{\mathrm{HG},1}(\beta_{j})E_{-x_{j}}(z)^{-1} + \bigO(r^{-1}),
\end{equation}
as $r \to + \infty$, uniformly for $z \in \partial \mathcal{D}_{-x_{j}}$, where $\Phi_{\mathrm{HG},1}(\beta_{j})$ is given by \eqref{def of tau} (with $\beta_{j}$ given by \eqref{def of beta_j s1 neq 0}). Also, a direct computation using \eqref{def of Pinf s1 neq 0}, \eqref{lol6}-\eqref{T and T tilde s1 neq 0} and \eqref{expansion conformal map s1 neq 0} shows that
\begin{equation}\label{E_j at -x_j s1 neq 0}
E_{-x_{j}}(-x_{j}) = \begin{pmatrix}
1 & 0 \\ id_{1} & 1
\end{pmatrix} e^{-\frac{\pi i}{4}\sigma_{3}} x_{j}^{-\frac{\sigma_{3}}{4}}N\Lambda_{j}^{\sigma_{3}},
\end{equation}
where
\begin{equation}\label{def Lambda_j s1 neq 0}
\Lambda_{j} = e^{-\frac{\pi i \alpha}{2}} (4x_{j})^{\beta_{j}} \bigg( \prod_{\substack{k=1 \\ k \neq j}}^{m} \widetilde{T}_{k,j}^{\beta_{k}} \bigg)e^{\sqrt{r}g_{+}(-x_{j})}r^{\frac{\beta_{j}}{2}}c_{-x_{j}}^{\beta_{j}}.
\end{equation}
\subsubsection{Local parametrices around $0$}\label{subsection: local param bessel s1 neq 0}
The local parametrix $P^{(0)}$ can be constructed in terms of Bessel functions, and relies on the model RH problem for $\Phi_{\mathrm{Be}}$ (this model RH problem is well-known, see e.g. \cite{KMcLVAV}, and is presented in the appendix, Section \ref{subsection:Model Bessel}). Let us first consider the function
\begin{equation}\label{conformal map near 0 s1 neq 0}
f_{0}(z) = \frac{g(z)^{2}}{4} = \frac{z}{4}.
\end{equation}
This is a conformal map from $\mathcal{D}_{0}$ to a neighborhood of $0$. Similarly to the previous local parametrices, we use the freedom in the choice of the lenses by requiring that
\begin{equation}\label{deformation of the lenses local param 0 s1 neq 0}
f_{0}(\gamma_{1,+}) \subset e^{\frac{2\pi i}{3}}\mathbb{R}^{+}, \qquad f_{0}(\gamma_{1,-}) \subset e^{-\frac{2\pi i}{3}}\mathbb{R}^{+}.
\end{equation} 
Thus the jump contour for $P^{(0)}$ is mapped by $f_{0}$ onto a subset of $\Sigma_{\mathrm{Be}}$ ($\Sigma_{\mathrm{Be}}$ is the jump contour for $\Phi_{\mathrm{Be}}$, see Figure \ref{figBessel}). We take $P^{(0)}$ in the form
\begin{equation}\label{def of P^-x1 s1 neq 0}
P^{(0)}(z) = E_{0}(z)\Phi_{\mathrm{Be}}(rf_{0}(z);\alpha)s_{1}^{-\frac{\sigma_{3}}{2}}e^{-\sqrt{r}g(z)\sigma_{3}},
\end{equation}
where $E_{0}$ is analytic inside $\mathcal{D}_{0}$ (and will be determined below). From \eqref{Jump for P_Be}, it is straightforward to verify that $P^{(0)}$ given by \eqref{def of P^-x1 s1 neq 0} has the same jumps as $S$ inside $\mathcal{D}_{0}$. In order to satisfy the matching condition \eqref{matching weak s1 neq 0}, by \eqref{large z asymptotics Bessel}, we defined $E_{0}$ by
\begin{equation}
E_{0}(z) = P^{(\infty)}(z)s_{1}^{\frac{\sigma_{3}}{2}}N^{-1}\left( 2\pi \sqrt{r}f_{0}(z)^{1/2} \right)^{\frac{\sigma_{3}}{2}}.
\end{equation}
It can be verified from the jumps for $P^{(\infty)}$ that $E_{0}$ has no jumps in $\mathcal{D}_{0}$, and has a removable singularity at $0$. Therefore, $E_{0}$ is analytic in $\mathcal{D}_{0}$. We will need later a more detailed knowledge of \eqref{matching weak s1 neq 0}. Using \eqref{large z asymptotics Bessel}, one shows that
\begin{equation}\label{matching strong 0 s1 neq 0}
P^{(0)}(z)P^{(\infty)}(z)^{-1} = I + \frac{1}{\sqrt{r}f_{0}(z)^{1/2}}P^{(\infty)}(z)s_{1}^{\frac{\sigma_{3}}{2}}\Phi_{\mathrm{Be},1}(\alpha)s_{1}^{-\frac{\sigma_{3}}{2}}P^{(\infty)}(z)^{-1} + \bigO(r^{-1}),
\end{equation}
as $r \to +\infty$ uniformly for $z \in \partial \mathcal{D}_{0}$, where $\Phi_{\mathrm{Be},1}(\alpha)$ is given below \eqref{large z asymptotics Bessel}. Furthermore, using \eqref{def of Pinf s1 neq 0}, \eqref{asymp D at 0 s1 neq 0} and \eqref{conformal map near 0 s1 neq 0}, we obtain
\begin{equation}\label{E0 at 0 s1 neq 0}
E_{0}(0) = \begin{pmatrix}
1 & 0 \\ id_{1} & 1
\end{pmatrix} \begin{pmatrix}
1 & -id_{0} \\
0 & 1
\end{pmatrix} (\pi \sqrt{r})^{\frac{\sigma_{3}}{2}}.
\end{equation}

\subsection{Small norm problem}\label{subsection Small norm s1 neq 0}

\begin{figure}
\centering
\begin{tikzpicture}
\draw[fill] (0,0) circle (0.05);
\draw (0,0) circle (0.5);

\draw (120:0.5) -- (120:3);
\draw (-120:0.5) -- (-120:3);

\draw ($(0,0)+(60:0.5)$) .. controls (1,1.15) and (2,1.15) .. ($(3,0)+(120:0.5)$);
\draw ($(0,0)+(-60:0.5)$) .. controls (1,-1.15) and (2,-1.15) .. ($(3,0)+(-120:0.5)$);
\draw ($(3,0)+(60:0.5)$) .. controls (3.7,0.88) and (4.3,0.88) .. ($(5,0)+(120:0.5)$);
\draw ($(3,0)+(-60:0.5)$) .. controls (3.7,-0.88) and (4.3,-0.88) .. ($(5,0)+(-120:0.5)$);
\draw ($(5,0)+(60:0.5)$) .. controls (6,1.15) and (7,1.15) .. ($(8,0)+(120:0.5)$);
\draw ($(5,0)+(-60:0.5)$) .. controls (6,-1.15) and (7,-1.15) .. ($(8,0)+(-120:0.5)$);

\draw[fill] (3,0) circle (0.05);
\draw (3,0) circle (0.5);
\draw[fill] (5,0) circle (0.05);
\draw (5,0) circle (0.5);
\draw[fill] (8,0) circle (0.05);
\draw (8,0) circle (0.5);

\node at (0.,-0.2) {$-x_{m}$};
\node at (3,-0.2) {$-x_{2}$};
\node at (5,-0.2) {$-x_{1}$};
\node at (8,-0.25) {$0$};

\draw[black,arrows={-Triangle[length=0.18cm,width=0.12cm]}]
(-120:1.5) --  ++(60:0.001);
\draw[black,arrows={-Triangle[length=0.18cm,width=0.12cm]}]
(120:1.3) --  ++(-60:0.001);
\draw[black,arrows={-Triangle[length=0.18cm,width=0.12cm]}]
($(0.08,0)+(90:0.5)$) --  ++(0:0.001);

\draw[black,arrows={-Triangle[length=0.18cm,width=0.12cm]}]
($(3.08,0)+(90:0.5)$) --  ++(0:0.001);
\draw[black,arrows={-Triangle[length=0.18cm,width=0.12cm]}]
($(5.08,0)+(90:0.5)$) --  ++(0:0.001);
\draw[black,arrows={-Triangle[length=0.18cm,width=0.12cm]}]
($(8.08,0)+(90:0.5)$) --  ++(0:0.001);

\draw[black,arrows={-Triangle[length=0.18cm,width=0.12cm]}]
(1.55,0.97) --  ++(0:0.001);
\draw[black,arrows={-Triangle[length=0.18cm,width=0.12cm]}]
(1.55,-0.97) --  ++(0:0.001);

\draw[black,arrows={-Triangle[length=0.18cm,width=0.12cm]}]
(4.05,0.76) --  ++(0:0.001);
\draw[black,arrows={-Triangle[length=0.18cm,width=0.12cm]}]
(4.05,-0.76) --  ++(0:0.001);

\draw[black,arrows={-Triangle[length=0.18cm,width=0.12cm]}]
(6.55,0.97) --  ++(0:0.001);
\draw[black,arrows={-Triangle[length=0.18cm,width=0.12cm]}]
(6.55,-0.97) --  ++(0:0.001);

\end{tikzpicture}
\caption{Jump contours $\Sigma_{R}$ for the RH problem for $R$ with $m=3$ and $s_{1} \neq 0$.}
\label{fig:contour for R s1 neq 0}
\end{figure}

The last transformation of the steepest descent is defined by
\begin{equation}\label{def of R s1 neq 0}
R(z) = \left\{ \begin{array}{l l}
S(z)P^{(\infty)}(z)^{-1}, & \mbox{for } z \in \mathbb{C}\setminus \bigcup_{j=0}^{m}\mathcal{D}_{-x_{j}}, \\
S(z)P^{(-x_{j})}(z)^{-1}, & \mbox{for } z \in \mathcal{D}_{-x_{j}}, \, j \in \{0,1,...,m\}.
\end{array} \right.
\end{equation}
By definition of the local parametrices, $R$ has no jumps and is bounded (by \eqref{match at the center s1 neq 0}) inside the $m+1$ disks. Therefore, $R$ is analytic on $\mathbb{C}\setminus \Sigma_{R}$, where $\Sigma_{R}$ consists of the boundaries of the disks, and the part of the lenses away from the disks, as shown in Figure \ref{fig:contour for R s1 neq 0}. For $z \in \Sigma_{R} \cap (\gamma_{+} \cup  \gamma_{-})$, from \eqref{def of Pinf s1 neq 0} and from the discussion at the end of Section \ref{subsection: S with s1 neq 0}, the jumps $J_{R} := R_{-}^{-1}R_{+}$ satisfy
\begin{equation}\label{estimates jumps for R lenses}
J_{R}(z) = P^{(\infty)}(z)S_{-}(z)^{-1}S_{+}(z)P^{(\infty)}(z)^{-1} = I + \bigO(e^{-c\sqrt{r}\sqrt{z}}), \qquad \mbox{as } r \to + \infty,
\end{equation}
for a certain $c>0$. Let us orient the boundaries of the disks in the clockwise direction (as in Figure \ref{fig:contour for R s1 neq 0}). For $z \in \bigcup_{j=0}^{m} \partial \mathcal{D}_{-x_{j}}$, from \eqref{matching strong -x_j s1 neq 0} and \eqref{matching strong 0 s1 neq 0}, we have
\begin{equation}\label{estimates jumps for R disks}
J_{R}(z) = P^{(\infty)}(z)P^{(-x_{j})}(z)^{-1} = I + \bigO \bigg(\frac{1}{\sqrt{r}} \bigg), \qquad \mbox{ as } r \to  +\infty.
\end{equation}
Therefore, $R$ satisfies a small norm RH problem. By standard theory for small norm RH problems \cite{DKMVZ2,DKMVZ1}, $R$ exists for sufficiently large $r$ and satisfies 
\begin{align}
& R(z) = I + \frac{R^{(1)}(z)}{\sqrt{r}} + \bigO(r^{-1}), \qquad R^{(1)}(z) = \bigO(1), & & \mbox{ as } r \to  +\infty \label{eq: asymp R inf s1 neq 0} 
\end{align}
uniformly for $z \in \mathbb{C}\setminus \Sigma_{R}$. Also, the factors $\sqrt{r}^{\pm \beta_{j}}$ in the entries of $E_{-x_{j}}$ (see \eqref{def of Ej s1 neq 0}) induce factors of the form $\sqrt{r}^{\pm 2\beta_{j}}$ in the entries of $J_{R}$ (see \eqref{matching strong -x_j s1 neq 0}). Thus, we have
\begin{equation}\label{eq: asymp der beta R inf s1 neq 0}
\partial_{\beta_{j}}R(z) = \frac{\partial_{\beta_{j}}R^{(1)}(z)}{\sqrt{r}} + \bigO \bigg( \frac{\log r}{r} \bigg), \qquad \partial_{\beta_{j}}R^{(1)}(z) = \bigO(\log r), \qquad \mbox{ as } r \to  +\infty.
\end{equation}
Furthermore, since the asymptotics \eqref{estimates jumps for R lenses} and \eqref{estimates jumps for R disks} hold uniformly for $\beta_{1},...,\beta_{m}$ in compact subsets of $i \mathbb{R}$, and uniformly in $x_{1},...,x_{m}$ in compact subsets of $(0,+\infty)$ as long as there exists $\delta > 0$ which satisfies \eqref{condition on xj in terms of delta}, the asymptotics \eqref{eq: asymp R inf s1 neq 0} and \eqref{eq: asymp der beta R inf s1 neq 0} also hold uniformly in $\beta_{1},...,\beta_{m},x_{1},...,x_{m}$ in the same way.

\vspace{0.2cm}\hspace{-0.55cm}The goal for the rest of this section is to obtain $R^{(1)}(z)$ for $z \in \mathbb{C}\setminus \bigcup_{j=0}^{m}\mathcal{D}_{-x_{j}}$ and for $z = 0$ explicitly. Since $R$ satisfies the equation
\begin{equation}
R(z) = I + \frac{1}{2\pi i} \int_{\Sigma_{R}} \frac{R_{-}(s)(J_{R}(s)-I)}{s-z}ds
\end{equation}
and since
\begin{equation}
J_{R}(z) = I + \frac{J_{R}^{(1)}(z)}{\sqrt{r}} + \bigO(r^{-1}), \qquad J_{R}^{(1)}(z) = \bigO(1),
\end{equation}
as $r \to \infty$ uniformly for $z \in \bigcup_{j=0}^{m}\mathcal{D}_{-x_{j}}$, we obtain that $R^{(1)}$ is simply given by
\begin{equation}
R^{(1)}(z) = \frac{1}{2\pi i}\int_{\bigcup_{j=0}^{m}\partial\mathcal{D}_{-x_{j}}} \frac{J_{R}^{(1)}(s)}{s-z}ds.
\end{equation}
We recall that the expressions for $J_{R}^{(1)}$ are given by \eqref{matching strong -x_j s1 neq 0} and \eqref{matching strong 0 s1 neq 0}. These expressions can be analytically continued on the interior of the disks, except at the centers where they have poles of order 1. Since the disks are oriented in the clockwise direction, by a direct residue calculation we have
\begin{equation}\label{expression for R^1 s1 neq 0}
R^{(1)}(z) = \sum_{j=0}^{m} \frac{1}{z+x_{j}}\mbox{Res}(J_{R}^{(1)}(s),s = -x_{j}), \qquad \mbox{ for } z \in \mathbb{C}\setminus \bigcup_{j=0}^{m}\mathcal{D}_{-x_{j}},
\end{equation}
and
\begin{equation}\label{expression for R^1 at 0 s1 neq 0}
R^{(1)}(0) = -\mbox{Res}\Big(\frac{J_{R}^{(1)}(s)}{s},s = 0\Big)+ \sum_{j=1}^{m} \frac{1}{x_{j}}\mbox{Res}(J_{R}^{(1)}(s),s = -x_{j}).
\end{equation}
From \eqref{asymp D at 0 s1 neq 0}, \eqref{conformal map near 0 s1 neq 0} and \eqref{matching strong 0 s1 neq 0}, we obtain
\begin{equation}
\mbox{Res}(J_{R}^{(1)}(s),s = 0) = \frac{d_{1}(1-4\alpha^{2})}{8}\begin{pmatrix}
-1 & -id_{1}^{-1} \\ -id_{1} & 1
\end{pmatrix},
\end{equation}
and with increasing effort
\begin{equation}
\mbox{Res}\Big(\frac{J_{R}^{(1)}(s)}{s},s = 0\Big) = \frac{1}{2} \begin{pmatrix}
-(d_{0}+d_{1}d_{0}^{2}) & -id_{0}^{2} \\
-i \big( \alpha^{2} + d_{0}^{2}d_{1}^{2}+2d_{0}d_{1}+\frac{3}{4} \big) & d_{0} + d_{1}d_{0}^{2}
\end{pmatrix}.
\end{equation}
From \eqref{expansion conformal map s1 neq 0} and \eqref{matching strong -x_j s1 neq 0}-\eqref{def Lambda_j s1 neq 0}, for $j \in \{1,...,m\}$, we have
\begin{multline*}
\mbox{Res}\left( J_{R}^{(1)}(s),s=-x_{j} \right) = \frac{\beta_{j}^{2}}{ic_{-x_{j}}}\begin{pmatrix}
1 & 0 \\ id_{1} & 1
\end{pmatrix}e^{-\frac{\pi i}{4}\sigma_{3}}x_{j}^{-\frac{\sigma_{3}}{4}} N \begin{pmatrix}
-1 & \widetilde{\Lambda}_{j,1} \\ -\widetilde{\Lambda}_{j,2} & 1
\end{pmatrix} \\ \times N^{-1} x_{j}^{\frac{\sigma_{3}}{4}}e^{\frac{\pi i}{4}\sigma_{3}}\begin{pmatrix}
1 & 0 \\ -id_{1} & 1
\end{pmatrix},
\end{multline*}
where
\begin{equation}
\widetilde{\Lambda}_{j,1} = \tau(\beta_{j})\Lambda_{j}^{2} \qquad \mbox{ and } \qquad \widetilde{\Lambda}_{j,2} = \tau(-\beta_{j})\Lambda_{j}^{-2}.
\end{equation}

\section{Proof of Theorem \ref{thm:s1 neq 0}}\label{Section: integration s1 >0}
This section is divided into two parts. In the first part, using the RH analysis done in Section \ref{Section: Steepest descent with s1>0}, we find large $r$ asymptotics for the differential identity
\begin{equation}
\partial_{s_{k}} \log F_{\alpha}(r\vec{x},\vec{s})= K_{\infty} + \sum_{j=1}^{m}K_{-x_{j}} + K_{0},
\end{equation}
which was obtained in \eqref{DIFF identity final form general case} with the quantities $K_{\infty}$, $K_{-x_{j}}$ and $K_{0}$ defined in \eqref{K inf}-\eqref{K 0}. In the second part, we integrate these asymptotics over the parameters $s_{1}$,...,$s_{m}$.

\subsection{Large $r$ asymptotics for the differential identity}

\paragraph{Asymptotics for $K_{\infty}$.} For $z$ outside the disks and outside the lenses, by \eqref{def of R s1 neq 0} we have
\begin{equation}
S(z) = R(z) P^{(\infty)}(z).
\end{equation}
As $z \to \infty$, we can write
\begin{equation}
R(z) = I + \frac{R_{1}}{z} + \bigO(z^{-2}),
\end{equation}
for a certain matrix $R_{1}$ independent of $z$. Thus, by \eqref{eq:Sasympinf s1 neq 0} and \eqref{eq:Pinf asympinf s1 neq 0}, we have
\begin{align*}
& T_{1} = R_{1} + P_{1}^{(\infty)}.
\end{align*}
From \eqref{eq: asymp R inf s1 neq 0} and the above expression, we infer that
\begin{align*}
& T_{1} = P_{1}^{(\infty)} + \frac{R_{1}^{(1)}}{\sqrt{r}} + \bigO(r^{-1}), \qquad \mbox{as } r \to + \infty,
\end{align*}
where $R_{1}^{(1)}$ is defined through the expansion
\begin{equation}
R^{(1)}(z) = \frac{R_{1}^{(1)}}{z} + \bigO(z^{-2}), \qquad \mbox{ as } z \to \infty.
\end{equation}
Using \eqref{K inf}, \eqref{eq:Tasympinf s1 neq 0}, \eqref{Pinf 1 12 s1 neq 0}, \eqref{eq: asymp der beta R inf s1 neq 0} and \eqref{expression for R^1 s1 neq 0}, the first part of the differential identity $K_{\infty}$ is given by
\begin{multline}\label{K inf asymp s1 neq 0}
K_{\infty} = - \frac{i}{2}\sqrt{r}\partial_{s_{k}}T_{1,12} = -\frac{i}{2}\bigg( \partial_{s_{k}} P_{1,12}^{(\infty)}\sqrt{r} + \partial_{s_{k}}R_{1,12}^{(1)} + \bigO \bigg( \frac{\log r}{\sqrt{r}} \bigg) \bigg) \\ = \frac{1}{2}\partial_{s_{k}}d_{1} \sqrt{r} - \sum_{j=1}^{m} \frac{ \partial_{s_{k}}\big( \beta_{j}^{2}(\widetilde{\Lambda}_{j,1}-\widetilde{\Lambda}_{j,2}+2i) \big)}{4i c_{-x_{j}}\sqrt{x_{j}}}+ \bigO \bigg( \frac{\log r}{\sqrt{r}} \bigg).
\end{multline}
\paragraph{Asymptotics for $K_{-x_{j}}$ with $j \in \{1,...,m\}$.} By inverting the transformations \eqref{def of S s1 neq 0} and \eqref{def of R s1 neq 0}, and using the expression for $P^{(-x_{j})}$ given by  \eqref{lol10}, for $z$ outside the lenses and inside $\mathcal{D}_{-x_{j}}$, we have
\begin{equation}\label{lol11}
T(z) = R(z)E_{-x_{j}}(z)\Phi_{\mathrm{HG}}(\sqrt{r}f_{-x_{j}}(z);\beta_{j})(s_{j}s_{j+1})^{-\frac{\sigma_{3}}{4}}e^{\frac{\pi i \alpha}{2}\theta(z) \sigma_{3}}e^{-\sqrt{r}g(z)\sigma_{3}}.
\end{equation}
If furthermore $\Im z > 0$, then by \eqref{expansion conformal map s1 neq 0} and \eqref{model RHP HG in different sector} we have
\begin{equation}
\Phi_{\mathrm{HG}}(\sqrt{r}f_{-x_{j}}(z);\beta_{j}) = \widehat{\Phi}_{\mathrm{HG}}(\sqrt{r}f_{-x_{j}}(z);\beta_{j}).
\end{equation}
Note from \eqref{def of beta_j s1 neq 0} and the connection formula for the $\Gamma$-function (see e.g. \cite[equation 5.5.3]{NIST}) that
\begin{equation}\label{relation Gamma beta_j and s_j s1 neq 0}
\frac{\sin (\pi \beta_{j})}{\pi} = \frac{1}{\Gamma(\beta_{j})\Gamma(1-\beta_{j})} = \frac{s_{j+1}-s_{j}}{2\pi i \sqrt{s_{j}s_{j+1}}}.
\end{equation}
Therefore, using \eqref{expansion conformal map s1 neq 0} and \eqref{precise asymptotics of Phi HG near 0}, as $z \to -x_{j}$ from the upper half plane and outside the lenses, we have
\begin{equation}\label{lol 2 s1 neq 0}
\Phi_{\mathrm{HG}}(\sqrt{r}f_{-x_{j}}(z);\beta_{j})(s_{j}s_{j+1})^{-\frac{\sigma_{3}}{4}} = \begin{pmatrix}
\Psi_{j,11} & \Psi_{j,12} \\
\Psi_{j,21} & \Psi_{j,22}
\end{pmatrix} (I + \bigO(z+x_{j})) \begin{pmatrix}
1 & \frac{s_{j+1}-s_{j}}{2\pi i}\log(r(z + x_{j})) \\
0 & 1
\end{pmatrix} ,
\end{equation}
where the principal branch is taken for the log and
\begin{align}
& \Psi_{j,11} = \frac{\Gamma(1-\beta_{j})}{(s_{j}s_{j+1})^{\frac{1}{4}}}, \qquad \Psi_{j,12} = \frac{(s_{j}s_{j+1})^{\frac{1}{4}}}{\Gamma(\beta_{j})} \left( \log(c_{-x_{j}}r^{-1/2}) - \frac{i\pi}{2} + \frac{\Gamma^{\prime}(1-\beta_{j})}{\Gamma(1-\beta_{j})}+2\gamma_{\mathrm{E}} \right), \nonumber \\
& \Psi_{j,21} = \frac{\Gamma(1+\beta_{j})}{(s_{j}s_{j+1})^{\frac{1}{4}}}, \qquad \Psi_{j,22} = \frac{-(s_{j}s_{j+1})^{\frac{1}{4}}}{\Gamma(-\beta_{j})} \left( \log(c_{-x_{j}}r^{-1/2}) - \frac{i\pi}{2} + \frac{\Gamma^{\prime}(-\beta_{j})}{\Gamma(-\beta_{j})} + 2\gamma_{\mathrm{E}} \right). \label{Psi j entries}
\end{align}
From \eqref{def of Gj}, \eqref{def of T s1 neq 0}, \eqref{lol11} and \eqref{lol 2 s1 neq 0} we have
\begin{equation}\label{lol12}
G_{j}(-rx_{j};r\vec{x},\vec{s}) = r^{-\frac{\sigma_{3}}{4}} R(-x_{j})E_{-x_{j}}(-x_{j})\begin{pmatrix}
\Psi_{j,11} & \Psi_{j,12} \\ \Psi_{j,21} & \Psi_{j,22}
\end{pmatrix}.
\end{equation}
In fact $K_{-x_{j}}$ does not depend on the pre-factor $r^{-\frac{\sigma_{3}}{4}}$ in \eqref{lol12}. Let us define 
\begin{equation}\label{def of Hj}
H_{j} = r^{\frac{\sigma_{3}}{4}}G_{j}(-rx_{j};r\vec{x},\vec{s}) = R(-x_{j})E_{-x_{j}}(-x_{j})\begin{pmatrix}
\Psi_{j,11} & \Psi_{j,12} \\ \Psi_{j,21} & \Psi_{j,22}
\end{pmatrix}.
\end{equation}
By a straightforward computation, we rewrite \eqref{K -xj} as follows:
\begin{equation}\label{K xj in terms of Hj}
\sum_{j=1}^{m} K_{-x_{j}} = \sum_{j=1}^m\frac{s_{j+1}-s_j}{2\pi i} (H_{j,11} \partial_{s_{k}}H_{j,21}-H_{j,21}\partial_{s_{k}}H_{j,11}).
\end{equation}
Using $\Gamma(1+z)=z\Gamma(z)$ (see e.g. \cite[equation 5.5.1]{NIST}) and \eqref{relation Gamma beta_j and s_j s1 neq 0}, we note that
\begin{equation}\label{Psi_j first column connection formula}
\Psi_{j,11}\Psi_{j,21} = \beta_{j} \frac{2\pi i}{s_{j+1}-s_{j}}, \qquad j =1,...,m.
\end{equation}
Also, from \eqref{E_j at -x_j s1 neq 0}, we have
\begin{align}
& \partial_{s_{k}}E_{-x_{j},11}(-x_{j}) = E_{-x_{j},11}(-x_{j}) \partial_{s_{k}} \log \Lambda_{j}, \qquad  \partial_{s_{k}}E_{-x_{j},12}(-x_{j}) = -E_{-x_{j},12}(-x_{j}) \partial_{s_{k}} \log \Lambda_{j}, \nonumber \\[0.2cm]
& \partial_{s_{k}}E_{-x_{j},21}(-x_{j}) = E_{-x_{j},21}(-x_{j}) \partial_{s_{k}} \log \Lambda_{j} + i E_{-x_{j},11}(-x_{j})\partial_{s_{k}}d_{1}, \label{partial E} \\[0.2cm]
& \partial_{s_{k}}E_{-x_{j},22}(-x_{j}) = -E_{-x_{j},22}(-x_{j}) \partial_{s_{k}} \log \Lambda_{j} + i E_{-x_{j},12}(-x_{j})\partial_{s_{k}}d_{1}. \nonumber
\end{align}
Therefore, using \eqref{eq: asymp R inf s1 neq 0}, \eqref{eq: asymp der beta R inf s1 neq 0}, $\det E_{-x_{j}}(-x_{j})=1$ and \eqref{def of Hj}-\eqref{partial E}, as $r \to + \infty$ we obtain
\begin{multline}\label{K xj part 1 asymp s1 neq 0}
\sum_{j=1}^{m} K_{-x_{j}} = \sum_{j=1}^{m} \frac{s_{j+1}-s_j}{2\pi i} \Big( \Psi_{j,11} \partial_{s_{k}}\Psi_{j,21} - \Psi_{j,21}\partial_{s_{k}}\Psi_{j,11} \Big) - \sum_{j=1}^{m} 2\beta_{j} \partial_{s_{k}} \log \Lambda_{j} \\  + i \partial_{s_{k}}d_{1} \sum_{j=1}^{m}\frac{s_{j+1}-s_j}{2\pi i}(E_{-x_{j},11}(-x_{j})\Psi_{j,11}+E_{-x_{j},12}(-x_{j})\Psi_{j,21})^{2} + \bigO \bigg( \frac{\log r}{\sqrt{r}} \bigg).
\end{multline}
Again using \eqref{E_j at -x_j s1 neq 0} and \eqref{Psi j entries}, we can simplify \eqref{K xj part 1 asymp s1 neq 0} further by noting that
\begin{multline}\label{K xj part 2 asymp s1 neq 0}
i \partial_{s_{k}}d_{1} \sum_{j=1}^{m}\frac{s_{j+1}-s_j}{2\pi i}(E_{-x_{j},11}(-x_{j})\Psi_{j,11}+E_{-x_{j},12}(-x_{j})\Psi_{j,21})^{2} = \sum_{j=1}^{m} \frac{\partial_{s_{k}}d_{1}}{2\sqrt{x_{j}}}\Big( \beta_{j}^{2}(\widetilde{\Lambda}_{j,1}+\widetilde{\Lambda}_{j,2})+2i\beta_{j} \Big).
\end{multline}
\paragraph{Asymptotics for $K_{0}$.} Note that we did not use the explicit expression for $R^{(1)}(-x_{j})$ to compute the asymptotics for $K_{-x_{j}}$ up to and including the constant term. The computations for $K_{0}$ are more involved and require explicitly $R^{(1)}(0)$ (given by \eqref{expression for R^1 at 0 s1 neq 0}). We start by evaluating $G_{0}(0;r\vec{x},\vec{s})$. For $z$ outside the lenses and inside $\mathcal{D}_{0}$, by \eqref{def of S s1 neq 0}, \eqref{def of P^-x1 s1 neq 0} and \eqref{def of R s1 neq 0} we have
\begin{equation}\label{lol14}
T(z) = R(z)E_{0}(z)\Phi_{\mathrm{Be}}(rf_{0}(z);\alpha)s_{1}^{-\frac{\sigma_{3}}{2}}e^{-\sqrt{r}g(z)\sigma_{3}}.
\end{equation}
From \eqref{conformal map near 0 s1 neq 0}, \eqref{lol14} and \eqref{precise asymptotics of Phi Bessel near 0}, as $z \to 0$ from outside the lenses, we have
\begin{equation}
T(z) = R(z)E_{0}(z)\Phi_{\mathrm{Be},0}(rf_{0}(z);\alpha)2^{-\alpha \sigma_{3}}s_{1}^{-\frac{\sigma_{3}}{2}}(rz)^{\frac{\alpha}{2}\sigma_{3}}\begin{pmatrix}
1 & s_{1}h(\frac{rz}{4}) \\ 0 & 1
\end{pmatrix}e^{-\sqrt{r}g(z)\sigma_{3}}.
\end{equation}
On the other hand, using \eqref{def of G_0} and  \eqref{def of T s1 neq 0}, as $z \to 0$ we have
\begin{equation}
T(z) = r^{\frac{\sigma_{3}}{4}}G_{0}(rz;r\vec{x},\vec{s})(rz)^{\frac{\alpha}{2}\sigma_{3}}\begin{pmatrix}
1 & s_{1} h(rz) \\ 0 & 1
\end{pmatrix}e^{-\sqrt{r}g(z)\sigma_{3}}.
\end{equation}
Therefore, we obtain
\begin{equation}\label{lol15}
G_{0}(0;r\vec{x},\vec{s}) = r^{-\frac{\sigma_{3}}{4}}R(0)E_{0}(0)\Psi_{0}, \qquad \Psi_{0} := \left\{ \begin{array}{l l}
\Phi_{\mathrm{Be},0}(0;\alpha)2^{-\alpha\sigma_{3}}s_{1}^{-\frac{\sigma_{3}}{2}}, & \mbox{if } \alpha \neq 0, \\
\Phi_{\mathrm{Be},0}(0;0)s_{1}^{-\frac{\sigma_{3}}{2}}\begin{pmatrix}
1 & - \frac{s_{1}}{\pi i}\log 2 \\ 0 & 1
\end{pmatrix}, & \mbox{if } \alpha = 0,
\end{array} \right.
\end{equation}
and $\Phi_{\mathrm{Be}}(0;\alpha)$ is computed in the appendix, see \eqref{precise matrix at 0 in asymptotics of Phi Bessel near 0}. In the same way as for $K_{-x_{j}}$, we define
\begin{equation}\label{def of H0 s1 neq 0}
H_{0} = r^{\frac{\sigma_{3}}{4}}G_{0}(0;r\vec{x},\vec{s}) = R(0)E_{0}(0) \Psi_{0},
\end{equation}
and we simplify $K_{0}$ (given by \eqref{K 0}) as follows
\begin{equation}\label{simplification K0 in terms of H0}
K_{0} = \left\{ \begin{array}{l l}
\ds \frac{s_{1}}{2\pi i} \Big( H_{0,11}\partial_{s_{k}} H_{0,21} - H_{0,21}\partial_{s_{k}} H_{0,11} \Big) & \mbox{if } \alpha = 0, \\[0.3cm]
\alpha \Big( H_{0,21} \partial_{s_{k}}H_{0,12} - H_{0,11} \partial_{s_{k}}H_{0,22} \Big) & \mbox{if } \alpha \neq 0.
\end{array} \right.
\end{equation}
We start with the case $\alpha = 0$. Using \eqref{E0 at 0 s1 neq 0}, \eqref{eq: asymp R inf s1 neq 0}-\eqref{eq: asymp der beta R inf s1 neq 0}, \eqref{lol15}-\eqref{simplification K0 in terms of H0}, and the fact that $R^{(1)}$ is traceless, after a careful calculation, we obtain the following asymptotics as $r \to + \infty$:
\begin{multline}\label{K 0 part 1 asymp s1 neq 0}
K_{0} = \frac{s_{1}}{2\pi i} (H_{0,11} \partial_{s_{k}}H_{0,21}-H_{0,21}\partial_{s_{k}}H_{0,11}) = \frac{1}{2} \partial_{s_{k}}d_{1} \sqrt{r} \\ -\frac{1}{2} \Big( d_{1} \partial_{s_{k}} (R_{11}^{(1)}(0)-R_{22}^{(1)}(0)) +id_{1}^{2} \partial_{s_{k}} R_{12}^{(1)}(0)+i \partial_{s_{k}} R_{21}^{(1)}(0) \Big) + \bigO\bigg( \frac{\log r}{\sqrt{r}} \bigg).
\end{multline}
The subleading term in \eqref{K 0 part 1 asymp s1 neq 0} can be evaluated using the explicit form for $R^{(1)}(0)$ given by \eqref{expression for R^1 at 0 s1 neq 0}: 
\begin{multline}\label{K 0 part 2 asymp s1 neq 0}
-\frac{1}{2} \Big( d_{1} \partial_{s_{k}} (R_{11}^{(1)}(0)-R_{22}^{(1)}(0)) +id_{1}^{2} \partial_{s_{k}} R_{12}^{(1)}(0)+i \partial_{s_{k}} R_{21}^{(1)}(0) \Big) = \frac{d_{0}\partial_{s_{k}}d_{1}}{2}\\+ \sum_{j=1}^{m} \frac{1}{4 i c_{-x_{j}}\sqrt{x_{j}}}\partial_{s_{k}}\big( \beta_{j}^{2}(\widetilde{\Lambda}_{j,1}-\widetilde{\Lambda}_{j,2}-2i)\big)-\partial_{s_{k}}d_{1}\sum_{j=1}^{m} \frac{\beta_{j}^{2}(\widetilde{\Lambda}_{j,1}+\widetilde{\Lambda}_{j,2})}{2c_{-x_{j}}x_{j}}.
\end{multline}
Now, we evalute $K_{0}$ for the case $\alpha \neq 0$. Using the formula $\alpha \Gamma(\alpha) = \Gamma(1+\alpha)$, \eqref{E0 at 0 s1 neq 0}, \eqref{eq: asymp R inf s1 neq 0}-\eqref{eq: asymp der beta R inf s1 neq 0}, \eqref{lol15}-\eqref{simplification K0 in terms of H0}, and the fact that $R^{(1)}(0)$ is traceless, after a lot of cancellations, we obtain
\begin{multline}
K_{0} = \alpha \Big( H_{0,21} \partial_{s_{k}}H_{0,12} - H_{0,11} \partial_{s_{k}}H_{0,22} \Big) = \frac{1}{2} \partial_{s_{k}}d_{1} \sqrt{r} - \frac{\alpha}{2} \partial_{s_{k}} \big(\log s_{1} \big) \\ -\frac{1}{2} \Big( d_{1} \partial_{s_{k}} (R_{11}^{(1)}(0)-R_{22}^{(1)}(0)) +id_{1}^{2} \partial_{s_{k}} R_{12}^{(1)}(0)+i \partial_{s_{k}} R_{21}^{(1)}(0) \Big) + \bigO\bigg( \frac{\log r}{\sqrt{r}} \bigg),
\end{multline}
as $r \to + \infty$, which is the same formula as \eqref{K 0 part 1 asymp s1 neq 0} for $\alpha = 0$, plus the extra factor $- \frac{\alpha}{2} \partial_{s_{k}} (\log s_{1} )$ which can be rewritten using \eqref{def of beta_j s1 neq 0} as follows:
\begin{align*}
- \frac{\alpha}{2} \partial_{s_{k}} \big(\log s_{1} \big) = \pi i \alpha \, \partial_{s_{k}} \big( \beta_{1} + \ldots + \beta_{m} \big).
\end{align*}

\paragraph{Asymptotics for the differential identity  \eqref{DIFF identity final form general case}.} By summing the contributions $K_{0}$, $K_{-x_{j}}$, $j=1,...,m$ and $K_{\infty}$  using \eqref{K inf asymp s1 neq 0}, \eqref{K xj part 1 asymp s1 neq 0}, \eqref{K xj part 2 asymp s1 neq 0}, \eqref{K 0 part 1 asymp s1 neq 0} and \eqref{K 0 part 2 asymp s1 neq 0}, and by substituting the expression for $c_{-x_{j}}$ given by \eqref{expansion conformal map s1 neq 0}, and the expression for $d_{0}$ given by \eqref{d_ell in terms of beta_j s1 neq 0}, a lot of terms cancel each other out and we obtain
\begin{multline}\label{lol5 s1 neq 0}
\partial_{s_k}\log F_{\alpha}(r\vec{x},\vec{s}) = \partial_{s_{k}}d_{1}\sqrt{r}  + \pi i \alpha \sum_{j=1}^{m} \partial_{s_{k}} \beta_{j} - \sum_{j=1}^{m} \Big( 2\beta_{j} \partial_{s_{k}} \log \Lambda_{j} + \partial_{s_{k}} (\beta_{j}^{2}) \Big) \\ + \sum_{j=1}^{m}  \frac{s_{j+1}-s_{j}}{2\pi i}\big( \Psi_{j,11}\partial_{s_{k}}\Psi_{j,21}-\Psi_{j,21}\partial_{s_{k}}\Psi_{j,11} \big) + \bigO\Big( \frac{\log r}{\sqrt{r}} \Big), \qquad \mbox{as } r \to + \infty.
\end{multline}
Using the explicit expressions for $\Psi_{j,11}$ and $\Psi_{j,21}$ (see \eqref{Psi j entries}) together with the relation \eqref{Psi_j first column connection formula}, we have
\begin{equation}\label{lol3 s1 neq 0}
\sum_{j=1}^{m} \frac{s_{j+1}-s_{j}}{2\pi i}\big( \Psi_{j,11}\partial_{s_{k}}\Psi_{j,21}-\Psi_{j,21}\partial_{s_{k}}\Psi_{j,11} \big) = \sum_{j=1}^{m}\beta_{j} \partial_{s_{k}} \log \frac{\Gamma(1+\beta_{j})}{\Gamma(1-\beta_{j})}.
\end{equation}
Also, using \eqref{def Lambda_j s1 neq 0}, we have
\begin{equation}\label{lol4 s1 neq 0}
\sum_{j=1}^{m} - 2\beta_{j} \partial_{s_{k}} \log \Lambda_{j} = -2 \sum_{j=1}^{m} \beta_{j} \partial_{s_{k}}(\beta_{j}) \log (4x_{j}c_{-x_{j}}\sqrt{r}) -2\sum_{j=1}^{m} \beta_{j} \sum_{\substack{\ell = 1 \\ \ell \neq j}}^{m} \partial_{s_{k}}(\beta_{\ell})\log(\widetilde{T}_{\ell,j}).
\end{equation}
It will more convenient to integrate with respect to $\beta_{1},...,\beta_{m}$ instead of $s_{1},...,s_{m}$. Therefore, we define
\begin{equation}\label{def of F tilde}
\widetilde{F}_{\alpha}(r \vec{x}, \vec{\beta}) = F_{\alpha}(r \vec{x},\vec{s}),
\end{equation}
where $\vec{\beta} = (\beta_{1},...,\beta_{m})$ and $\vec{s} = (s_{1},...,s_{m})$ are related via the relations \eqref{def of beta_j s1 neq 0}. By substituting \eqref{lol3 s1 neq 0} and \eqref{lol4 s1 neq 0} into \eqref{lol5 s1 neq 0}, and by writing the derivative with respect to $\beta_{k}$ instead of $s_{k}$, as $r \to + \infty$ we obtain
\begin{multline}\label{lol5 part 2 s1 neq 0}
\partial_{\beta_k}\log\widetilde{F}_{\alpha}(r \vec{x}, \vec{\beta}) = \partial_{\beta_{k}}d_{1} \sqrt{r}  -2 \sum_{j=1}^{m} \beta_{j} \partial_{\beta_{k}}(\beta_{j}) \log (4x_{j}c_{-x_{j}}\sqrt{r}) + \pi i \alpha \\ -2\sum_{j=1}^{m} \beta_{j} \sum_{\substack{\ell = 1 \\ \ell \neq j}}^{m} \partial_{\beta_{k}}(\beta_{\ell})\log(\widetilde{T}_{\ell,j}) - \sum_{j=1}^{m} \partial_{\beta_{k}}(\beta_{j}^{2}) + \sum_{j=1}^{m} \beta_{j} \partial_{\beta_{k}} \log \frac{\Gamma(1+\beta_{j})}{\Gamma(1-\beta_{j})} + \bigO\Big(\frac{\log r}{\sqrt{r}}\Big).
\end{multline}
Using the value of $d_{1}$ in \eqref{d_ell in terms of beta_j s1 neq 0} and the value of $c_{-x_{j}}$ in \eqref{expansion conformal map s1 neq 0}, the above asymptotics can be rewritten more explicitly as follows
\begin{multline}\label{DIFF IDENTITY s1 neq 0}
\partial_{\beta_k}\log \widetilde{F}_{\alpha}(r \vec{x}, \vec{\beta}) = -2i \sqrt{rx_{k}} -2 \beta_{k} \log(4\sqrt{rx_{k}}) + \pi i \alpha \\ - 2 \sum_{\substack{j=1 \\ j \neq k}}^{m}\beta_{j} \log(\widetilde{T}_{k,j}) - 2\beta_{k} + \beta_{k} \partial_{\beta_{k}} \log \frac{\Gamma(1+\beta_{k})}{\Gamma(1-\beta_{k})} + \bigO\Big( \frac{\log r}{\sqrt{r}} \Big).
\end{multline}
\subsection{Integration of the differential identity}
By the steepest descent of Section \ref{Section: Steepest descent with s1>0} (see in particular the discussion in Section \ref{subsection Small norm s1 neq 0}), the asymptotics \eqref{DIFF IDENTITY s1 neq 0} are valid uniformly for $\beta_{1},...,\beta_{m}$ in compact subsets of $i \mathbb{R}$. First, we use \eqref{DIFF IDENTITY s1 neq 0} with $\beta_{2} = 0 = \beta_{3} = ... = \beta_{m}$, and we integrate in $\beta_{1}$ from $\beta_{1} = 0$ to an arbitrary $\beta_{1} \in i \mathbb{R}$. It is important for us to note the following relation (see e.g. \cite{ItsKrasovsky}):
\begin{equation}\label{integral of Gamma with Barnes}
\int_{0}^{\beta} x \partial_{x} \log \frac{\Gamma(1+x)}{\Gamma(1-x)}dx = \beta^{2} + \log G(1+\beta)G(1-\beta),
\end{equation}
where $G$ is Barnes' $G$-function. Let us use the notation $\vec{\beta}_{1} = (\beta_{1},0,...,0)$. After integration of \eqref{DIFF IDENTITY s1 neq 0} (with $k=1$) from $\vec{\beta} = \vec{0} = (0,...,0)$ to $\vec{\beta} = \vec{\beta}_{1}$, we obtain
\begin{multline*}
\log \frac{\widetilde{F}_{\alpha}(r\vec{x},\vec{\beta}_{1})}{\widetilde{F}_{\alpha}(r\vec{x},\vec{0})} = -2i\beta_{1} \sqrt{r x_{1}} - \beta_{1}^{2} \log(4\sqrt{rx_{1}}) + \pi i \alpha \beta_{1} +\log(G(1+\beta_{1})G(1-\beta_{1})) + \bigO \bigg( \frac{\log r}{\sqrt{r}} \bigg),
\end{multline*}
as $r \to + \infty$. Now, we use \eqref{DIFF IDENTITY s1 neq 0} with $k = 2$ and $\beta_{3}=...=\beta_{m} = 0$, $\beta_{1}$ fixed but not necessarily $0$, and we integrate in $\beta_{2}$. With the notation $\vec{\beta}_{2} = (\beta_{1},\beta_{2},0,...,0)$, as $r \to + \infty$ we obtain
\begin{multline}
\log  \frac{\widetilde{F}_{\alpha}(r\vec{x},\vec{\beta}_{2})}{\widetilde{F}_{\alpha}(r\vec{x},\vec{\beta}_{1})} = -2i \beta_{2} \sqrt{rx_{2}} - \beta_{2}^{2} \log(4\sqrt{rx_{2}}) + \pi i \alpha \beta_{2} \\ - 2 \beta_{1}\beta_{2} \log(\widetilde{T}_{2,1}) +\log(G(1+\beta_{2})G(1-\beta_{2})) + \bigO \bigg( \frac{\log r}{\sqrt{r}} \bigg).
\end{multline}
By integrating successively in $\beta_{3},...,\beta_{m}$, and then by summing the expressions, we obtain
\begin{multline}
\log \frac{\widetilde{F}_{\alpha}(r\vec{x},\vec{\beta})}{\widetilde{F}_{\alpha}(r\vec{x},\vec{0})} = - \sum_{j=1}^{m} 2i \beta_{j} \sqrt{rx_{j}} - \sum_{j=1}^{m} \beta_{j}^{2} \log (4\sqrt{rx_{j}}) + \pi i \alpha \sum_{j=1}^{m} \beta_{j} \\ - 2 \sum_{1 \leq j < k \leq m} \beta_{j}\beta_{k} \log(\widetilde{T}_{j,k}) + \sum_{j=1}^{m} \log (G(1+\beta_{j})G(1-\beta_{j})) + \bigO \bigg( \frac{\log r}{\sqrt{r}} \bigg).
\end{multline}
By \eqref{def of F tilde} and \eqref{F Fredholm}, we have $\widetilde{F}_{\alpha}(r\vec{x},\vec{0}) = F_{\alpha}(r\vec{x},\vec{1}) = 1$. This finishes the proof of Theorem \ref{thm:s1 neq 0} (after identifying $u_{j}=-2\pi i \beta_{j}$).

\section{Large $r$ asymptotics for $\Phi$ with $s_{1} = 0$}\label{Section: Steepest descent with s1=0}
In this section, we perform an asymptotic analysis of $\Phi(z;r\vec{x},\vec{s})$ as $r \to + \infty$ and $s_{1} = 0$. This steepest descent differs from the one done in Section \ref{Section: Steepest descent with s1>0} in several aspects. In particular, we need a different $g$-function, the local parametrix at $-x_{1}$ is now built in terms of Bessel functions (instead of hypergeometric functions for $s_{1} > 0$), and there is no need for a local parametrix at $0$ (as opposed to Section \ref{Section: Steepest descent with s1>0}). On the level of the parameters, we assume that $s_{1}=0$, that $s_{2},...,s_{m}$ are in a compact subset of $(0,+\infty)$ and that $x_{1},...,x_{m}$ are in a compact subset of $(0,+\infty)$ in such a way that there exists $\delta > 0$ independent of $r$ such that
\begin{equation}
\min_{1 \leq j < k \leq m} x_{k}-x_{j} \geq \delta.
\end{equation}

\subsection{Normalization of the RH problem with a $g$-function}\label{subsection: g function s1 = 0}
Since $s_{1}=0$, we need a different $g$-function than \eqref{def of g s1 neq 0}. We define
\begin{equation}\label{def of g s1=0}
g(z)=\sqrt{z+x_{1}},
\end{equation}
where the principal branch is taken. It satisfies
\begin{equation}\label{g at inf s1 = 0}
g(z) = \sqrt{z} + \frac{x_{1}}{2}z^{-1/2} + \bigO(z^{-3/2}), \qquad \mbox{ as } z \to \infty.
\end{equation}
We define the first transformation $T$ similarly to \eqref{def of T s1 neq 0} (however with an extra pre-factor matrix to compensate the asymptotic behavior \eqref{g at inf s1 = 0} of the $g$-function)
\begin{equation}\label{def of T}
T(z) = \begin{pmatrix}
1 & 0 \\ i\frac{x_{1}}{2}\sqrt{r} & 1
\end{pmatrix}r^{\frac{\sigma_{3}}{4}}\Phi(rz;r\vec{x},\vec{s})e^{-\sqrt{r}g(z)\sigma_{3}}.
\end{equation}
The asymptotics \eqref{Phi inf} of $\Phi$ then lead after some calculation to 
\begin{equation}
\label{eq:Tasympinf}
T(z) = \left( I + \frac{T_{1}}{z} + \bigO\left(z^{-2}\right) \right) z^{-\frac{\sigma_3}{4}} N, \qquad T_{1,12} = \frac{\Phi_{1,12}(r\vec{x},\vec{s})}{\sqrt{r}} + i \sqrt{r} \frac{x_{1}}{2}
\end{equation}
as $z \to\infty$. For $z \in (-\infty,-x_{1})$, since $g_{+}(z)+g_{-}(z) = 0$, the jumps for $T$ can be factorized in the same way as \eqref{factorization of the jump}.
\subsection{Opening of the lenses}
Around each interval $(-x_{j},-x_{j-1})$, $j = 2,...,m$, we open lenses $\gamma_{j,+}$ and $\gamma_{j,-}$, lying in the upper and lower half plane respectively, as shown in Figure \ref{fig:contour for S}. Let us also denote $\Omega_{j,+}$ (resp. $\Omega_{j,-}$) for the region inside the lenses around $(-x_{j},-x_{j-1})$ in the upper half plane (resp. in the lower half plane).
The next transformation is defined by
\begin{equation}\label{def of S}
S(z)= T(z) \prod_{j=2}^{m} \left\{ \begin{array}{l l}
\begin{pmatrix}
1 & 0 \\
-s_{j}^{-1}e^{\pi i \alpha}e^{-2\sqrt{r}g(z)} & 1
\end{pmatrix}, & \mbox{if } z \in \Omega_{j,+}, \\
\begin{pmatrix}
1 & 0 \\
s_{j}^{-1}e^{-\pi i \alpha}e^{-2\sqrt{r}g(z)} & 1
\end{pmatrix}, & \mbox{if } z \in \Omega_{j,-}, \\
I, & \mbox{if } z \in \mathbb{C}\setminus(\Omega_{j,+}\cup \Omega_{j,-}).
\end{array} \right.
\end{equation}
It is straightforward to verify from the RH problem for $\Phi$ and from Section \ref{subsection: g function s1 = 0} that $S$ satisfies the following RH problem.
\subsubsection*{RH problem for $S$}
\begin{enumerate}[label={(\alph*)}]
\item[(a)] $S : \C \backslash \Gamma_{S} \rightarrow \C^{2\times 2}$ is analytic, with
\begin{equation}\label{eq:defGamma}
\Gamma_{S}=(-\infty,0)\cup \gamma_{+}\cup \gamma_{-}, \qquad \gamma_{\pm} = \bigcup_{j=2}^{m+1} \gamma_{j,\pm},
\end{equation}
where $\gamma_{m+1,\pm} := -x_{m} + e^{\pm \frac{2\pi i}{3}}(0,+\infty)$, and $\Gamma_{S}$ is oriented as shown in Figure \ref{fig:contour for S}.
\item[(b)] The jumps for $S$ are given by
\begin{align*}
& S_{+}(z) = S_{-}(z)\begin{pmatrix}
0 & s_{j} \\ -s_{j}^{-1} & 0
\end{pmatrix}, & & z \in (-x_{j},-x_{j-1}), \, j = 2,...,m+1, \\
& S_{+}(z) = S_{-}(z)e^{\pi i \alpha \sigma_{3}}, & & z \in (-x_{1},0),\\
& S_{+}(z) = S_{-}(z)\begin{pmatrix}
1 & 0 \\
s_{j}^{-1}e^{\pm \pi i \alpha}e^{-2\sqrt{r}g(z)} & 1
\end{pmatrix}, & & z \in \gamma_{j,\pm}, \, j = 2,...,m+1,
\end{align*}
where $x_{m+1} = +\infty$ (we recall that $x_{0}=0$ and $s_{m+1} = 1$).
\item[(c)] As $z \rightarrow \infty$, we have
\begin{equation}
\label{eq:Sasympinf}
S(z) = \left( I + \frac{T_1}{z} + \bigO\left( z^{-2} \right) \right) z^{-\frac{\sigma_3}{4} } N.
\end{equation}
As $z \to -x_j$ from outside the lenses, $j = 1, ..., m$, we have
\begin{equation}\label{local behaviour near -xj of S s1 = 0}
S(z) = \begin{pmatrix}
\bigO(1) & \bigO(\log(z+x_{j})) \\
\bigO(1) & \bigO(\log(z+x_{j}))
\end{pmatrix}.
\end{equation}
As $z \to 0$, we have
\begin{equation}\label{local behaviour of S s1 = 0}
S(z) = \begin{pmatrix}
\bigO(1) & \bigO(1) \\
\bigO(1) & \bigO(1)
\end{pmatrix}z^{\frac{\alpha}{2}\sigma_{3}}.
\end{equation}
\end{enumerate}
Since $\Re g(z) > 0$ for all $z \in \mathbb{C}\setminus (-\infty,-x_{1}]$ and $\Re g_{\pm}(z) = 0$ for $z \in (-\infty,-x_{1})$, the jump matrices for $S$ tend to the identity matrix exponentially fast as $r \to + \infty$ on the lenses. This convergence is uniform for $z$ outside of fixed neighborhoods of $-x_{j}$, $j \in \{1,...,m\}$, but is not uniform as $r \to + \infty$ and simultaneously $z \to -x_{j}$, $j \in \{1,...,m\}$.
\begin{figure}
\centering
\begin{tikzpicture}
\draw[fill] (0,0) circle (0.05);
\draw (0,0) -- (8,0);
\draw (0,0) -- (120:3);
\draw (0,0) -- (-120:3);
\draw (0,0) -- (-3,0);

\draw (0,0) .. controls (1,1.3) and (2,1.3) .. (3,0);
\draw (0,0) .. controls (1,-1.3) and (2,-1.3) .. (3,0);
\draw (3,0) .. controls (3.5,1) and (4.5,1) .. (5,0);
\draw (3,0) .. controls (3.5,-1) and (4.5,-1) .. (5,0);

\draw[fill] (3,0) circle (0.05);
\draw[fill] (5,0) circle (0.05);
\draw[fill] (8,0) circle (0.05);

\node at (0.15,-0.3) {$-x_{m}$};
\node at (3,-0.3) {$-x_{2}$};
\node at (5.1,-0.3) {$-x_{1}$};
\node at (8.4,-0.3) {$0=x_{0}$};
\node at (-3,-0.3) {$-\infty=-x_{m+1}$};

\draw[black,arrows={-Triangle[length=0.18cm,width=0.12cm]}]
(-120:1.5) --  ++(60:0.001);
\draw[black,arrows={-Triangle[length=0.18cm,width=0.12cm]}]
(120:1.3) --  ++(-60:0.001);
\draw[black,arrows={-Triangle[length=0.18cm,width=0.12cm]}]
(180:1.5) --  ++(0:0.001);

\draw[black,arrows={-Triangle[length=0.18cm,width=0.12cm]}]
(0:1.5) --  ++(0:0.001);
\draw[black,arrows={-Triangle[length=0.18cm,width=0.12cm]}]
(0:4) --  ++(0:0.001);
\draw[black,arrows={-Triangle[length=0.18cm,width=0.12cm]}]
(0:6.5) --  ++(0:0.001);

\draw[black,arrows={-Triangle[length=0.18cm,width=0.12cm]}]
(1.55,0.97) --  ++(0:0.001);
\draw[black,arrows={-Triangle[length=0.18cm,width=0.12cm]}]
(1.55,-0.97) --  ++(0:0.001);

\draw[black,arrows={-Triangle[length=0.18cm,width=0.12cm]}]
(4.05,0.76) --  ++(0:0.001);
\draw[black,arrows={-Triangle[length=0.18cm,width=0.12cm]}]
(4.05,-0.76) --  ++(0:0.001);

\end{tikzpicture}
\caption{Jump contours $\Gamma_{S}$ for the model RH problem for $S$ with $m=3$ and $s_{1} = 0$.}
\label{fig:contour for S}
\end{figure}
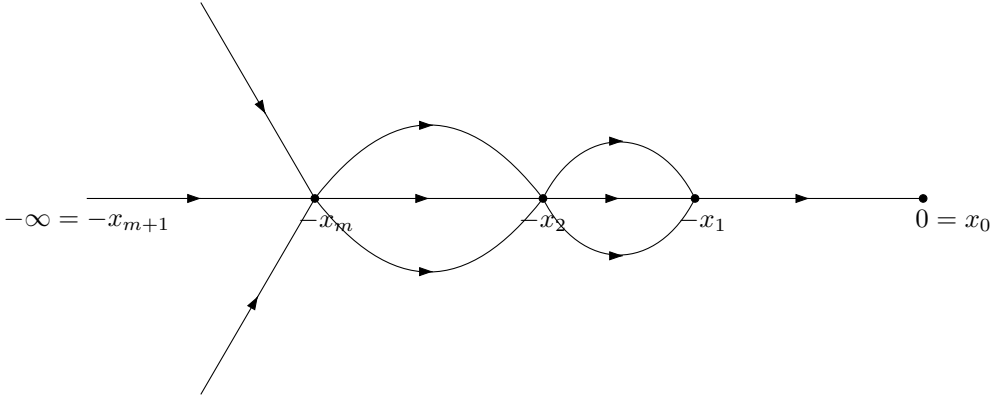
\subsection{Global parametrix}
By ignoring the jumps for $S$ that are pointwise exponentially close to the identity matrix as $r \to + \infty$, we are left with an RH problem for $P^{(\infty)}$ which is similar to the one done in Section \ref{subsection: Global param s1 neq 0}. However, there are some important differences: the jumps along $(-x_{1},0)$ and the behavior near $0$. It will appear later in Section \ref{subsection Small norm s1 = 0} that $P^{(\infty)}$ is a good approximation for $S$ away from neighborhoods of $-x_{j}$, $j = 1,...,m$. In particular, $P^{(\infty)}$ will be a good approximation for $S$ in a neighborhood of $0$, and thus we will not need a local parametrix near $0$ in this steepest descent analysis. 

\subsubsection*{RH problem for $P^{(\infty)}$}
\begin{enumerate}[label={(\alph*)}]
\item[(a)] $P^{(\infty)} : \C \backslash (-\infty,0] \rightarrow \C^{2\times 2}$ is analytic.
\item[(b)] The jumps for $P^{(\infty)}$ are given by
\begin{align*}
& P^{(\infty)}_{+}(z) = P^{(\infty)}_{-}(z)\begin{pmatrix}
0 & s_{j} \\ -s_{j}^{-1} & 0
\end{pmatrix}, & & z \in (-x_{j},-x_{j-1}), \, j = 2,...,m+1, \\
& P^{(\infty)}_{+}(z) = P^{(\infty)}_{-}(z)e^{\pi i \alpha \sigma_{3}}, & & z \in (-x_{1},0).
\end{align*}
\item[(c)] As $z \rightarrow \infty$, we have
\begin{equation}
\label{eq:Pinf asympinf}
P^{(\infty)}(z) = \left( I + \frac{P^{(\infty)}_{1}}{z} + \bigO\left( z^{-2} \right) \right) z^{-\frac{\sigma_3}{4}} N,
\end{equation}
for a certain matrix $P_{1}^{(\infty)}$ independent of $z$.
\item[(d)] As $z \to -x_{j}$, $j \in \{2,...,m\}$, we have $P^{(\infty)}(z) = \begin{pmatrix}
\bigO(1) & \bigO(1) \\ \bigO(1) & \bigO(1)
\end{pmatrix}$.

As $z \to -x_{1}$, we have $P^{(\infty)}(z) = \begin{pmatrix}
\bigO((z+x_{1})^{-\frac{1}{4}}) & \bigO((z+x_{1})^{-\frac{1}{4}}) \\ \bigO((z+x_{1})^{-\frac{1}{4}}) & \bigO((z+x_{1})^{-\frac{1}{4}})
\end{pmatrix}$.

As $z \to 0$, we have $P^{(\infty)}(z) = \begin{pmatrix}
\bigO(1) & \bigO(1) \\ \bigO(1) & \bigO(1)
\end{pmatrix}z^{\frac{\alpha}{2}\sigma_{3}}$.
\end{enumerate}
Note that the condition (d) for the RH problem for $P^{(\infty)}$ does not come from the RH problem for $S$ (with the exception of the behavior at $0$). It is added to ensure uniqueness of the solution. The construction of $P^{(\infty)}$ relies on the following Szeg\"{o} functions
\begin{align*}
& D_{\alpha}(z) = \exp\left( \frac{\alpha}{2} \sqrt{z+x_{1}}\int_{0}^{x_{1}} \frac{1}{\sqrt{x_{1}-u}}\frac{du}{z+u} \right) = \left( \frac{\sqrt{z+x_{1}}+\sqrt{x_{1}}}{\sqrt{z+x_{1}}-\sqrt{x_{1}}}  \right)^{\frac{\alpha}{2}}, \\
& D_{\vec{s}}(z) = \exp \left( \frac{\sqrt{z+x_{1}}}{2\pi} \sum_{j=2}^{m} \log s_{j} \int_{x_{j-1}}^{x_{j}}\frac{du}{\sqrt{u-x_{1}}(z+u)} \right).
\end{align*}
They satisfy the following jumps
\begin{align*}
& D_{\alpha,+}(z)D_{\alpha,-}(z) = 1, & & \mbox{for } z \in (-\infty,-x_{1}), \\
& D_{\alpha,+}(z) = D_{\alpha,-}(z)e^{-\pi i \alpha}, & & \mbox{for } z \in (-x_{1},0),\\
& D_{\vec{s},+}(z)D_{\vec{s},-}(z) = s_{j}, & &  \mbox{for } z \in (-x_{j},-x_{j-1}), \, j = 2,...,m+1.
\end{align*}
Furthermore, as $z \to \infty$, we have
\begin{equation}
\begin{array}{r c l}
\ds D_{\alpha}(z) & = & \ds \exp \left( \sum_{\ell = 1}^{k} \frac{d_{\ell,\alpha}}{(z+x_{1})^{\ell-\frac{1}{2}}} + \bigO(z^{-k-\frac{1}{2}}) \right), \\[0.2cm]
\ds D_{\vec{s}}(z) & = & \ds \exp \left( \sum_{\ell = 1}^{k} \frac{d_{\ell,\vec{s}}}{(z+x_{1})^{\ell-\frac{1}{2}}} + \bigO(z^{-k-\frac{1}{2}}) \right),
\end{array}
\end{equation}
where $k \in \mathbb{N}_{>0}$ is arbitrary and
\begin{align}
& d_{\ell,\alpha} = \frac{\alpha}{2} \int_{0}^{x_{1}}(x_{1}-u)^{\ell - \frac{3}{2}}du = \frac{\alpha \, x_{1}^{\ell - \frac{1}{2}}}{2\ell -1}, \\
& d_{\ell,\vec{s}} = \frac{(-1)^{\ell - 1}}{2\pi} \sum_{j=2}^{m}  \log s_{j} \int_{x_{j-1}}^{x_{j}} (u-x_{1})^{\ell - \frac{3}{2}}du = \frac{(-1)^{\ell -1}}{\pi (2\ell -1)}\sum_{j=2}^{m} \log s_{j} \Big( (x_{j}-x_{1})^{\ell -\frac{1}{2}}-(x_{j-1}-x_{1})^{\ell-\frac{1}{2}} \Big) \nonumber.
\end{align}
For $\ell \geq 1$, we define $d_{\ell} = d_{\ell,\alpha} + d_{\ell,\vec{s}}$.
Let us finally define
\begin{equation}\label{def of Pinf}
P^{(\infty)}(z) = \begin{pmatrix}
1 & 0 \\ id_{1} & 1
\end{pmatrix}(z+x_{1})^{-\frac{\sigma_{3}}{4}}ND(z)^{-\sigma_{3}},
\end{equation}
where the principal branch is taken for the root, and where $D(z) = D_{\alpha}(z)D_{\vec{s}}(z)$. From the above properties of $D_{\alpha}$ and $D_{\vec{s}}$, one can check that $P^{(\infty)}$ satisfies criteria (a), (b) and (c) of the RH problem for $P^{(\infty)}$, with
\begin{equation}\label{Pinf 1 12 s1 = 0}
P_{1,12}^{(\infty)} = i d_{1}.
\end{equation}
The rest of the current section consists of computing of the first terms in the asymptotics of $D(z)$ as $z \to -x_{j}$, $j = 0,1,...,m$. In particular, it will  prove that $P^{(\infty)}$ defined in \eqref{def of Pinf} satisfies condition (d) of the RH problem for $P^{(\infty)}$. After integrations, we can rewrite $D_{\vec{s}}$ as follows
\begin{equation}
D_{\vec{s}}(z) = \prod_{j=2}^{m} D_{s_{j}}(z),
\end{equation}
where
\begin{equation}
D_{s_{j}}(z) = \left( \frac{(\sqrt{z+x_{1}}-i\sqrt{x_{j-1}-x_{1}})(\sqrt{z+x_{1}}+i\sqrt{x_{j}-x_{1}})}{(\sqrt{z+x_{1}}-i\sqrt{x_{j}-x_{1}})(\sqrt{z+x_{1}}+i\sqrt{x_{j-1}-x_{1}})} \right)^{\frac{\log s_{j}}{2\pi i}}.
\end{equation}
As $z \to -x_{j}$, $j \in \{2,...,m\}$, $\Im z > 0$, we have
\begin{equation}
D_{s_{j}}(z) = \sqrt{s_{j}}T_{j,j}^{\frac{\log s_{j}}{2\pi i}}(z+x_{j})^{-\frac{\log s_{j}}{2\pi i}}(1+\bigO(z+x_{j})), \quad T_{j,j} = 4(x_{j}-x_{1})\frac{\sqrt{x_{j}-x_{1}}-\sqrt{x_{j-1}-x_{1}}}{\sqrt{x_{j}-x_{1}}+\sqrt{x_{j-1}-x_{1}}}.
\end{equation}
As $z \to -x_{j-1}$, $j \in \{3,...,m\}$, $\Im z > 0$, we have
\begin{equation}
D_{s_{j}}(z) = T_{j,j-1}^{\frac{\log s_{j}}{2\pi i}}(z+x_{j-1})^{\frac{\log s_{j}}{2\pi i}}(1+\bigO(z+x_{j-1})), \quad T_{j,j-1} = \frac{1}{4(x_{j-1}-x_{1})}\frac{\sqrt{x_{j}-x_{1}}+\sqrt{x_{j-1}-x_{1}}}{\sqrt{x_{j}-x_{1}}-\sqrt{x_{j-1}-x_{1}}}.
\end{equation}
For $j \in \{2,...,m\}$, as $z \to -x_{k}$, $k \in \{2,...,m\}$, $k \neq j,j-1$, $\Im z > 0$, we have
\begin{equation}
D_{s_{j}}(z) = T_{j,k}^{\frac{\log s_{j}}{2\pi i}}(1+\bigO(z+x_{k})), \quad T_{j,k} = \frac{(\sqrt{x_{k}-x_{1}}-\sqrt{x_{j-1}-x_{1}})(\sqrt{x_{k}-x_{1}}+\sqrt{x_{j}-x_{1}})}{(\sqrt{x_{k}-x_{1}}-\sqrt{x_{j}-x_{1}})(\sqrt{x_{k}-x_{1}}+\sqrt{x_{j-1}-x_{1}})}.
\end{equation}
From the above expansion, we obtain, as $z \to -x_{j}$, $j \in \{2,...,m\}$, $\Im z > 0$ that
\begin{equation}
D(z) = \sqrt{s_{j}}\Big( \prod_{k=2}^{m} T_{k,j}^{\frac{\log s_{k}}{2\pi i}} \Big)D_{\alpha,+}(-x_{j}) (z+x_{j})^{\beta_{j}}(1+\bigO(z+x_{j})),
\end{equation}
where $\beta_{2},\ldots,\beta_{m}$ are given by
\begin{equation}\label{def of beta_j}
\beta_{j} = \frac{1}{2\pi i}\log \frac{s_{j+1}}{s_{j}}, \quad \mbox{ or equivalently } \quad e^{-2i\pi \beta_{j}} = \frac{s_{j}}{s_{j+1}}, \qquad j = 2,...,m,
\end{equation}
with $s_{m+1}:=1$. Note that
\begin{equation}
\prod_{k=2}^{m} T_{k,j}^{\frac{\log s_{k}}{2\pi i}} = (4(x_{j}-x_{1}))^{-\beta_{j}}\prod_{\substack{k=2 \\ k \neq j}}^{m} \widetilde{T}_{k,j}^{-\beta_{k}}, \quad \mbox{ where } \quad \widetilde{T}_{k,j} = \frac{\sqrt{x_{j}-x_{1}}+\sqrt{x_{k}-x_{1}}}{|\sqrt{x_{j}-x_{1}}-\sqrt{x_{k}-x_{1}}|}.
\end{equation}
As $z \to -x_{1}$, we have
\begin{equation}
D_{\vec{s}}(z) = \sqrt{s_{2}}\Big(1-d_{0,\vec{s}}\sqrt{z+x_{1}}+\bigO(z+x_{1})\Big),
\end{equation}
where
\begin{equation}
d_{0,\vec{s}} = \frac{\log s_{2}}{\pi \sqrt{x_{2}-x_{1}}} - \sum_{j=3}^{m} \frac{\log s_{j}}{\pi}\Big( \frac{1}{\sqrt{x_{j-1}-x_{1}}}-\frac{1}{\sqrt{x_{j}-x_{1}}} \Big).
\end{equation}
As $z \to -x_{1}$, $\Im z > 0$, we have
\begin{equation}
D_{\alpha}(z) = e^{-\frac{\pi i \alpha}{2}}\Big(1-d_{0,\alpha}\sqrt{z+x_{1}}+\bigO(z+x_{1})\Big), \qquad d_{0,\alpha} = \frac{-\alpha}{\sqrt{x_{1}}}.
\end{equation}
It follows that as $z \to -x_{1}$, $\Im z > 0$, we have
\begin{equation}
D(z) = \sqrt{s_{2}}e^{-\frac{\pi i \alpha}{2}}\Big( 1-d_{0} \sqrt{z+x_{1}}+\bigO(z+x_{1}) \Big), \qquad d_{0} := d_{0,\alpha} + d_{0,\vec{s}}.
\end{equation}
Note that for all $\ell \in \{0,1,2,...\}$, we can rewrite $d_{\ell}$ in terms of the $\beta_{j}$'s as follows
\begin{equation}\label{d_ell in terms of beta_j}
d_{\ell} = \frac{\alpha \, x_{1}^{\ell - \frac{1}{2}}}{2\ell-1}+ \frac{2i(-1)^{\ell}}{2\ell-1}\sum_{j=2}^{m} \beta_{j} (x_{j}-x_{1})^{\ell-\frac{1}{2}}.
\end{equation}
As $z \to 0$, we have
\begin{equation}\label{asymp D at 0 s1 = 0}
D(z) = D_{0}z^{-\frac{\alpha}{2}}(1+\bigO(z)),
\end{equation}
where 
\begin{align}\label{D0}
D_{0} = \exp \bigg( \frac{\alpha}{2}\log(4x_{1}) -\sum_{j=2}^{m} 2 i \beta_{j} \arccos \bigg(  \frac{\sqrt{x_{1}}}{\sqrt{x_{j}}} \bigg) \bigg).
\end{align}

\subsection{Local parametrices}
In this section, we aim to find approximations for $S$ in small neighborhoods of $-x_{1}$,...,$-x_{m}$ (as already mentioned, there is no need for a local parametrix in a neighborhood of $0$). By \eqref{condition on xj in terms of delta s1 = 0}, there exist small disks $\mathcal{D}_{-x_{j}}$ centred at  $-x_{j}$, $j=1,...,m$, whose radii are fixed (independent of $r$), but sufficiently small such that they do not intersect. The local parametrix around $-x_{j}$, $j \in \{1,...,m\}$, is defined in $\mathcal{D}_{-x_{j}}$ and is denoted by $P^{(-x_{j})}$. It satisfies an RH problem with the same jumps as $S$ (inside $\mathcal{D}_{-x_{j}}$) and in addition we require
\begin{equation}\label{match at the center s1 = 0}
S(z) P^{(-x_{j})}(z)^{-1} = \bigO(1), \qquad \mbox{ as } z \to -x_{j},
\end{equation}
and
\begin{equation}\label{matching weak s1 = 0}
P^{(-x_{j})}(z) = (I+o(1))P^{(\infty)}(z), \qquad \mbox{ as } r \to +\infty,
\end{equation}
uniformly for $z \in \partial \mathcal{D}_{-x_{j}}$.

\subsubsection{Local parametrices around $-x_{j}$, $j = 2,...,m$}
For $j \in \{2,...,m\}$, $P^{(-x_{j})}$ can be explicitly expressed in terms of the model RH problem for $\Phi_{\mathrm{HG}}$ (see Section \ref{subsection: model RHP with HG functions}). This construction is very similar to the one done in Section \ref{subsection: local param HG s1 neq 0}, and we provide less details here. Let us first consider the function
\begin{equation}\label{conformal xj s1 = 0}
f_{-x_{j}}(z) = -2 \left\{ \begin{array}{l l}
g(z)-g_{+}(-x_{j}), & \mbox{if } \Im z > 0 \\
-(g(z)-g_{-}(-x_{j})), & \mbox{if } \Im z < 0
\end{array} \right. = -2i\big(\sqrt{-z-x_{1}}-\sqrt{x_{j}-x_{1}}\big).
\end{equation}
This is a conformal map from $\mathcal{D}_{-x_{j}}$ to a neighborhood of $0$, and its expansion as $z \to -x_{j}$ is given by
\begin{equation}\label{expansion conformal map}
f_{-x_{j}}(z) = i c_{-x_{j}} (z+x_{j})(1+\bigO(z+x_{j})) \quad \mbox{ with } \quad c_{-x_{j}} = \frac{1}{\sqrt{x_{j}-x_{1}}} > 0.
\end{equation}
Note also that $f_{-x_{j}}(\mathbb{R}\cap \mathcal{D}_{-x_{j}})\subset i \mathbb{R}$. Now, we deform the lenses in a similar way as in \eqref{deformation of the lenses local param xj s1 neq 0}, that is, such that $f_{-x_{j}}$ maps the jump contour for $P^{(-x_{j})}$ onto a subset of $\Sigma_{\mathrm{HG}}$ (see Figure \ref{Fig:HG}). It can be checked that the local parametrix is given by
\begin{equation}\label{lol10 s1 = 0}
P^{(-x_{j})}(z) = E_{-x_{j}}(z) \Phi_{\mathrm{HG}}(\sqrt{r}f_{-x_{j}}(z);\beta_{j})(s_{j}s_{j+1})^{-\frac{\sigma_{3}}{4}}e^{-\sqrt{r}g(z)\sigma_{3}}e^{\frac{\pi i \alpha}{2}\theta(z)\sigma_{3}},
\end{equation}
where $E_{-x_{j}}$ is analytic inside $\mathcal{D}_{-x_{j}}$ and given by
\begin{multline}
E_{-x_{j}}(z) = P^{(\infty)}(z) e^{-\frac{\pi i \alpha}{2}\theta(z)\sigma_{3}} (s_{j} s_{j+1})^{\frac{\sigma_{3}}{4}} \left\{ \begin{array}{l l}
\ds \sqrt{\frac{s_{j}}{s_{j+1}}}^{\sigma_{3}}, & \Im z > 0 \\
\begin{pmatrix}
0 & 1 \\ -1 & 0
\end{pmatrix}, & \Im z < 0
\end{array} \right\} \times \\ e^{\sqrt{r}g_{+}(-x_{j})\sigma_{3}}(\sqrt{r}f_{-x_{j}}(z))^{\beta_{j}\sigma_{3}}.
\end{multline}
We will need later a more detailed knowledge than \eqref{matching weak s1 = 0}. Using \eqref{Asymptotics HG}, one shows that
\begin{equation}\label{matching strong -x_j}
P^{(-x_{j})}(z)P^{(\infty)}(z)^{-1} = I + \frac{1}{\sqrt{r}f_{-x_{j}}(z)}E_{-x_{j}}(z) \Phi_{\mathrm{HG},1}(\beta_{j})E_{-x_{j}}(z)^{-1} + \bigO(r^{-1}),
\end{equation}
as $r \to + \infty$, uniformly for $z \in \partial \mathcal{D}_{-x_{j}}$, where $\Phi_{\mathrm{HG},1}(\beta_{j})$ is given by \eqref{def of tau} with the parameter $\beta_{j}$ given by \eqref{def of beta_j}. Also, a direct computation shows that
\begin{equation}\label{E_j at -x_j}
E_{-x_{j}}(-x_{j}) = \begin{pmatrix}
1 & 0 \\ id_{1} & 1
\end{pmatrix} e^{-\frac{\pi i}{4}\sigma_{3}} (x_{j}-x_{1})^{-\frac{\sigma_{3}}{4}}N\Lambda_{j}^{\sigma_{3}},
\end{equation}
where
\begin{equation}\label{def Lambda_j}
\Lambda_{j} = D_{\alpha,+}(-x_{j})^{-1}e^{-\frac{\pi i \alpha}{2}} (4(x_{j}-x_{1}))^{\beta_{j}} \bigg( \prod_{\substack{k=2 \\ k \neq j}}^{m} \widetilde{T}_{k,j}^{\beta_{k}} \bigg)e^{\sqrt{r}g_{+}(-x_{j})}r^{\frac{\beta_{j}}{2}}c_{-x_{j}}^{\beta_{j}}.
\end{equation}
\subsubsection{Local parametrix around $-x_{1}$}
The local parametrix $P^{(-x_{1})}$ can be expressed in terms of the model RH problem $\Phi_{\mathrm{Be}}(z;0)$ presented in Section \ref{subsection:Model Bessel}. This construction is similar to the one done in Section \ref{subsection: local param bessel s1 neq 0} (note however that in Section \ref{subsection: local param bessel s1 neq 0} we needed $\Phi_{\mathrm{Be}}(z;\alpha)$), and we provide less details here. Let us first consider the function
\begin{equation}\label{conformal map near -x1 s1 = 0}
f_{-x_{1}}(z) = \frac{g(z)^{2}}{4} = \frac{z+x_{1}}{4}.
\end{equation}
This is a conformal map from $\mathcal{D}_{-x_{1}}$ to a neighborhood of $0$. Similarly to \eqref{deformation of the lenses local param 0 s1 neq 0}, we choose $\gamma_{2,\pm}$ such that the jump contour for $P^{(-x_{1})}$ is mapped by $f_{-x_{1}}$ onto a subset of $\Sigma_{\mathrm{Be}}$ (see Figure \ref{figBessel}). It can be verified that $P^{(-x_{1})}$ is given by
\begin{equation}\label{def of P^-x1}
P^{(-x_{1})}(z) = E_{-x_{1}}(z)\Phi_{\mathrm{Be}}(rf_{-x_{1}}(z);0)s_{2}^{-\frac{\sigma_{3}}{2}}e^{-\sqrt{r}g(z)\sigma_{3}}e^{\frac{\pi i \alpha}{2}\theta(z)\sigma_{3}},
\end{equation}
where $E_{-x_{1}}$ is analytic inside $\mathcal{D}_{-x_{1}}$ and is given by
\begin{equation}
E_{-x_{1}}(z) = P^{(\infty)}(z)e^{-\frac{\pi i \alpha}{2}\theta(z)\sigma_{3}}s_{2}^{\frac{\sigma_{3}}{2}}N^{-1}\left( 2\pi \sqrt{r}f_{-x_{1}}(z)^{1/2} \right)^{\frac{\sigma_{3}}{2}}.
\end{equation}
We will need later a more detailed knowledge than \eqref{matching weak s1 = 0}. Using \eqref{large z asymptotics Bessel}, one shows that
\begin{multline}\label{matching strong 0}
P^{(-x_{1})}(z)P^{(\infty)}(z)^{-1} = I \\ + \frac{1}{\sqrt{r}f_{-x_{1}}(z)^{1/2}}P^{(\infty)}(z)e^{-\frac{\pi i \alpha}{2}\theta(z)\sigma_{3}}s_{2}^{\frac{\sigma_{3}}{2}}\Phi_{\mathrm{Be},1}(0)s_{2}^{-\frac{\sigma_{3}}{2}}e^{\frac{\pi i \alpha}{2}\theta(z)\sigma_{3}}P^{(\infty)}(z)^{-1} + \bigO(r^{-1}),
\end{multline}
as $r \to +\infty$ uniformly for $z \in \partial \mathcal{D}_{-x_{1}}$, where $\Phi_{\mathrm{Be},1}(0)$ is given below \eqref{large z asymptotics Bessel}. Furthermore, 
\begin{equation}\label{E -x1 at -x1 s1 = 0}
E_{-x_{1}}(-x_{1}) = \begin{pmatrix}
1 & 0 \\ id_{1} & 1
\end{pmatrix} \begin{pmatrix}
1 & -id_{0} \\
0 & 1
\end{pmatrix} (\pi \sqrt{r})^{\frac{\sigma_{3}}{2}}.
\end{equation}

\subsection{Small norm problem}\label{subsection Small norm s1 = 0}

\begin{figure}
\centering
\begin{tikzpicture}
\draw[fill] (0,0) circle (0.05);
\draw (0,0) circle (0.5);

\draw (120:0.5) -- (120:3);
\draw (-120:0.5) -- (-120:3);

\draw ($(0,0)+(60:0.5)$) .. controls (1,1.15) and (2,1.15) .. ($(3,0)+(120:0.5)$);
\draw ($(0,0)+(-60:0.5)$) .. controls (1,-1.15) and (2,-1.15) .. ($(3,0)+(-120:0.5)$);
\draw ($(3,0)+(60:0.5)$) .. controls (3.7,0.88) and (4.3,0.88) .. ($(5,0)+(120:0.5)$);
\draw ($(3,0)+(-60:0.5)$) .. controls (3.7,-0.88) and (4.3,-0.88) .. ($(5,0)+(-120:0.5)$);

\draw[fill] (3,0) circle (0.05);
\draw (3,0) circle (0.5);
\draw[fill] (5,0) circle (0.05);
\draw (5,0) circle (0.5);
\draw[fill] (8,0) circle (0.05);

\node at (0.,-0.2) {$-x_{m}$};
\node at (3,-0.2) {$-x_{2}$};
\node at (5,-0.2) {$-x_{1}$};
\node at (8,-0.25) {$0$};

\draw[black,arrows={-Triangle[length=0.18cm,width=0.12cm]}]
(-120:1.5) --  ++(60:0.001);
\draw[black,arrows={-Triangle[length=0.18cm,width=0.12cm]}]
(120:1.3) --  ++(-60:0.001);
\draw[black,arrows={-Triangle[length=0.18cm,width=0.12cm]}]
($(0.08,0)+(90:0.5)$) --  ++(0:0.001);

\draw[black,arrows={-Triangle[length=0.18cm,width=0.12cm]}]
($(3.08,0)+(90:0.5)$) --  ++(0:0.001);
\draw[black,arrows={-Triangle[length=0.18cm,width=0.12cm]}]
($(5.08,0)+(90:0.5)$) --  ++(0:0.001);

\draw[black,arrows={-Triangle[length=0.18cm,width=0.12cm]}]
(1.55,0.97) --  ++(0:0.001);
\draw[black,arrows={-Triangle[length=0.18cm,width=0.12cm]}]
(1.55,-0.97) --  ++(0:0.001);

\draw[black,arrows={-Triangle[length=0.18cm,width=0.12cm]}]
(4.05,0.76) --  ++(0:0.001);
\draw[black,arrows={-Triangle[length=0.18cm,width=0.12cm]}]
(4.05,-0.76) --  ++(0:0.001);


\end{tikzpicture}
\caption{Jump contours $\Sigma_{R}$ for the RH problem for $R$ with $m=3$ and $s_{1} = 0$.}
\label{fig:contour for R s1 = 0}
\end{figure}

The last transformation of the steepest descent is defined by
\begin{equation}\label{def of R}
R(z) = \left\{ \begin{array}{l l}
S(z)P^{(\infty)}(z)^{-1}, & \mbox{for } z \in \mathbb{C}\setminus \bigcup_{j=1}^{m}\mathcal{D}_{-x_{j}}, \\
S(z)P^{(-x_{j})}(z)^{-1}, & \mbox{for } z \in \mathcal{D}_{-x_{j}}, \, j \in \{1,...,m\}.
\end{array} \right.
\end{equation}
The analysis of $R$ is similar to the one done in Section \ref{subsection Small norm s1 neq 0}, and we provide less details here. The main difference lies in the analysis of $R(z)$ for $z$ in a neighborhood of $0$. From the RH problems for $S$ and $P^{(\infty)}$, it is straightforward to verify that $R$ has no jumps along $(-x_{1},0)$ and is bounded as $z \to 0$. Thus $R$ is analytic in a neighborhood of $0$. Also, by definition of the local parametrices, $R$ is analytic on $\mathbb{C}\setminus \Sigma_{R}$, where $\Sigma_{R}$ consists of the boundaries of the disks, and the part of the lenses away from the disks, as shown in Figure \ref{fig:contour for R s1 = 0}. As in Section \ref{subsection Small norm s1 neq 0}, the jumps for $R$ on the lenses are uniformly exponentially close to $I$ as $r \to +\infty$. On the boundary of the disks, the jumps are close to $I$ by an error of order $\bigO(r^{-1/2})$. Therefore, $R$ satisfies a small norm RH problem. By standard theory \cite{DKMVZ2,DKMVZ1} (see also Section \ref{subsection Small norm s1 neq 0}), $R$ exists for sufficiently large $r$ and satisfies
\begin{align}
& R(z) = I + \frac{R^{(1)}(z)}{\sqrt{r}} + \bigO(r^{-1}) & & R^{(1)}(z) = \bigO(1), \label{eq: asymp R inf} \\[0cm]
& \partial_{\beta_{j}}R(z) = \frac{\partial_{\beta_{j}}R^{(1)}(z)}{\sqrt{r}} + \bigO \Big( \frac{\log r}{r} \Big), & & \partial_{\beta_{j}}R^{(1)}(z) = \bigO(\log r) \label{eq: asymp der beta R inf s1 = 0}
\end{align}
as $r \to + \infty$, uniformly for $z \in \mathbb{C}\setminus \Sigma_{R}$, uniformly for $\beta_{2},...,\beta_{m}$ in compact subsets of $i \mathbb{R}$, and uniformly in $x_{1},...,x_{m}$ in compact subsets of $(0,+\infty)$ as long as there exists $\delta > 0$ which satisfies \eqref{condition on xj in terms of delta s1 = 0}.

\vspace{0.2cm}\hspace{-0.55cm}The goal for the rest of this section is to obtain $R^{(1)}(z)$ for $z \in \mathbb{C}\setminus \bigcup_{j=1}^{m}\mathcal{D}_{-x_{j}}$ and for $z = -x_{1}$ explicitly. Let us take the clockwise orientation on the boundaries of the disks, and let us denote by $J_{R}(z)$ for the jumps of $R$. Since $J_{R}$ admits a large $r$ expansion of the form
\begin{equation}
J_{R}(z) = I + \frac{J_{R}^{(1)}(z)}{\sqrt{r}} + \bigO(r^{-1}), 
\end{equation}
as $r \to \infty$ uniformly for $z \in \bigcup_{j=1}^{m}\mathcal{D}_{-x_{j}}$, we obtain (in the same way as in Section \ref{subsection Small norm s1 neq 0}) that $R^{(1)}$ is simply given by
\begin{equation}
R^{(1)}(z) = \frac{1}{2\pi i}\int_{\bigcup_{j=1}^{m}\partial\mathcal{D}_{-x_{j}}} \frac{J_{R}^{(1)}(s)}{s-z}ds.
\end{equation}
By a direct residue calculation we have
\begin{equation}\label{expression for R^1}
R^{(1)}(z) = \sum_{j=1}^{m} \frac{1}{z+x_{j}}\mbox{Res}(J_{R}^{(1)}(s),s = -x_{j}), \qquad \mbox{ for } z \in \mathbb{C}\setminus \bigcup_{j=1}^{m}\mathcal{D}_{-x_{j}}
\end{equation}
and
\begin{equation}\label{expression for R^1 at -x1 for s1 = 0}
R^{(1)}(-x_{1}) = -\mbox{Res}\Big(\frac{J_{R}^{(1)}(s)}{s+x_{1}},s = -x_{1}\Big)+ \sum_{j=2}^{m} \frac{1}{x_{j}-x_{1}}\mbox{Res}(J_{R}^{(1)}(s),s = -x_{j}).
\end{equation}
From \eqref{matching strong 0}, we have
\begin{equation}
\mbox{Res}(J_{R}^{(1)}(s),s = -x_{1}) = \frac{d_{1}}{8}\begin{pmatrix}
-1 & -id_{1}^{-1} \\ -id_{1} & 1
\end{pmatrix},
\end{equation}
and with increasing effort
\begin{equation}
\mbox{Res}\Big(\frac{J_{R}^{(1)}(s)}{s+x_{1}},s = -x_{1}\Big) = \frac{1}{2} \begin{pmatrix}
-(d_{0}+d_{1}d_{0}^{2}) & -id_{0}^{2} \\
-i \big( d_{0}^{2}d_{1}^{2}+2d_{0}d_{1}+\frac{3}{4} \big) & d_{0} + d_{1}d_{0}^{2}
\end{pmatrix}.
\end{equation}
From \eqref{matching strong -x_j}-\eqref{def Lambda_j}, for $j \in \{2,...,m\}$, we have
\begin{multline*}
\mbox{Res}\left( J_{R}^{(1)}(s),s=-x_{j} \right) = \frac{\beta_{j}^{2}}{ic_{-x_{j}}}\begin{pmatrix}
1 & 0 \\ id_{1} & 1
\end{pmatrix}e^{-\frac{\pi i}{4}\sigma_{3}}(x_{j}-x_{1})^{-\frac{\sigma_{3}}{4}} N \begin{pmatrix}
-1 & \widetilde{\Lambda}_{j,1} \\ -\widetilde{\Lambda}_{j,2} & 1
\end{pmatrix} \\ \times N^{-1} (x_{j}-x_{1})^{\frac{\sigma_{3}}{4}}e^{\frac{\pi i}{4}\sigma_{3}}\begin{pmatrix}
1 & 0 \\ -id_{1} & 1
\end{pmatrix},
\end{multline*}
where
\begin{equation}
\widetilde{\Lambda}_{j,1} = \tau(\beta_{j})\Lambda_{j}^{2} \qquad \mbox{ and } \qquad \widetilde{\Lambda}_{j,2} = \tau(-\beta_{j})\Lambda_{j}^{-2}.
\end{equation}

\section{Proof of Theorem \ref{thm:s1=0}}\label{Section: integration s1 =0}
This section is divided into two parts in the same way as in Section \ref{Section: integration s1 >0}. In the first part, using the RH analysis done in Section \ref{Section: Steepest descent with s1=0}, we find large $r$ asymptotics for the differential identity
\begin{equation}
\partial_{s_{k}} \log F_{\alpha}(r\vec{x},\vec{s})= K_{\infty} + \sum_{j=1}^{m}K_{-x_{j}} + K_{0},
\end{equation}
which was obtained in \eqref{DIFF identity final form general case} with the quantities $K_{\infty}$, $K_{-x_{j}}$ and $K_{0}$ defined in \eqref{K inf}-\eqref{K 0}. In the second part, we integrate these asymptotics over the parameters $s_{2},...,s_{m}$. Some parts of the computations in this section are close to those done in Section \ref{Section: integration s1 >0}. However, it requires some adaptation and we provide the details for completeness.

\subsection{Large $r$ asymptotics for the differential identity}

\paragraph{Asymptotics for $K_{\infty}$.}

For $z$ outside the disks and outside the lenses, by \eqref{def of R} we have
\begin{equation}
S(z) = R(z) P^{(\infty)}(z).
\end{equation}
As $z \to \infty$, we can write
\begin{equation}
R(z) = I + \frac{R_{1}}{z} + \bigO(z^{-2}),
\end{equation}
for a certain matrix $R_{1}$ independent of $z$. Thus, by \eqref{eq:Sasympinf} and \eqref{eq:Pinf asympinf}, we have
\begin{align*}
& T_{1} = R_{1} + P_{1}^{(\infty)}.
\end{align*}
Using \eqref{eq: asymp R inf} and the above expressions, as $r \to +\infty$ we have
\begin{align*}
& T_{1} = P_{1}^{(\infty)} + \frac{R_{1}^{(1)}}{\sqrt{r}} + \bigO(r^{-1}),
\end{align*}
where $R_{1}^{(1)}$ is defined through the expansion
\begin{equation}
R^{(1)}(z) = \frac{R_{1}^{(1)}}{z} + \bigO(z^{-2}), \qquad \mbox{ as } z \to \infty.
\end{equation}
By \eqref{K inf}, \eqref{eq:Tasympinf}, \eqref{Pinf 1 12 s1 = 0}, \eqref{eq: asymp der beta R inf s1 = 0} and \eqref{expression for R^1}, the large $r$ asymptotics for $K_{\infty}$ are given by
\begin{multline}\label{K inf asymp s1 = 0}
K_{\infty} = - \frac{i}{2}\sqrt{r}\partial_{s_{k}}T_{1,12} = -\frac{i}{2}\Big( \partial_{s_{k}} P_{1,12}^{(\infty)}\sqrt{r} + \partial_{s_{k}}R_{1,12}^{(1)} + \bigO \Big( \frac{\log r}{\sqrt{r}} \Big) \Big) \\ = \frac{1}{2}\partial_{s_{k}}d_{1} \sqrt{r} - \sum_{j=2}^{m} \frac{ \partial_{s_{k}}\big( \beta_{j}^{2}(\widetilde{\Lambda}_{j,1}-\widetilde{\Lambda}_{j,2}+2i) \big)}{4i c_{-x_{j}}\sqrt{x_{j}-x_{1}}}+ \bigO \Big( \frac{\log r}{\sqrt{r}} \Big).
\end{multline}
\paragraph{Asymptotics for $K_{-x_{j}}$ with $j \in \{2,...,m\}$.} By inverting the transformations \eqref{def of S} and \eqref{def of R}, and using the expression for $P^{(-x_{j})}$ given by  \eqref{lol10 s1 = 0}, for $z$ outside the lenses and inside $\mathcal{D}_{-x_{j}}$, we have
\begin{equation}\label{lol11 s1 = 0}
T(z) = R(z)E_{-x_{j}}(z)\Phi_{\mathrm{HG}}(\sqrt{r}f_{-x_{j}}(z);\beta_{j})(s_{j}s_{j+1})^{-\frac{\sigma_{3}}{4}}e^{\frac{\pi i \alpha}{2}\theta(z) \sigma_{3}}e^{-\sqrt{r}g(z)\sigma_{3}}.
\end{equation}
If furthermore $\Im z > 0$, then by \eqref{conformal xj s1 = 0} and \eqref{model RHP HG in different sector} we have
\begin{equation}
\Phi_{\mathrm{HG}}(\sqrt{r}f_{-x_{j}}(z);\beta_{j}) = \widehat{\Phi}_{\mathrm{HG}}(\sqrt{r}f_{-x_{j}}(z);\beta_{j}).
\end{equation}
Note from \eqref{def of beta_j} and the connection formula for the $\Gamma$-function that
\begin{equation}\label{relation Gamma beta_j and s_j s1 = 0}
\frac{\sin (\pi \beta_{j})}{\pi} = \frac{1}{\Gamma(\beta_{j})\Gamma(1-\beta_{j})} = \frac{s_{j+1}-s_{j}}{2\pi i \sqrt{s_{j}s_{j+1}}}.
\end{equation}
Therefore, using \eqref{conformal xj s1 = 0} and \eqref{precise asymptotics of Phi HG near 0}, as $z \to -x_{j}$ from the upper half plane and outside the lenses, we have
\begin{equation}\label{lol 2 s1 = 0}
\Phi_{\mathrm{HG}}(\sqrt{r}f_{-x_{j}}(z);\beta_{j})(s_{j}s_{j+1})^{-\frac{\sigma_{3}}{4}} = \begin{pmatrix}
\Psi_{j,11} & \Psi_{j,12} \\
\Psi_{j,21} & \Psi_{j,22}
\end{pmatrix} (I + \bigO(z+x_{j})) \begin{pmatrix}
1 & \frac{s_{j+1}-s_{j}}{2\pi i}\log(r(z + x_{j})) \\
0 & 1
\end{pmatrix} ,
\end{equation}
where the principal branch is taken for the log and
\begin{align}
& \Psi_{j,11} = \frac{\Gamma(1-\beta_{j})}{(s_{j}s_{j+1})^{\frac{1}{4}}}, \qquad \Psi_{j,12} = \frac{(s_{j}s_{j+1})^{\frac{1}{4}}}{\Gamma(\beta_{j})} \left( \log(c_{-x_{j}}r^{-1/2}) - \frac{i\pi}{2} + \frac{\Gamma^{\prime}(1-\beta_{j})}{\Gamma(1-\beta_{j})}+2\gamma_{\mathrm{E}} \right), \nonumber \\
& \Psi_{j,21} = \frac{\Gamma(1+\beta_{j})}{(s_{j}s_{j+1})^{\frac{1}{4}}}, \qquad \Psi_{j,22} = \frac{-(s_{j}s_{j+1})^{\frac{1}{4}}}{\Gamma(-\beta_{j})} \left( \log(c_{-x_{j}}r^{-1/2}) - \frac{i\pi}{2} + \frac{\Gamma^{\prime}(-\beta_{j})}{\Gamma(-\beta_{j})} + 2\gamma_{\mathrm{E}} \right). \label{Psi j entries s1 = 0}
\end{align}
From \eqref{def of Gj}, \eqref{def of T}, \eqref{lol11 s1 = 0} and \eqref{lol 2 s1 = 0} we have
\begin{equation}\label{lol12 s1 = 0}
G_{j}(-rx_{j};r\vec{x},\vec{s}) = r^{-\frac{\sigma_{3}}{4}}\begin{pmatrix}
1 & 0 \\ - i\frac{x_{1}}{2}\sqrt{r} & 1
\end{pmatrix}R(-x_{j})E_{-x_{j}}(-x_{j})\begin{pmatrix}
\Psi_{j,11} & \Psi_{j,12} \\ \Psi_{j,21} & \Psi_{j,22}
\end{pmatrix}.
\end{equation}
In fact $K_{-x_{j}}$ does not depend on the first two pre-factors in \eqref{lol12 s1 = 0}. Let us define 
\begin{equation}\label{def of Hj s1 = 0}
H_{j} = \begin{pmatrix}
1 & 0 \\ i\frac{x_{1}}{2}\sqrt{r} & 1
\end{pmatrix}r^{\frac{\sigma_{3}}{4}}G_{j}(-rx_{j};r\vec{x},\vec{s}) = R(-x_{j})E_{-x_{j}}(-x_{j})\begin{pmatrix}
\Psi_{j,11} & \Psi_{j,12} \\ \Psi_{j,21} & \Psi_{j,22}
\end{pmatrix}.
\end{equation}
By a straightforward computation, we rewrite \eqref{K -xj} as
\begin{equation}\label{K xj in terms of Hj s1 = 0}
\sum_{j=2}^{m} K_{-x_{j}} = \sum_{j=2}^m\frac{s_{j+1}-s_j}{2\pi i} (H_{j,11} \partial_{s_{k}}H_{j,21}-H_{j,21}\partial_{s_{k}}H_{j,11}).
\end{equation}
Using the connection formula $\Gamma(z)\Gamma(1-z) = \frac{\pi}{\sin \pi z}$, we note that
\begin{equation}\label{Psi_j first column connection formula s1 = 0}
\Psi_{j,11}\Psi_{j,21} = \beta_{j} \frac{2\pi i}{s_{j+1}-s_{j}}, \qquad j =2,...,m.
\end{equation}
Also, from \eqref{E_j at -x_j}, $E_{-x_{j}}$ satisfies \eqref{partial E}. Therefore, using \eqref{eq: asymp R inf}, \eqref{eq: asymp der beta R inf s1 = 0}, $\det E_{-x_{j}}(-x_{j})=1$, \eqref{def of Hj s1 = 0}-\eqref{Psi_j first column connection formula s1 = 0} and \eqref{partial E}, as $r \to + \infty$ we obtain
\begin{multline}\label{K xj part 1 asymp s1 = 0}
\sum_{j=2}^m K_{-x_{j}} = \sum_{j=2}^{m} \frac{s_{j+1}-s_j}{2\pi i} \Big( \Psi_{j,11} \partial_{s_{k}}\Psi_{j,21} - \Psi_{j,21}\partial_{s_{k}}\Psi_{j,11} \Big) - \sum_{j=2}^{m} 2\beta_{j} \partial_{s_{k}} \log \Lambda_{j} \\  + i \partial_{s_{k}}d_{1} \sum_{j=2}^{m}\frac{s_{j+1}-s_j}{2\pi i}(E_{-x_{j},11}(-x_{j})\Psi_{j,11}+E_{-x_{j},12}(-x_{j})\Psi_{j,21})^{2} + \bigO \bigg( \frac{\log r}{\sqrt{r}} \bigg),
\end{multline}
where, by \eqref{E_j at -x_j} and \eqref{Psi j entries s1 = 0}, one has
\begin{multline}\label{K xj part 2 asymp s1 = 0}
i \partial_{s_{k}}d_{1} \sum_{j=2}^{m}\frac{s_{j+1}-s_j}{2\pi i}(E_{-x_{j},11}(-x_{j})\Psi_{j,11}+E_{-x_{j},12}(-x_{j})\Psi_{j,21})^{2} = \sum_{j=2}^{m} \frac{\partial_{s_{k}}d_{1}}{2\sqrt{x_{j}-x_{1}}}\Big( \beta_{j}^{2}(\widetilde{\Lambda}_{j,1}+\widetilde{\Lambda}_{j,2})+2i\beta_{j} \Big).
\end{multline}
\paragraph{Asymptotics for $K_{-x_{1}}$.} Note that we did not use the explicit expression for $R^{(1)}(-x_{j})$ to compute the asymptotics for $K_{-x_{j}}$ up to and including the constant term for $j = 2,...,m$. The computations for $K_{-x_{1}}$ are more involved and require explicitly $R^{(1)}(-x_{1})$ (given by \eqref{expression for R^1 at -x1 for s1 = 0}). We start by evaluating $G_{1}(-rx_{1};r\vec{x},\vec{s})$. For $z$ outside the lenses and inside $\mathcal{D}_{-x_{1}}$, by \eqref{def of S}, \eqref{def of P^-x1} and \eqref{def of R}, we have that
\begin{equation}\label{lol14 s1 = 0}
T(z) =
R(z)E_{-x_{1}}(z)\Phi_{\mathrm{Be}}(rf_{-x_{1}}(z);0)s_{2}^{-\frac{\sigma_{3}}{2}}e^{\frac{\pi i \alpha}{2}\theta(z) \sigma_{3}}e^{-\sqrt{r}g(z)\sigma_{3}}.
\end{equation}
From \eqref{conformal map near -x1 s1 = 0}, \eqref{lol14 s1 = 0} and \eqref{precise asymptotics of Phi Bessel near 0}, as $z \to -x_{1}$ from outside the lenses, we have
\begin{equation}
T(z) = R(z)E_{-x_{1}}(z)\Phi_{\mathrm{Be},0}(rf_{-x_{1}}(z);0)s_{2}^{-\frac{\sigma_{3}}{2}}\begin{pmatrix}
1 & \frac{s_{2}}{2\pi i}\log \frac{r(z+x_{1})}{4} \\ 0 & 1
\end{pmatrix}e^{\frac{\pi i \alpha}{2}\theta(z)\sigma_{3}}e^{-\sqrt{r}g(z)\sigma_{3}}.
\end{equation}
On the other hand, using \eqref{def of Gj} and  \eqref{def of T}, as $z \to -x_{1}$, $\Im z > 0$, we have
\begin{equation}
T(z) = \begin{pmatrix}
1 & 0 \\ i \frac{x_{1}}{2}\sqrt{r} & 1
\end{pmatrix} r^{\frac{\sigma_{3}}{4}}G_{1}(rz;r\vec{x},\vec{s})
\begin{pmatrix}
1 & \ds \frac{s_{2}}{2\pi i} \log (r(z+x_{1})) \\ 0 & 1
\end{pmatrix}e^{\frac{\pi i \alpha}{2}\sigma_{3}}e^{-\sqrt{r}g(z)\sigma_{3}}.
\end{equation}
Therefore, using also \eqref{precise matrix at 0 in asymptotics of Phi Bessel near 0}, we obtain
\begin{equation}\label{lol15 s1 = 0}
G_{1}(-rx_{1};r\vec{x},\vec{s}) = r^{-\frac{\sigma_{3}}{4}}\begin{pmatrix}
1 & 0 \\ -i\frac{x_{1}}{2}\sqrt{r} & 1
\end{pmatrix}R(-x_{1})E_{-x_{1}}(-x_{1})\begin{pmatrix}
\Psi_{1,11} & \Psi_{1,12} \\ \Psi_{1,21} & \Psi_{1,22}
\end{pmatrix},
\end{equation}
where
\begin{align*}
& \Psi_{1,11} = s_{2}^{-1/2}, \qquad \Psi_{1,12} = s_{2}^{1/2} \frac{\gamma_{\mathrm{E}} - \log 2}{\pi i}, \\
& \Psi_{1,21} = 0, \hspace{1.51cm} \Psi_{1,22} = s_{2}^{1/2}.
\end{align*}
In the same way as for $K_{-x_{j}}$ with $j = 2,...,m$, we define
\begin{equation}\label{def of H1 s1 neq 0}
H_{1} = \begin{pmatrix}
1 & 0 \\
i \frac{x_{1}}{2} \sqrt{r} & 1
\end{pmatrix} r^{\frac{\sigma_{3}}{4}}G_{1}(-rx_{1};r\vec{x},\vec{s}) = R(-x_{1})E_{-x_{1}}(-x_{1}) \begin{pmatrix}
\Psi_{1,11} & \Psi_{1,12} \\ \Psi_{1,21} & \Psi_{1,22}
\end{pmatrix},
\end{equation}
and we simplify $K_{-x_{1}}$ (given by \eqref{K -xj}) as follows
\begin{equation}\label{simplification K -x1 in terms of H-x1 s1 = 0}
K_{-x_{1}} = \frac{s_{2}}{2\pi i} (H_{1,11} \partial_{s_{k}}H_{1,21}-H_{1,21}\partial_{s_{k}}H_{1,11}).
\end{equation}
Using \eqref{E -x1 at -x1 s1 = 0}, \eqref{eq: asymp R inf}-\eqref{eq: asymp der beta R inf s1 = 0}, \eqref{lol15 s1 = 0}-\eqref{simplification K -x1 in terms of H-x1 s1 = 0}, and the fact that $R^{(1)}$ is traceless, after a careful calculation we obtain 
\begin{equation}\label{K x1 part 1 asymp s1 = 0}
K_{-x_{1}} = \frac{1}{2} \partial_{s_{k}}d_{1} \sqrt{r} -\frac{1}{2} \Big( d_{1} \partial_{s_{k}} (R_{11}^{(1)}(-x_{1})-R_{22}^{(1)}(-x_{1})) +id_{1}^{2} \partial_{s_{k}} R_{12}^{(1)}(-x_{1})+i \partial_{s_{k}} R_{21}^{(1)}(-x_{1}) \Big) + \bigO\bigg( \frac{\log r}{\sqrt{r}} \bigg)
\end{equation}
as $r \to + \infty$. The subleading term in \eqref{K x1 part 1 asymp s1 = 0} can be computed more explicit using the expression for $R^{(1)}(-x_{1})$ given by \eqref{expression for R^1 at -x1 for s1 = 0}:
\begin{multline}\label{K x1 part 2 asymp s1 = 0}
-\frac{1}{2} \Big( d_{1} \partial_{s_{k}} (R_{11}^{(1)}(-x_{1})-R_{22}^{(1)}(-x_{1})) +id_{1}^{2} \partial_{s_{k}} R_{12}^{(1)}(-x_{1})+i \partial_{s_{k}} R_{21}^{(1)}(-x_{1}) \Big) = \frac{d_{0}\partial_{s_{k}}d_{1}}{2}\\+ \sum_{j=2}^{m} \frac{1}{4 i c_{-x_{j}}\sqrt{x_{j}-x_{1}}}\partial_{s_{k}}\big( \beta_{j}^{2}(\widetilde{\Lambda}_{j,1}-\widetilde{\Lambda}_{j,2}-2i)\big)-\partial_{s_{k}}d_{1}\sum_{j=2}^{m} \frac{\beta_{j}^{2}(\widetilde{\Lambda}_{j,1}+\widetilde{\Lambda}_{j,2})}{2c_{-x_{j}}(x_{j}-x_{1})}.
\end{multline}
\paragraph{Asymptotics for $K_{0}$.} From \eqref{def of S}, \eqref{def of Pinf} and \eqref{def of R}, for $z$ in a neighborhood of $0$, we have
\begin{equation}\label{lol16 s1 = 0}
T(z) = R(z) \begin{pmatrix}
1 & 0 \\ id_{1} & 1
\end{pmatrix}(z+x_{1})^{-\frac{\sigma_{3}}{4}}ND(z)^{-\sigma_{3}}.
\end{equation}
On the other hand, from \eqref{def of G_0} and \eqref{def of T}, as $z \to 0$ we have
\begin{equation}\label{lol17 s1 = 0}
T(z) = \begin{pmatrix}
1 & 0 \\ i \frac{x_{1}}{2}\sqrt{r} & 1
\end{pmatrix}r^{\frac{\sigma_{3}}{4}} G_{0}(rz;r\vec{x},\vec{s})(rz)^{\frac{\alpha}{2}\sigma_{3}}e^{-\sqrt{r}g(z)\sigma_{3}}.
\end{equation}
Therefore, using \eqref{asymp D at 0 s1 = 0} and \eqref{lol16 s1 = 0}-\eqref{lol17 s1 = 0}, we obtain
\begin{equation}\label{lol18 s1 = 0}
G_{0}(0;r\vec{x},\vec{s}) = r^{-\frac{\sigma_{3}}{4}}\begin{pmatrix}
1 & 0 \\ - i\frac{x_{1}}{2}\sqrt{r} & 1
\end{pmatrix} R(0) \begin{pmatrix}
1 & 0 \\ id_{1} & 1
\end{pmatrix} x_{1}^{-\frac{\sigma_{3}}{4}}N \widehat{c}^{-\sigma_{3}},
\end{equation}
for a certain $\widehat{c} \in \mathbb{C}$ whose exact value is unimportant for us. Let us define
\begin{equation}\label{lol19 s1 = 0}
H_{0} = \begin{pmatrix}
1 & 0 \\ i\frac{x_{1}}{2}\sqrt{r} & 1
\end{pmatrix}r^{\frac{\sigma_{3}}{4}}G_{0}(0;r\vec{x},\vec{s})\widehat{c}^{\sigma_{3}} = R(0)\begin{pmatrix}
1 & 0 \\ id_{1} & 1
\end{pmatrix}x_{1}^{-\frac{\sigma_{3}}{4}}N.
\end{equation}
Since $s_{1} = 0$, by \eqref{K 0}, we have $K_{0} = 0$ if $\alpha = 0$. If $\alpha \neq 0$, by \eqref{K 0}, \eqref{eq: asymp R inf}-\eqref{eq: asymp der beta R inf s1 = 0} and \eqref{lol18 s1 = 0}-\eqref{lol19 s1 = 0}, we have
\begin{align}\label{K 0 asymp s1 = 0}
K_{0} & = \alpha \Big( H_{0,21} \partial_{s_{k}}H_{0,12} - H_{0,11} \partial_{s_{k}}H_{0,22} + (H_{0,12}H_{0,21}-H_{0,11}H_{0,22}) \partial_{s_{k}} \log D_{0} \Big) \nonumber \\
& = \frac{\alpha \partial_{s_{k}}d_{1}}{2 \sqrt{x_{1}}} - \alpha \partial_{s_{k}} \log D_{0} + \bigO \bigg( \frac{\log r}{\sqrt{r}} \bigg), \qquad \mbox{ as } r \to + \infty.
\end{align}

\paragraph{Asymptotics for the differential identity  \eqref{DIFF identity final form general case}.} Summing the contribution $K_{0}$, $K_{-x_{j}}$, $j=1,...,m$ and $K_{\infty}$  using \eqref{K inf asymp s1 = 0}, \eqref{K xj part 1 asymp s1 = 0}, \eqref{K xj part 2 asymp s1 = 0}, \eqref{K x1 part 1 asymp s1 = 0}, \eqref{K x1 part 2 asymp s1 = 0} and \eqref{K 0 asymp s1 = 0}, and substituting the expression for $c_{-x_{j}}$ given by \eqref{expansion conformal map}, and the expression for $d_{0}$ given by \eqref{d_ell in terms of beta_j}, after some calculations, we obtain
\begin{multline}\label{lol5}
\partial_{s_k}\log F_{\alpha}(r\vec{x},\vec{s}) = \partial_{s_{k}}d_{1}\sqrt{r} - \sum_{j=2}^{m} \bigg( 2\beta_{j} \partial_{s_{k}} \log \Lambda_{j} + \partial_{s_{k}} (\beta_{j}^{2}) - 2 i \alpha \arccos \bigg( \frac{\sqrt{x_{1}}}{\sqrt{x_{j}}}  \bigg)\partial_{s_{k}}\beta_{j} \bigg)  \\ + \sum_{j=2}^{m}  \frac{s_{j+1}-s_{j}}{2\pi i}\big( \Psi_{j,11}\partial_{s_{k}}\Psi_{j,21}-\Psi_{j,21}\partial_{s_{k}}\Psi_{j,11} \big) + \bigO\bigg( \frac{\log r}{\sqrt{r}} \bigg),
\end{multline}
as $r \to + \infty$. Using the explicit expressions for $\Psi_{j,11}$ and $\Psi_{j,21}$ (see \eqref{Psi j entries s1 = 0}) together with the relation \eqref{Psi_j first column connection formula s1 = 0}, we have
\begin{equation}\label{lol3}
\sum_{j=2}^{m} \frac{s_{j+1}-s_{j}}{2\pi i}\big( \Psi_{j,11}\partial_{s_{k}}\Psi_{j,21}-\Psi_{j,21}\partial_{s_{k}}\Psi_{j,11} \big) = \sum_{j=2}^{m}\beta_{j} \partial_{s_{k}} \log \frac{\Gamma(1+\beta_{j})}{\Gamma(1-\beta_{j})}.
\end{equation}
Also, using \eqref{def Lambda_j}, we have
\begin{equation}\label{lol4}
\sum_{j=2}^{m} - 2\beta_{j} \partial_{s_{k}} \log \Lambda_{j} = -2 \sum_{j=2}^{m} \beta_{j} \partial_{s_{k}}(\beta_{j}) \log \big(4\sqrt{r(x_{\smash{j}}-x_{1})}\big) -2\sum_{j=2}^{m} \beta_{j} \sum_{\substack{\ell = 2 \\ \ell \neq j}}^{m} \partial_{s_{k}}(\beta_{\ell})\log(\widetilde{T}_{\ell,j}).
\end{equation}
It will more convenient to integrate with respect to $\beta_{2},...,\beta_{m}$ instead of $s_{2},...,s_{m}$. Therefore, we define
\begin{equation}\label{def of F tilde s1 = 0}
\widetilde{F}_{\alpha}(r \vec{x}, \vec{\beta}) = F_{\alpha}(r \vec{x},\vec{s}),
\end{equation}
where $\vec{\beta} = (\beta_{2},...,\beta_{m})$ and $\vec{s} = (s_{2},...,s_{m})$ are related via the relations \eqref{def of beta_j}.
By substituting \eqref{lol3} and \eqref{lol4} into \eqref{lol5}, and by writing the derivative with respect to $\beta_{k}$ instead of $s_{k}$, we obtain
\begin{multline}\label{lol5 part 2}
\partial_{\beta_k}\log \widetilde{F}_{\alpha}(r\vec{x},\vec{\beta}) = \partial_{\beta_{k}}d_{1} \sqrt{r} -2 \sum_{j=2}^{m} \beta_{j} \partial_{\beta_{k}}(\beta_{j}) \log \big(4\sqrt{r(x_{\smash{j}}-x_{1})}\big)+ \sum_{j=2}^{m} 2 i \alpha \arccos \bigg( \frac{\sqrt{x_{1}}}{\sqrt{x_{j}}}  \bigg)\partial_{\beta_{k}}(\beta_{j}) \\ -2\sum_{j=2}^{m} \beta_{j} \sum_{\substack{\ell = 2 \\ \ell \neq j}}^{m} \partial_{\beta_{k}}(\beta_{\ell})\log(\widetilde{T}_{\ell,j}) - \sum_{j=2}^{m} \partial_{\beta_{k}}(\beta_{j}^{2}) + \sum_{j=2}^{m} \beta_{j} \partial_{\beta_{k}} \log \frac{\Gamma(1+\beta_{j})}{\Gamma(1-\beta_{j})} + \bigO\Big(\frac{\log r}{\sqrt{r}}\Big),
\end{multline}
as $r \to + \infty$. Using the value of $d_{1}$ in \eqref{d_ell in terms of beta_j} and the value of $c_{-x_{j}}$ in \eqref{expansion conformal map}, the above asymptotics can be rewritten more explicitly as follows
\begin{multline}\label{DIFF IDENTITY}
\partial_{\beta_k}\log \widetilde{F}_{\alpha}(r\vec{x},\vec{\beta}) = -2i \sqrt{r(x_{k}-x_{1})} -2 \beta_{k} \log\big(4\sqrt{r(x_{k}-x_{1})}\big) + 2 i \alpha \arccos \bigg( \frac{\sqrt{x_{1}}}{\sqrt{x_{k}}}  \bigg) \\ - 2 \sum_{\substack{j=2 \\ j \neq k}}^{m}\beta_{j} \log(\widetilde{T}_{k,j}) - 2\beta_{k} + \beta_{k} \partial_{\beta_{k}} \log \frac{\Gamma(1+\beta_{k})}{\Gamma(1-\beta_{k})} + \bigO\Big( \frac{\log r}{\sqrt{r}} \Big).
\end{multline}

\subsection{Integration of the differential identity}
By the steepest descent of Section \ref{Section: Steepest descent with s1=0} (see in particular the discussion in Section \ref{subsection Small norm s1 = 0}), the asymptotics \eqref{DIFF IDENTITY} are valid uniformly for $\beta_{2},...,\beta_{m}$ in compact subsets of $i \mathbb{R}$. First, we use \eqref{DIFF IDENTITY} with $\beta_{3} = 0 = \beta_{4} = ... = \beta_{m}$, and we integrate in $\beta_{2}$ from $\beta_{2} = 0$ to an arbitrary $\beta_{2} \in i \mathbb{R}$. Let us use the notations $\vec{\beta}_{2} = (\beta_{2},0,...,0)$ and $\vec{0} = (0,0,...,0)$. After integration (using \eqref{integral of Gamma with Barnes}), we obtain
\begin{multline}
\log \frac{\widetilde{F}_{\alpha}(r\vec{x},\vec{\beta}_{2})}{\widetilde{F}_{\alpha}(r\vec{x},\vec{0})} = -2i \beta_{2} \sqrt{r(x_{2}-x_{1})} - \beta_{2}^{2} \log\big(4\sqrt{r(x_{2}-x_{1})}\big) + 2 i \alpha \beta_{2} \arccos \bigg( \frac{\sqrt{x_{1}}}{\sqrt{x_{2}}}  \bigg) \\ +\log(G(1+\beta_{2})G(1-\beta_{2})) + \bigO \bigg( \frac{\log r}{\sqrt{r}} \bigg),
\end{multline}
as $r \to + \infty$. Now, we use \eqref{DIFF IDENTITY} with $\beta_{4}=...=\beta_{m} = 0$, $\beta_{2}$ fixed but not necessarily $0$, and we integrate in $\beta_{3}$. With the notation $\vec{\beta}_{3} = (\beta_{2},\beta_{3},0,...,0)$, as $r \to + \infty$ we obtain
\begin{multline}
\log \frac{\widetilde{F}_{\alpha}(r\vec{x},\vec{\beta}_{3})}{\widetilde{F}_{\alpha}(r\vec{x},\vec{\beta}_{2})} = -2i \beta_{3} \sqrt{r(x_{3}-x_{1})} - \beta_{3}^{2} \log\big(4\sqrt{r(x_{3}-x_{1})}\big) + 2 i \alpha \beta_{3} \arccos \bigg( \frac{\sqrt{x_{1}}}{\sqrt{x_{3}}}  \bigg) \\ - 2 \beta_{2}\beta_{3} \log(\widetilde{T}_{3,2}) +\log(G(1+\beta_{3})G(1-\beta_{3})) + \bigO \bigg( \frac{\log r}{\sqrt{r}} \bigg).
\end{multline}
By integrating successively in $\beta_{4},...,\beta_{m}$, and then by summing the expressions, we obtain
\begin{multline}
\log \frac{\widetilde{F}_{\alpha}(r\vec{x},\vec{\beta})}{\widetilde{F}_{\alpha}(r\vec{x},\vec{0})} = - \sum_{j=2}^{m} 2i \beta_{j} \sqrt{{r(x_{\smash{j}}-x_{1})}} - \sum_{j=2}^{m} \beta_{j}^{2} \log \big(4\sqrt{{r(x_{\smash{j}}-x_{1})}}\big) + \sum_{j=2}^{m} 2 i \alpha \beta_{j} \arccos \bigg( \frac{\sqrt{x_{1}}}{\sqrt{x_{j}}}  \bigg) \\ - 2 \sum_{2 \leq j < k \leq m} \beta_{j}\beta_{k} \log(\widetilde{T}_{j,k}) + \sum_{j=2}^{m} \log (G(1+\beta_{j})G(1-\beta_{j})) + \bigO \bigg( \frac{\log r}{\sqrt{r}} \bigg),
\end{multline}
as $r \to + \infty$. By adding the above asymptotics to \eqref{asymp m=1 with s1 = 0}, this finishes the proof of Theorem \ref{thm:s1=0} (after identifying $u_{j}=-2\pi i \beta_{j}$).

\subsection{Heuristic discussion of the asymptotics as $r \to + \infty$ and $s_{1} \to 0$}\label{subsection: heuristic discussion}
Uniform asymptotics for $F_{\alpha}(r\vec{x},\vec{s})$ as $r\to + \infty$ and simultaneously $s_{1} \to 0$ should describe the transition between the asymptotics of Theorems \ref{thm:s1 neq 0} and \ref{thm:s1=0}. As can be seen from \eqref{thm product s1 neq 0}, \eqref{def of Ec} and \eqref{F asymptotics thm s1 = 0} (and also from \eqref{asymp m=1 with s1 = 0} and \eqref{asymp m=1 with s1 > 0}), the leading term of $\log F_{\alpha}(r\vec{x},\vec{s})$ is of order $\bigO(\sqrt{r})$ if $s_{1}$ is bounded away from $0$, while it is of order $\bigO(r)$ if $s_{1} = 0$. On the level of the RH analysis, this indicates that a new $g$-function is needed (which should interpolate between \eqref{def of g s1 neq 0} and \eqref{def of g s1=0}). A similar transition has been studied for the sine point process in the series of papers \cite{BotDeiItsKra1, BDIK2017,BDIK2019}, and (with less depth) for the Circular Unitary Ensemble in \cite{ChCl,ChCl2}. In these works, the global parametrix $P^{(\infty)}$ is constructed in terms of elliptic $\theta$-functions and the asymptotics that describe the transition are oscillatory. By analogy, we expect that asymptotics for $F_{\alpha}(r\vec{x},\vec{s})$ as $r\to + \infty$ and simultaneously $s_{1} \to 0$ will also be described in terms of elliptic $\theta$-functions. A rigorous analysis is however very delicate, and is not addressed in this paper.

\section{Appendix}\label{Section:Appendix}
In this section, we recall two well-known RH problems: 1) the Bessel model RH problem, which depends on a parameter $\alpha > -1$ and whose solution is denoted by $\Phi_{\mathrm{Be}}(\cdot) = \Phi_{\mathrm{Be}}(\cdot;\alpha)$, and 2) the confluent hypergeometric model RH problem, which depends on a parameter $\beta \in i \mathbb{R}$ and whose solution is denoted by $\Phi_{\mathrm{HG}}(\cdot)=\Phi_{\mathrm{HG}}(\cdot;\beta)$.
\subsection{Bessel model RH problem}\label{subsection:Model Bessel}
\begin{itemize}
\item[(a)] $\Phi_{\mathrm{Be}} : \mathbb{C} \setminus \Sigma_{\mathrm{Be}} \to \mathbb{C}^{2\times 2}$ is analytic, where
$\Sigma_{\mathrm{Be}}$ is shown in Figure \ref{figBessel}.
\item[(b)] $\Phi_{\mathrm{Be}}$ satisfies the jump conditions
\begin{equation}\label{Jump for P_Be}
\begin{array}{l l} 
\Phi_{\mathrm{Be},+}(z) = \Phi_{\mathrm{Be},-}(z) \begin{pmatrix}
0 & 1 \\ -1 & 0
\end{pmatrix}, & z \in \mathbb{R}^{-}, \\

\Phi_{\mathrm{Be},+}(z) = \Phi_{\mathrm{Be},-}(z) \begin{pmatrix}
1 & 0 \\ e^{\pi i \alpha} & 1
\end{pmatrix}, & z \in e^{ \frac{2\pi i}{3} }  \mathbb{R}^{+}, \\

\Phi_{\mathrm{Be},+}(z) = \Phi_{\mathrm{Be},-}(z) \begin{pmatrix}
1 & 0 \\ e^{-\pi i \alpha} & 1
\end{pmatrix}, & z \in e^{ -\frac{2\pi i}{3} }  \mathbb{R}^{+}. \\
\end{array}
\end{equation}
\item[(c)] As $z \to \infty$, $z \notin \Sigma_{\mathrm{Be}}$, we have
\begin{equation}\label{large z asymptotics Bessel}
\Phi_{\mathrm{Be}}(z) = ( 2\pi z^{\frac{1}{2}} )^{-\frac{\sigma_{3}}{2}}N
\left(I+\frac{ \Phi_{\mathrm{Be},1}(\alpha)}{z^{\frac{1}{2}}} + \bigO(z^{-1})\right) e^{2 z^{\frac{1}{2}}\sigma_{3}},
\end{equation}
where $\ds \Phi_{\mathrm{Be},1}(\alpha) = \frac{1}{16}\begin{pmatrix}
-(1+4\alpha^{2}) & -2i \\ -2i & 1+4\alpha^{2}
\end{pmatrix}$.
\item[(d)] As $z$ tends to 0, the behavior of $\Phi_{\mathrm{Be}}(z)$ is
\begin{equation}\label{local behaviour near 0 of P_Be}
\begin{array}{l l}
\displaystyle \Phi_{\mathrm{Be}}(z) = \left\{ \begin{array}{l l}
\begin{pmatrix}
\bigO(1) & \bigO(\log z) \\
\bigO(1) & \bigO(\log z) 
\end{pmatrix}, & |\arg z| < \frac{2\pi}{3}, \\
\begin{pmatrix}
\bigO(\log z) & \bigO(\log z) \\
\bigO(\log z) & \bigO(\log z) 
\end{pmatrix}, & \frac{2\pi}{3}< |\arg z| < \pi,
\end{array}  \right., & \displaystyle \mbox{ if } \alpha = 0, \\[0.8cm]
\displaystyle \Phi_{\mathrm{Be}}(z) = \left\{ \begin{array}{l l}
\begin{pmatrix}
\bigO(1) & \bigO(1) \\
\bigO(1) & \bigO(1) 
\end{pmatrix}z^{\frac{\alpha}{2}\sigma_{3}}, & |\arg z | < \frac{2\pi}{3}, \\
\begin{pmatrix}
\bigO(z^{-\frac{\alpha}{2}}) & \bigO(z^{-\frac{\alpha}{2}}) \\
\bigO(z^{-\frac{\alpha}{2}}) & \bigO(z^{-\frac{\alpha}{2}}) 
\end{pmatrix}, & \frac{2\pi}{3}<|\arg z | < \pi,
\end{array} \right. , & \displaystyle \mbox{ if } \alpha > 0, \\[0.8cm]
\displaystyle \Phi_{\mathrm{Be}}(z) = \begin{pmatrix}
\bigO(z^{\frac{\alpha}{2}}) & \bigO(z^{\frac{\alpha}{2}}) \\
\bigO(z^{\frac{\alpha}{2}}) & \bigO(z^{\frac{\alpha}{2}}) 
\end{pmatrix}, & \displaystyle \mbox{ if } \alpha < 0.
\end{array}
\end{equation}
\end{itemize}
\begin{figure}[t]
    \begin{center}
    \setlength{\unitlength}{1truemm}
    \begin{picture}(100,55)(-5,10)
        \put(50,40){\line(-1,0){30}}
        \put(50,39.8){\thicklines\circle*{1.2}}
        \put(50,40){\line(-0.5,0.866){15}}
        \put(50,40){\line(-0.5,-0.866){15}}
        \put(50.3,36.8){$0$}
        \put(35,39.9){\thicklines\vector(1,0){.0001}}
        \put(41,55.588){\thicklines\vector(0.5,-0.866){.0001}}
        \put(41,24.412){\thicklines\vector(0.5,0.866){.0001}}
    \end{picture}
    \caption{\label{figBessel}The jump contour $\Sigma_{\mathrm{Be}}$ for $\Phi_{\mathrm{Be}}$.}
\end{center}
\end{figure}
This RH problem was introduced and solved in \cite{KMcLVAV}. Its unique solution is given by 
\begin{equation}\label{Psi explicit}
\Phi_{\mathrm{Be}}(z)=
\begin{cases}
\begin{pmatrix}
I_{\alpha}(2 z^{\frac{1}{2}}) & \frac{ i}{\pi} K_{\alpha}(2 z^{\frac{1}{2}}) \\
2\pi i z^{\frac{1}{2}} I_{\alpha}^{\prime}(2 z^{\frac{1}{2}}) & -2 z^{\frac{1}{2}} K_{\alpha}^{\prime}(2 z^{\frac{1}{2}})
\end{pmatrix}, & |\arg z | < \frac{2\pi}{3}, \\

\begin{pmatrix}
\frac{1}{2} H_{\alpha}^{(1)}(2(-z)^{\frac{1}{2}}) & \frac{1}{2} H_{\alpha}^{(2)}(2(-z)^{\frac{1}{2}}) \\
\pi z^{\frac{1}{2}} \left( H_{\alpha}^{(1)} \right)^{\prime} (2(-z)^{\frac{1}{2}}) & \pi z^{\frac{1}{2}} \left( H_{\alpha}^{(2)} \right)^{\prime} (2(-z)^{\frac{1}{2}})
\end{pmatrix}e^{\frac{\pi i \alpha}{2}\sigma_{3}}, & \frac{2\pi}{3} < \arg z < \pi, \\

\begin{pmatrix}
\frac{1}{2} H_{\alpha}^{(2)}(2(-z)^{\frac{1}{2}}) & -\frac{1}{2} H_{\alpha}^{(1)}(2(-z)^{\frac{1}{2}}) \\
-\pi z^{\frac{1}{2}} \left( H_{\alpha}^{(2)} \right)^{\prime} (2(-z)^{\frac{1}{2}}) & \pi z^{\frac{1}{2}} \left( H_{\alpha}^{(1)} \right)^{\prime} (2(-z)^{\frac{1}{2}})
\end{pmatrix}e^{-\frac{\pi i \alpha}{2}\sigma_{3}}, & -\pi < \arg z < -\frac{2\pi}{3},
\end{cases}
\end{equation}
where $H_{\alpha}^{(1)}$ and $H_{\alpha}^{(2)}$ are the Hankel functions of the first and second kind, and $I_\alpha$ and $K_\alpha$ are the modified Bessel functions of the first and second kind.

\vspace{0.2cm}\hspace{-0.55cm}A direct analysis of the RH problem for $\Phi_{\mathrm{Be}}$ shows that in a neighborhood of $z$ we have
\begin{equation}\label{precise asymptotics of Phi Bessel near 0}
\Phi_{\mathrm{Be}}(z;\alpha) = \Phi_{\mathrm{Be},0}(z;\alpha)z^{\frac{\alpha}{2}\sigma_{3}}\begin{pmatrix}
1 & h(z) \\ 0 & 1
\end{pmatrix} H_{0}(z),
\end{equation}
where $H_{0}$ is given by \eqref{def of H}, $h$ by \eqref{def of h}, and $\Phi_{\mathrm{Be},0}$ is analytic in a neighborhood of $0$. After some computation using asymptotics of Bessel functions near the origin (see \cite[Chapter 10.30(i)]{NIST}), we obtain
\begin{equation}\label{precise matrix at 0 in asymptotics of Phi Bessel near 0}
\Phi_{\mathrm{Be},0}(0;\alpha) = \left\{ 
\begin{array}{l l}
\begin{pmatrix}
\frac{1}{\Gamma(1+\alpha)} & \frac{i \Gamma(\alpha)}{2\pi} \\
\frac{i \pi}{\Gamma(\alpha)} & \frac{\Gamma(1+\alpha)}{2}
\end{pmatrix}, & \mbox{if } \alpha \neq 0, \\[0.5cm]
\begin{pmatrix}
1 & \frac{\gamma_{\mathrm{E}}}{\pi i} \\ 0 & 1
\end{pmatrix}, & \mbox{if } \alpha = 0,
\end{array} \right.
\end{equation}
where $\gamma_{\mathrm{E}}$ is Euler's gamma constant.
\subsection{Confluent hypergeometric model RH problem}\label{subsection: model RHP with HG functions}
\begin{itemize}
\item[(a)] $\Phi_{\mathrm{HG}} : \mathbb{C} \setminus \Sigma_{\mathrm{HG}} \rightarrow \mathbb{C}^{2 \times 2}$ is analytic, where $\Sigma_{\mathrm{HG}}$ is shown in Figure \ref{Fig:HG}.
\item[(b)] For $z \in \Gamma_{k}$ (see Figure \ref{Fig:HG}), $k = 1,...,6$, $\Phi_{\mathrm{HG}}$ has the jump relations
\begin{equation}\label{jumps PHG3}
\Phi_{\mathrm{HG},+}(z) = \Phi_{\mathrm{HG},-}(z)J_{k},
\end{equation}
where
\begin{align*}
& J_{1} = \begin{pmatrix}
0 & e^{-i\pi \beta} \\ -e^{i\pi\beta} & 0
\end{pmatrix}, \quad J_{4} = \begin{pmatrix}
0 & e^{i\pi\beta} \\ -e^{-i\pi\beta} & 0
\end{pmatrix}, \\
& J_{2} = \begin{pmatrix}
1 & 0 \\ e^{i\pi\beta} & 1
\end{pmatrix}\hspace{-0.1cm}, \hspace{-0.3cm} \quad J_{3} = \begin{pmatrix}
1 & 0 \\ e^{-i\pi\beta} & 1
\end{pmatrix}\hspace{-0.1cm}, \hspace{-0.3cm} \quad J_{5} = \begin{pmatrix}
1 & 0 \\ e^{-i\pi\beta} & 1
\end{pmatrix}\hspace{-0.1cm}, \hspace{-0.3cm} \quad J_{6} = \begin{pmatrix}
1 & 0 \\ e^{i\pi\beta} & 1
\end{pmatrix}.
\end{align*}
\item[(c)] As $z \to \infty$, $z \notin \Sigma_{\mathrm{HG}}$, we have
\begin{equation}\label{Asymptotics HG}
\Phi_{\mathrm{HG}}(z) = \left( I +  \frac{\Phi_{\mathrm{HG},1}(\beta)}{z} + \bigO(z^{-2}) \right) z^{-\beta\sigma_{3}}e^{-\frac{z}{2}\sigma_{3}}\left\{ \begin{array}{l l}
\displaystyle e^{i\pi\beta  \sigma_{3}}, & \displaystyle \frac{\pi}{2} < \arg z <  \frac{3\pi}{2}, \\
\begin{pmatrix}
0 & -1 \\ 1 & 0
\end{pmatrix}, & \displaystyle -\frac{\pi}{2} < \arg z < \frac{\pi}{2},
\end{array} \right.
\end{equation}
where 
\begin{equation}\label{def of tau}
\Phi_{\mathrm{HG},1}(\beta) = \beta^{2} \begin{pmatrix}
-1 & \tau(\beta) \\ - \tau(-\beta) & 1
\end{pmatrix}, \qquad \tau(\beta) = \frac{- \Gamma\left( -\beta \right)}{\Gamma\left( \beta + 1 \right)}.
\end{equation}
In \eqref{Asymptotics HG}, the root is defined by $z^{\beta} = |z|^{\beta}e^{i\beta \arg z}$ with $\arg z \in (-\frac{\pi}{2},\frac{3\pi}{2})$.

As $z \to 0$, we have
\begin{equation}\label{lol 35}
\Phi_{\mathrm{HG}}(z) = \left\{ \begin{array}{l l}
\begin{pmatrix}
\bigO(1) & \bigO(\log z) \\
\bigO(1) & \bigO(\log z)
\end{pmatrix}, & \mbox{if } z \in II \cup V, \\
\begin{pmatrix}
\bigO(\log z) & \bigO(\log z) \\
\bigO(\log z) & \bigO(\log z)
\end{pmatrix}, & \mbox{if } z \in I\cup III \cup IV \cup VI.
\end{array} \right.
\end{equation}
\end{itemize}
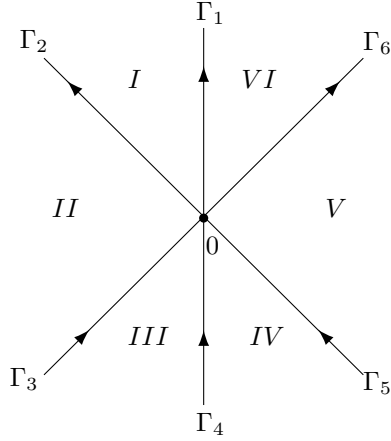
\begin{figure}[t!]
    \begin{center}
    \setlength{\unitlength}{1truemm}
    \begin{picture}(100,55)(-5,10)
             
        \put(50,39.8){\thicklines\circle*{1.2}}
        \put(50,40){\line(-0.5,0.5){21}}
        \put(50,40){\line(-0.5,-0.5){21}}
        \put(50,40){\line(0.5,0.5){21}}
        \put(50,40){\line(0.5,-0.5){21}}
        \put(50,40){\line(0,1){25}}
        \put(50,40){\line(0,-1){25}}
        
        \put(50.3,35){$0$}
        \put(71,62){$\Gamma_6$}        
        \put(49,66){$\Gamma_1$}        
        \put(25.8,62.3){$\Gamma_2$}        
        \put(24.5,17.5){$\Gamma_3$}
        \put(49,11.5){$\Gamma_4$}
        \put(71,17){$\Gamma_5$}        
        
        \put(32,58){\thicklines\vector(-0.5,0.5){.0001}}
        \put(35,25){\thicklines\vector(0.5,0.5){.0001}}
        \put(68,58){\thicklines\vector(0.5,0.5){.0001}}
        \put(65,25){\thicklines\vector(-0.5,0.5){.0001}}
        \put(50,60){\thicklines\vector(0,1){.0001}}
        \put(50,25){\thicklines\vector(0,1){.0001}}
        \put(40,57){$I$}
        \put(30,40){$II$}
        \put(40,23){$III$}
        \put(56,23){$IV$}
        \put(66,40){$V$}
        \put(55,57){$VI$}
    \end{picture}
    \caption{\label{Fig:HG}The jump contour $\Sigma_{\mathrm{HG}}$ for $\Phi_{\mathrm{HG}}$. The ray $\Gamma_{k}$ is oriented from $0$ to $\infty$, and forms an angle with $\mathbb{R}^{+}$ which is a multiple of $\frac{\pi}{4}$.}
\end{center}
\end{figure}
This model RH problem was first introduced and solved explicitly in \cite{ItsKrasovsky}. Consider the matrix
\begin{equation}\label{phi_HG}
\widehat{\Phi}_{\mathrm{HG}}(z) = \begin{pmatrix}
\Gamma(1 -\beta)G(\beta; z) & -\frac{\Gamma(1 -\beta)}{\Gamma(\beta)}H(1-\beta;ze^{-i\pi }) \\
\Gamma(1 +\beta)G(1+\beta;z) & H(-\beta;ze^{-i\pi })
\end{pmatrix},
\end{equation}
where $G$ and $H$ are related to the Whittaker functions:
\begin{equation}\label{relation between G and H and Whittaker}
G(a;z) = \frac{M_{\kappa,\mu}(z)}{\sqrt{z}}, \quad H(a;z) = \frac{W_{\kappa,\mu}(z)}{\sqrt{z}}, \quad \mu = 0, \quad \kappa = \frac{1}{2}-a.
\end{equation}
The solution $\Phi_{\mathrm{HG}}$ is given by
\begin{equation}\label{model RHP HG in different sector}
\Phi_{\mathrm{HG}}(z) = \left\{ \begin{array}{l l}
\widehat{\Phi}_{\mathrm{HG}}(z)J_{2}^{-1}, & \mbox{ for } z \in I, \\
\widehat{\Phi}_{\mathrm{HG}}(z), & \mbox{ for } z \in II, \\
\widehat{\Phi}_{\mathrm{HG}}(z)J_{3}^{-1}, & \mbox{ for } z \in III, \\
\widehat{\Phi}_{\mathrm{HG}}(z)J_{2}^{-1}J_{1}^{-1}J_{6}^{-1}J_{5}, & \mbox{ for } z \in IV, \\
\widehat{\Phi}_{\mathrm{HG}}(z)J_{2}^{-1}J_{1}^{-1}J_{6}^{-1}, & \mbox{ for } z \in V, \\
\widehat{\Phi}_{\mathrm{HG}}(z)J_{2}^{-1}J_{1}^{-1}, & \mbox{ for } z \in VI. \\
\end{array} \right.
\end{equation}
We need in the present paper a better knowledge than \eqref{lol 35}. From \cite[Section 13.14 (iii)]{NIST}, as $z \to 0$ we have
\begin{align*}
& G(\beta;z) = 1+\bigO(z), \qquad G(1+\beta;z) = 1+ \bigO(z), \\
& H(1-\beta;z) = \frac{-1}{\Gamma(1-\beta)} \left( \log z + \frac{\Gamma^{\prime}(1-\beta)}{\Gamma(1-\beta)} + 2\gamma_{\mathrm{E}} \right) + \bigO(z \log z), \\
& H(-\beta;z) = \frac{-1}{\Gamma(-\beta)} \left( \log z + \frac{\Gamma^{\prime}(-\beta)}{\Gamma(-\beta)} + 2\gamma_{\mathrm{E}} \right) + \bigO(z \log z),
\end{align*}
where $\gamma_{\mathrm{E}}$ is Euler's gamma constant. Using the connection formula $\Gamma(z)\Gamma(1-z) = \frac{\pi}{\sin (\pi z)} = -\Gamma(-z)\Gamma(1+z)$, as $z \to 0$, $z \in II$, we have
\begin{equation}\label{precise asymptotics of Phi HG near 0}
\widehat{\Phi}_{\mathrm{HG}}(z) = \begin{pmatrix}
\Psi_{11} & \Psi_{12} \\ \Psi_{21} & \Psi_{22}
\end{pmatrix} (I + \bigO(z)) \begin{pmatrix}
1 & \frac{\sin (\pi \beta)}{\pi} \log z \\
0 & 1
\end{pmatrix},
\end{equation}
where in the above expression
\begin{equation}
\log z = \log |z| + i \arg z, \qquad \arg z \in \Big(-\frac{\pi}{2},\frac{3\pi}{2}\Big),
\end{equation}
and
\begin{align*}
& \Psi_{11} = \Gamma(1-\beta), \qquad \Psi_{12} = \frac{1}{\Gamma(\beta)} \left( \frac{\Gamma^{\prime}(1-\beta)}{\Gamma(1-\beta)}+2\gamma_{\mathrm{E}} - i \pi \right), \\
& \Psi_{21} = \Gamma(1+\beta), \qquad \Psi_{22} = \frac{-1}{\Gamma(-\beta)} \left( \frac{\Gamma^{\prime}(-\beta)}{\Gamma(-\beta)} + 2\gamma_{\mathrm{E}} - i \pi \right).
\end{align*}

\section*{Acknowledgements}
The author is grateful to the three anonymous referees for their careful reading, and for their suggestions for improvement. This work was supported by the Swedish Research Council, Grant No. 2015-05430.

\end{document}